\DeclareMathOperator{\subseq}{\preceq}
\DeclareMathOperator{\nsubseq}{\npreceq}
\DeclareMathOperator{\DFA}{DFA}
\DeclareMathOperator{\states}{\mathsf{states}}
\DeclareMathOperator{\bigO}{O}
\DeclareMathOperator{\smallO}{o}
\DeclareMathOperator{\npclass}{NP}
\DeclareMathOperator{\pclass}{P}
\DeclareMathOperator{\fptclass}{FPT}
\DeclareMathOperator{\wclass}{W}
\DeclareMathOperator{\emptyword}{\varepsilon}
\DeclareMathOperator{\REG}{REG}
\DeclareMathOperator{\poly}{poly}
\DeclareMathOperator{\nz}{\mathsf{nz}}
\DeclareMathOperator{\code}{\mathsf{code}}
\DeclareMathOperator{\codeSketch}{\mathsf{C}}
\DeclareMathOperator{\checkP}{\mathsf{check}}
\DeclareMathOperator{\size}{\mathsf{size}}
\DeclareMathOperator{\levelAncestor}{levelAncestor}
\DeclareMathOperator{\depth}{depth}
\DeclareMathOperator{\LA}{LA}
\DeclareMathOperator{\PatClass}{\mathcal{P}}
\newcommand{\simpleLSConstAlph}[1]{\PatClass^{k}_{\mathsf{s}, \mathsf{LC}}}
\DeclareMathOperator{\OV}{\textsf{OV}}
\DeclareMathOperator{\ETH}{\textsf{ETH}}
\DeclareMathOperator{\SETH}{\textsf{SETH}}
\DeclareMathOperator{\OVH}{\textsf{OVH}}
\newcommand{\lang}[1]{\mathcal{L}(#1)}
\newcommand{\powerset}[1]{\mathcal{P}(#1)}
\newcommand{\alphabet}[1]{\textsf{alph}(#1)}
\newcommand{\gaptuple}{gc}
\newcommand{\gap}[3]{\mathsf{gap}_{#2}(#1, #3)}
\newcommand{\lowerBoundShort}[1]{L^{-}(#1)}
\newcommand{\upperBoundShort}[1]{L^{+}(#1)}
\newcommand{\subseqSet}[2]{SubSeq(#1, #2)}
\newcommand{\parikhK}[2]{\Psi_{#2}(#1)}
\newcommand{\len}[1]{\lvert #1 \rvert}
\newcommand{\myquad}[1][1]{\hspace*{#1em}\ignorespaces}
\newcommand{\matchProb}{\textsc{Match}}
\newcommand{\nuniProb}{\textsc{NUni}}
\newcommand{\ncontProb}{\textsc{NCon}}
\newcommand{\nequiProb}{\textsc{NEqu}}
\newcommand{\uniProb}{\textsc{Uni}}
\newcommand{\contProb}{\textsc{Con}}
\newcommand{\equiProb}{\textsc{Equ}}
\newcommand{\metaNuniProb}{\textsc{MetaNUni}}
\newcommand{\distinctgcsubseq}[3]{\len{#1}_{#2,#3}}
\DeclareMathOperator{\ta}{\mathtt{a}}
\DeclareMathOperator{\tb}{\mathtt{b}}
\DeclareMathOperator{\tc}{\mathtt{c}}
\DeclareMathOperator{\td}{\mathtt{d}}
\DeclareMathOperator{\te}{\mathtt{e}}
\newcommand{\subsequence}[2]{\mathsf{subseq}_{#2}(#1)}
\DeclareMathOperator{\SatProb}{\textsc{CNF-Sat}}
\DeclareMathOperator{\kISProb}{k-\textsc{Independent Set}}
\newcommand{\gapName}{\mathsf{gap}}
\newtheorem{lemma}{Lemma}[section]
\newtheorem{theorem}[lemma]{Theorem}
\newtheorem{proposition}[lemma]{Proposition}
\newtheorem{corollary}[lemma]{Corollary}
\theoremstyle{definition}
\newtheorem{remark}[lemma]{Remark}
\begin{document}

\title{Subsequences With Gap Constraints: Complexity Bounds for Matching and Analysis Problems}

\author[1]{Joel D. Day}
\author[2]{Maria Kosche}
\author[2]{Florin Manea}
\author[3]{Markus L.\ Schmid}

\affil[1]{Loughborough University, UK, \texttt{J.Day@lboro.ac.uk}}
\affil[2]{Computer Science Department, Universität Göttingen, Germany, \texttt{maria.kosche@cs.uni-goettingen.de}, \texttt{florin.manea@cs.informatik.uni-goettingen.de}}
\affil[3]{Humboldt-Universit\"at zu Berlin, Germany, \texttt{MLSchmid@MLSchmid.de}}

\date{}

\maketitle

\begin{abstract}
We consider subsequences with gap constraints, i.\,e., length-$k$ subsequences $p$ that can be embedded into a string $w$ such that the induced gaps (i.\,e., the factors of $w$ between the positions to which $p$ is mapped to) satisfy given gap constraints $\gaptuple = (C_1, C_2, \ldots, C_{k-1})$; we call $p$ a $\gaptuple$-subsequence of~$w$. In the case where the gap constraints $\gaptuple$ are defined by lower and upper length bounds $C_i = (L^-_i, L^+_i) \in \mathbb{N}^2$ and/or regular languages $C_i \in \REG$, we prove tight (conditional on the orthogonal vectors (OV) hypothesis) complexity bounds for checking whether a given $p$ is a $\gaptuple$-subsequence of a string $w$. We also consider the whole set of all $\gaptuple$-subsequences of a string, and investigate the complexity of the universality, equivalence and containment problems for these sets of $\gaptuple$-subsequences. 
\end{abstract}

\section{Introduction}

For a string $v = v_1 v_2 \ldots v_n$, where each $v_i$ is a single symbol from some alphabet $\Sigma$, any string $u = v_{i_1} v_{i_2} \ldots v_{i_k}$ with $k \leq n$ and $1 \leq i_1 \leq i_2 \leq \ldots \leq i_{k} \leq n$ is called a \emph{subsequence} (or \emph{scattered factor} or \emph{subword}) of $v$ (denoted by $u \subseq v$). This is formalised by the \emph{embedding} from the positions of $u$ to the positions of $v$, i.\,e., the increasing mapping $e : \{1, 2, \ldots, k\} \to \{1, 2, \ldots, n\}$ with $j \mapsto i_j$ (we use the notation $u \subseq_e v$ to denote that $u$ is a subsequence of $v$ via embedding~$e$). For example, the string $\ta \tb \ta \tc \tb \tb \ta$ has among its subsequences $\ta \ta \ta$, $\ta \tb \tc \ta$, $\tc \tb \ta$, and $\ta \tb \ta \tb \tb \ta$. With respect to $\ta \ta \ta$, there exists just one embedding, namely $1 \mapsto 1$, $2 \mapsto 3$, and $3 \mapsto 7$, but there are two embeddings for $\tc \tb \ta$. \par
In this paper, we are interested in \emph{subsequences with gap constraints} that can be embedded in such a way that the \emph{gaps} of the embedding, i.\,e., the factors $v_{e(i) + 1} v_{e(i) + 2} \ldots v_{e(i + 1) - 1}$ between the images of the mapping, satisfy certain properties. We begin by discussing why the concept of classical subsequences (i.\,e., without gap constraints) is a central one in computer science, and then we will motivate and describe in detail our approach.\par
The concept of subsequences is employed in many different areas of computer science: in formal languages and logics (e.\,g., piecewise testable languages~\cite{simonPhD,Simon72,KarandikarKS15,CSLKarandikarS,journals/lmcs/KarandikarS19}, or subword order and downward closures~\cite{HalfonSZ17,KuskeZ19,Kuske20,Zetzsche16}), in combinatorics on words~\cite{RigoS15,FreydenbergerGK15,LeroyRS17a,Rigo19,Seki12,Mat04,Salomaa05}, for modelling concurrency~\cite{Riddle1979a, Shaw1978, BussSoltys2014}, in database theory (especially \emph{event stream processing}~\cite{ArtikisEtAl2017,GiatrakosEtAl2020,ZhangEtAl2014}). Moreover, many classical algorithmic problems are based on subsequences, e.\,g., {longest common subsequence} \cite{DBLP:journals/tcs/Baeza-Yates91} or {shortest common supersequence} \cite{Maier:1978}. Note that the longest common subsequence problem, in particular, has recently regained substantial interest in the context of fine-grained complexity (see~\cite{DBLP:conf/fsttcs/BringmannC18,BringmannK18,AbboudEtAl2015,AbboudEtAl2014}).\par
There are two main types of algorithmic problems for subsequences investigated in the literature. Firstly, \emph{matching}: the problem to decide whether a string $u$ is a subsequence of a string $v$, i.\,e., whether $u \subseq v$ (the term matching is motivated by the point of view that $u$ is a pattern that is to be matched with the string $v$). Secondly, the \emph{analysis problems} are concerned with the sets $\subseqSet{k}{v}$ of all length-$k$ subsequences of a given string $v$. More precisely, for given string $v \in \Sigma^+$ and integer $k \in \mathbb{N}$, we want to decide whether $\subseqSet{\gaptuple}{v} = \Sigma^k$ (\emph{universality}), or, for an additional string $v'$, whether $\subseqSet{\gaptuple}{v} \subseteq \subseqSet{\gaptuple}{v'}$ (\emph{containment}) or $\subseqSet{\gaptuple}{v} = \subseqSet{\gaptuple}{v'}$ (\emph{equivalence}). For classical subsequences (as defined above), the matching problem is trivial, while the analysis problems are well-investigated and relatively well-understood. For instance, the equivalence problem was introduced by Imre Simon in his PhD thesis \cite{simonPhD}, and was intensely studied in the combinatorial pattern matching community (see \cite{TCS::Hebrard1991,garelCPM,SimonWords,DBLP:conf/wia/Tronicek02,CrochemoreMT03,KufMFCS} and the references therein), before being optimally solved in 2021 \cite{GawrychowskiEtAl2021}. In this work, we consider these problems with respect to an extended setting of subsequences, which we shall explain and motivate next.\looseness=-1\par
\textbf{Motivation for Our Setting.} In the theoretical literature, problems on subsequences are usually considered in the setting where the embeddings (as witnesses for subsequences) can be arbitrary. This means that any subsequence $u$ of string $v$ is witnessed by a \emph{canonical} embedding $e$ that greedily maps each position $i$ of $u$ to the leftmost occurrence of symbol $u_i$ in the suffix $v_{e(i-1)} v_{e(i-1) + 1} \ldots v_{n}$. For example, $u = \ta \tb \ta$ can be embedded into $v = \ta \tb \ta \tc \tb \tb \ta$ in six different ways, but the canonical embedding maps $u$ to the prefix $v[1..3]$. This makes it often rather simple to deal with subsequences algorithmically: matching can be decided greedily in linear time; the set of all subsequences of a string $v$ can be represented by a \emph{deterministic} automaton of size $\bigO(|v||\Sigma|)$ (which means that the analysis problems can be solved in polynomial time, although much more efficient methods exist in certain cases~\cite{GawrychowskiEtAl2021}).\looseness=-1 \par
For practical scenarios, on the other hand, it seems reasonable to also postulate some properties with respect to the \emph{gaps} that are induced by the embedding.
For example, if we model the scheduling of several threads on a single processor by shuffling several sequences into one string, then a-priori knowledge about the scheduling strategy may tell us that the subsequences describing the single threads will not have huge gaps (any kind of fairness property of the scheduling strategy implies this).
Another example is finding alignments of bio-sequences by computing longest common subsequences. While any common subsequence of two strings can be interpreted as an alignment, it is questionable if this interpretation is still useful if roughly half of the positions of the common subsequence are mapped to the beginning of the strings, while the other half is mapped to the end of the strings, with a huge gap (say thousands of symbols) in between. This situation should rather be seen as two individual alignments. In fact, in this scenario the optimisation goal of finding a \emph{longest} common subsequence, without further constraints, even seems counterproductive, since it may favour alignments that are to a large extent disconnected and are therefore less likely to describe relevant properties. In the context of complex event processing, it might be desirable to describe the situation that between the events of a job $A$ only events associated to a job $B$ appear (e.\,g., due to unknown side-effects this leads to a failure of job $A$). In this case, we are interested in embedding a string as a subsequence such that the gaps only contain symbols from a certain subset of the alphabet (i.\,e., the events associated to job $B$). So, in practice, it makes sense to reason both about the length and the actual content of gaps induced by embeddings.\looseness=-1
\par
The large algorithmic tool box for problems based on subsequences is not always capable of handling the practically relevant scenarios, where we are interested in subsequences that can be embedded not just in \emph{any} way, but in {\em some specific way} that is reasonable for the application scenario. 
We therefore investigate basic problems on subsequences in the setting where the gaps of the subsequences (or rather of the embeddings) have certain constraints. \par
\textbf{Related Work.} Subsequences with various types of gap constraints are considered in different contexts. Not unexpectedly, one of the main areas in which such subsequences were investigated is combinatorial pattern matching with biological motivations, see~\cite{BilleEtAl2012} and the references therein. In \cite{LiW08,LiYWL12}, mining such subsequences is presented as a typical data-mining problem with applications in classification and clustering algorithms. In~\cite{KleestMeissnerEtAl2021}, a query class for event streams is introduced, which is based on subsequences with upper and lower length bounds as gap constraints. 
The longest common subsequence problem has also been extended to the case where the gaps have length constraints (see, e.\,g.,~\cite{IliopoulosEtAl2007} and the references therein). \looseness=-1\par
Coming back to \cite{BilleEtAl2012}, a rather well-researched problem that is related to our setting is that of matching \emph{variable length gap patterns}. In this setting, a pattern, defined as the string 

\begin{equation*}
u_1 (p_1, q_1) u_2 (p_2, q_2) \ldots u_{m-1} (p_{m-1}, q_{m-1}) u_m 
\end{equation*}

with $u_i \in \Sigma^+$, $(p_i, q_i) \in \mathbb{N}^2$ and $0 \leq p_i \leq q_i$, matches a string $w$ if $w = w_0 u_1 w_1 \ldots u_{m-1} w_{m-1} u_m w_m$ with $p_i \leq |w_i| \leq q_i$. For this special pattern matching problem many algorithmic results exist (see~\cite{BilleEtAl2012} and the references therein); moreover, it has also been investigated in more practical papers that provide experimental evaluations of algorithms solving it, see, e.\,g.,~\cite{BaderEtAl2016,CaceresEtAl2020}. The above can be seen as special variants of the matching problem for subsequences with gap-length constraints. \par 
While the works above address mostly patterns with length constraints, the area of string constraint solving (with applications in formal verification, and a strong algorithm engineering component, see \cite{amadini2021survey}) addresses the problem of aligning two strings containing constants (or contiguous sequences of one or more letters) and variables (or gaps). In general (see the aforementioned survey \cite{amadini2021survey} and the references therein), the variables/gaps are subject to conjunctions of pairwise string-equality, length, or regular constraints (see also Appendices \ref{NFAvsDFA} and \ref{ap:RegAndLen}). Moreover, the problem of checking whether factors of words are part of a given regular language were addressed in the context of sliding window algorithms \cite{GanardiHKLM18,GanardiHL16,GanardiHL18,GanardiHL18b,GanardiHLS19} or in the streaming model \cite{BathieS21,DudekGGS22}. \looseness=-1\par
On the other hand, we are not aware of any works that are concerned with (non-trivial) gap-constrained variants of the analysis problems (i.\,e., universality, containment, and equivalence). Let us now formally define the setting considered in this paper.\par
\textbf{Subsequences With Gap Constraints.} Since the gaps induced by an embedding are essentially strings (or words), it seems natural to formalise gap constraints for length-$k$ subsequences by $(k-1)$-tuples of sets of strings (i.\,e., languages) $\gaptuple = (C_1, \ldots, C_{k-1})$, where $C_i \subseteq \Sigma^*$ for every $i \in \{1, \ldots, k-1\}$; we denote $|\gaptuple|=k-1$. A length-$k$ subsequence $u = u_1 u_2 \ldots u_k$ of $v = v_1 v_2 \ldots v_n$ satisfies $\gaptuple$ (i.\,e., it is a $\gaptuple$-subsequence) if $u \subseq_e v$ for an embedding $e$ that satisfies $\gaptuple$ in the sense that, for every $i \in [k - 1]$, $v_{e(i)+1} \ldots v_{e(i+1)-1} \in C_i$. By $\subseqSet{\gaptuple}{v}$ we denote the set of all $\gaptuple$-subsequences of $v$. 
In this setting, we consider:
\begin{itemize}
\item the \emph{matching problem} $\matchProb$: decide, for given strings $p$, $w$, and gap constraints $\gaptuple$ with $|\gaptuple|=|p|-1$, whether $p$ is a $\gaptuple$-subsequence of $w$ (i.\,e., whether $u \in \subseqSet{\gaptuple}{v}$);
\item the \emph{universality problem} $\uniProb$: decide, for given string $w$ and gap constraints $\gaptuple$ with $|\gaptuple|=k-1$ , whether $\subseqSet{\gaptuple}{w} = \Sigma^{k}$;
\item the \emph{equivalence problem} $\equiProb$ (respectively,  the \emph{containment problem} $\contProb$): decide, for given strings $w, w'$, and gap constraints $\gaptuple$, whether $\subseqSet{\gaptuple}{w} = \subseqSet{\gaptuple}{w'}$ (respectively, $\subseqSet{\gaptuple}{w} \subseteq \subseqSet{\gaptuple}{w'}$).
\end{itemize}
Our formalisation of gap constraints is as general as possible. In order to obtain meaningful results we focus on \emph{regular constraints}, where each $C_i$ is a regular language, and on \emph{length constraints}, where each $C_i$ has the form $\{v \in \Sigma^* \mid  \lowerBoundShort{i} \leq |v| \leq \upperBoundShort{i}\}$ with $\lowerBoundShort{i}, \upperBoundShort{i} \in \mathbb{N}\cup\{0,+\infty\}$ and is represented as the pair $(\lowerBoundShort{i}, \upperBoundShort{i})$. We also consider {\em conjunctions} $(C_i, (\lowerBoundShort{i}, \upperBoundShort{i}))$ {\em of regular and length constraints}, i.\,e., the gap must be from $C_i$ \emph{and} of length between $\lowerBoundShort{i}$ and $\upperBoundShort{i}$ (note that simply ``pushing'' the length constraint into the regular language $C_i$ would increase $C_i$'s representation by a factor $\upperBoundShort{i}$, which is exponential in $\upperBoundShort{i}$'s binary representation). These constraints cover the existing cases in the literature.\par

\textbf{Our Contribution.} 
We provide a comprehensive picture of the computational complexity of both the matching and the analysis problems, proving tight upper and lower bounds for them, with a focus on the latter. \looseness=-1\par
With respect to matching, we show that we can check whether $u$ is a $\gaptuple$-subsequence of $v$ in \emph{rectangular} time $O(|v| |\gaptuple|)$, where, if each $C_i$ is the conjunction of a regular constraint and a length constraint, $|\gaptuple|$ is the number of states of the $\DFA$s that represent the regular constraints. In the absence of regular constraints (so, for length constraints only), such rectangular upper bounds are already reported in the literature (see~\cite{IliopoulosEtAl2007}). Moreover, the case when length constraints are absent (so, we have regular constraints only) is rather straightforward. Our algorithm dealing with the case of conjunctions of regular and length constraints requires, however, a non-trivial extension of the existing approaches. Nevertheless, our main contribution in this area is that we can also prove a conditional lower bound that essentially states that these running times of those algorithms cannot be improved unless the \emph{orthogonal vectors hypothesis} fails. More precisely, adding length or regular constraints to subsequences changes the matching problem from a trivial problem to a problem with provably rectangular complexity. Additionally, this proves also a conditional lower bound for matching variable length gap patterns (mentioned above), for which many upper bounds, but no matching lower bound were known before. It is also worth noting that the lower bound holds for the case of a constant alphabet and constant length constraints.\par
With respect to the problems of universality, equivalence, and containment, we show strong intractability results for both the cases of length constraints and of regular constraints. More precisely, these problems are $\npclass$-complete even for a fixed binary alphabet and for small, constant length (or regular) constraints (note that the problems are trivial for a unary alphabet). Moreover, for any fixed constant alphabet, the problems can be solved by brute-force algorithms in exponential time $2^{\bigO(k)} |\gaptuple| \ell$ (recall that $k$ is the length of subsequences; $\ell$ is the maximum length of the input strings), and we can show that for alphabets of size at least~$3$, the exponent can neither be lowered to any $\smallO(k)$ (unless the \emph{exponential time hypothesis} fails), nor to $k(1-\epsilon)$ for any $\epsilon \geq 1$ (unless the \emph{strong exponential time hypothesis} fails), and these lower bounds even hold for small constant length constraints. 
If we parameterise by both $|\Sigma|$ and $k$, then the brute-force algorithm is a trivial fpt-algorithm. However, we can exclude fpt-running times for the cases where we parameterise by only $|\Sigma|$, or by only $k$ (based on the assumptions $\pclass \neq \npclass$ and $\fptclass \neq \wclass[1]$, respectively). Note that for classical subsequences all these problems can be easily solved in polynomial time, so our results emphasise the fundamentally different nature of constrained subsequences.\par
Additionally (and only in Appendix \ref{sec:Special}, due to space constraints), we investigate some natural extensions of both the matching problem, involving gap-lengths equality, and the analysis problems, which involve counting the number of occurrences of subsequences. \looseness=-1

\section{Preliminaries}

Let $\mathbb{N} = \{1, 2, \ldots\}$ and $[n] = \{1, \ldots, n\}$ for $n \in \mathbb{N}$. By $\powerset{S}$, we denote the power set of a set~$S$.\looseness=-1 \par
For a finite alphabet $\Sigma$, $\Sigma^+$ denotes the set of non-empty words over $\Sigma$ and $\Sigma^* = \Sigma^+ \cup \{\emptyword\}$ (where $\emptyword$ is the empty word). For a word $w \in \Sigma^*$, $|w|$ denotes its length (in particular, $|\emptyword| = 0$); for every $b \in \Sigma$, $|w|_{b}$ denotes the number of occurrences of $b$ in $w$; we set $w^1 = w$ and $w^k = w w^{k-1}$ for every $k \geq 2$. For a string $w = w_1 w_2 \ldots w_n$ with $w_i \in \Sigma$ for every $i \in [n]$, and for every $i, j \in [|w|]$ with $i \leq j$, we define $w[i..j] = w_i w_{i+1} \ldots w_j$; moreover, we use $w[i]$ as shorthand for $w[i..i]$. For any string $w \in \Sigma^*$, we define $\alphabet{w} = \{b \in \Sigma \mid |w|_b \geq 1\}$. A \emph{factor} of a string $w \in \Sigma^*$ is a string $v \in \Sigma^*$ such that $w = u v u'$ for $u, u' \in \Sigma^*$; if $u = \emptyword$, then $v$ is called a \emph{prefix} of $w$, and if $u' = \emptyword$, then $v$ is called a \emph{suffix} of $w$.\par
By $\REG$, we denote the class of regular languages (see~\cite{HopcroftUllman} for more details). For the considered algorithmic problems we use as computational model the standard unit-cost RAM with logarithmic word size, with inputs over integer alphabets (see Appendix \ref{sec:compModel}).

\textbf{Hypotheses.} We now recall some basic computational problems and respective algorithmic hypotheses. We shall use these hypotheses to obtain our conditional lower bounds. \par
The problem $\SatProb$ gets as input a Boolean formula $F$ in conjunctive normal form as a set of clauses $F = \{c_1, c_2, \ldots, c_m\}$ over a set of variables $V = \{v_1, v_2, \ldots, v_n\}$, i.\,e., for every $i \in [m]$, we have $c_i \subseteq \{v_1, \neg v_1, \ldots, v_n, \neg v_n\}$. The question is whether $F$ is satisfiable.
By $k$-$\SatProb$, we denote the variant where $|c_i| \leq k$ for every $i \in [m]$.\par
The \emph{Orthogonal Vectors problem} ($\OV$ for short) is defined as follows: Given sets $A, B$ each containing $n$ Boolean-vectors of dimension~$d$, check whether there are vectors $\vec{a} \in A$ and $\vec{b} \in B$ that are orthogonal, i.\,e., $\vec{a}[i] \cdot \vec{b}[i] = 0$ for every $i \in [d]$.\par
We shall use the following algorithmic hypotheses based on $\SatProb$ and $\OV$ that are common for obtaining conditional lower bounds in fine-grained complexity (see the literature mentioned below for further details). In the following, $\poly$ is any fixed polynomial function.
\begin{itemize}
\item \emph{Exponential Time Hypothesis}  ($\ETH$)~\cite{ImpagliazzoEtAl2001, LokshtanovEtAl2011}: $3$-$\SatProb$ cannot be solved in time $2^{\smallO(n)} \poly(n + m)$.
\item \emph{Strong Exponential Time Hypothesis} ($\SETH$)~\cite{ImpagliazzoPaturi2001, Williams2015}: For every $\epsilon > 0$ there exists a $k$ such that $k$-$\SatProb$ cannot be decided in $O(2^{n(1-\epsilon)} \poly(n))$. 
\item \noindent \emph{Orthogonal Vectors Hypothesis} ($\OVH$)~\cite{Bringmann2014, Bringmann2019, Williams2015}: For every $\epsilon > 0$ there is no algorithm solving OV in time $\bigO(n^{2 - \epsilon} \poly(d))$. 
\end{itemize}
\label{sec:GapConstraints}
\hspace{\parindent}{\bf Subsequences With Gap Constraints.} 
We now define subsequences with gap constraints (see also the introduction). In the following, let $\Sigma$ be a finite \emph{alphabet}. Recall that for a string $w$, an embedding is a function $e : [k] \to [|w|]$ such that $i < j$ implies $e(i) < e(j)$ for all $i, j \in [k]$, and it \emph{induces the subsequence $\subsequence{w}{e} = w[e(1)]w[e(2)]\ldots w[e(k)]$ of $w$}. 
For every $j \in [k-1]$, the \emph{$j^{\text{th}}$ gap of $w$ induced by $e$} is the string $\gap{w}{e}{j} = w[e(j)+1..e(j+1)-1]$. We say that $e$ is the embedding of $\subsequence{w}{e}$ in $w$. \par
An \emph{$\ell$-tuple of gap constraints} is a tuple $\gaptuple = (C_1, C_2, \ldots, C_{\ell})$ with $C_i \subseteq \Sigma^*$ for every $i \in [\ell]$. For convenience, we set $\gaptuple[i] = C_i$ for every $i \in [\ell]$. We say that an embedding $e$ \emph{satisfies a $(k-1)$-tuple of gap constraints $\gaptuple$ with respect to a string $w$} if it has the form $e: [k] \to [|w|]$, and, for every $i \in [k - 1]$, $\gap{w}{e}{i} \in C_i$. Moreover, for a $(k-1)$-tuple $\gaptuple$ of gap constraints, the set $\subseqSet{\gaptuple}{w}$ contains all subsequences of $w$ induced by embeddings that satisfy $\gaptuple$, i.\,e., $\subseqSet{\gaptuple}{w} = \{\subsequence{w}{e} \mid e \text{ is an embedding that satisfies } \gaptuple \text{ w.\,r.\,t. $w$}\}$. The elements of $\subseqSet{\gaptuple}{w}$ are also called the \emph{$\gaptuple$-subsequences of $w$}. Note that tuples of gap constraints do not have constraints for the prefix $w[1..e(1)]$ or suffix $w[e(k)..|w|]$. However, our formalism can model this case too (for details, see Appendix \ref{prefixSuffix}). For a $(|u|-1)$-tuple $\gaptuple$ of gap constraints, we write $u \subseq_{\gaptuple} v$ to denote that $u \subseq_e v$ for some embedding $e : [|u|] \to [|v|]$ that satisfies $\gaptuple$ with respect to $v$, i.\,e., $u \subseq_{\gaptuple} v$ means that $u$ is a $\gaptuple$-subsequence of $v$. We note that for tuples of gap constraints $\gaptuple = (C_1, C_2, \ldots, C_{k-1})$ with $C_i = \Sigma^*$ for every $i \in [k-1]$, the set $\subseqSet{\gaptuple}{w}$ is just the set of all length-$k$ subsequences of $w$. 

\textbf{Special Types of Gap Constraints.} We now define the types of gap constraints that are relevant for our work.
We say that the gap constraints $\gaptuple = (C_1,\ldots, C_{k-1})$ are 
\begin{itemize}
\item \emph{regular constraints} if $C_i \in \REG$ for every $i \in [k-1]$. For every $i \in [k-1]$, we represent the regular constraint $C_i$ by a deterministic finite automaton (for short, $\DFA$) $A_i$ accepting it. See Appendix \ref{NFAvsDFA} for a discussion on the choice of DFAs to represent regular constraints. 
\item \emph{length constraints} if, for every $i \in [k-1]$, there are $\lowerBoundShort{i}, \upperBoundShort{i} \in \mathbb{N}\cup\{0,+\infty\}$ with $\lowerBoundShort{i} \leq \upperBoundShort{i}$, such that $C_i = \{v \in \Sigma^* \mid  \lowerBoundShort{i} \leq |v| \leq \upperBoundShort{i}\}$. We represent length constraints succinctly by pairs of numbers $(\lowerBoundShort{i}, \upperBoundShort{i})$, $i \in [k-1]$, in binary encoding. 
\item \emph{reg-len constraints} if, for every $i \in [k-1]$, $C_i$ is the conjunction of a regular constraint $C'_i$ and a length constraint $(\lowerBoundShort{i}, \upperBoundShort{i})$, i.\,e., $C_i = C'_i \cap \{v \in \Sigma^* \mid  \lowerBoundShort{i} \leq |v| \leq \upperBoundShort{i}\}$. We represent such constraints by $((\lowerBoundShort{i}, \upperBoundShort{i}),A'_i)$, where $A'_i$ is a $\DFA$ accepting $C'_i$.
\end{itemize}
\hspace{\parindent}A gap constraint $C_i$ is a {\em zero-gap} if and only if $C_i=\{\emptyword\}$. Let $\nz(\gaptuple)$ be the number of non-zero-gaps of $\gaptuple$ (that is, the number of positions $i$ such that $C_i \neq \{\emptyword\}$). 
For a tuple of regular or reg-len gap constraints $\gaptuple$, let $\size(\gaptuple)$ be the size of the overall representation of the respective constraints (total size of the automata defining the constraints) and let $\states(\gaptuple)$ be the total number of states of the $\DFA$s $A_i$, for $i \in [k-1]$, corresponding to the non-zero gaps of $\gaptuple$.\looseness=-1

Clearly, length constraints are the simplest type of gap constraints considered above. In particular, length constraints, and therefore reg-len constraints, can also be seen as a particular case of regular constraints. However, transforming length or reg-len constraints into a single automaton may cause an exponential size increase.

\textbf{Problems for Subsequences With Gap Constraints.} 
In this paper, we investigate the matching problem $\matchProb$ and the analysis problems $\uniProb$, $\contProb$, and $\equiProb$ (see definitions in the introduction). For simplicity, the pairs $(p, \gaptuple)$, which play the role of the patterns in $\matchProb$, will be called \emph{gap-constrained sequences}, or simply \emph{gapped sequences} for short. By $\matchProb_{\Sigma}$, we denote the problem variant where all instances are over the fixed alphabet $\Sigma$; for some class $\mathcal{C}$ of gap constraints, we use ``$\matchProb$ with $\mathcal{C}$-constraints'' to refer to the variant where the constraints are from $\mathcal{C}$. We use analogous notations for the analysis problems.\par

If $\gaptuple = (\Sigma^*, \Sigma^*, \ldots, \Sigma^*)$, then $\matchProb$ boils down to the simple task of checking whether a given string is a subsequence of another string. The equivalence problem for such trivial gap constraints, on the other hand, boils down to the well-known problem of deciding the Simon congruence for two strings (see the discussion in the introduction).
Our setting naturally models many other classical problems; some are discussed in Appendix~\ref{relatedProbs}. 
Finally, even though our framework allows arbitrary gap constraints, we will stick to the specific natural and relevant types of constraints defined above (i.\,e., length, regular, reg-len constraints).

\section{Matching Gapped Subsequences}\label{sec:matching}

This section contains two main results. Firstly, we show that $\matchProb$ with reg-len constraints can be solved in $O(|w|\states(\gaptuple)+\size(\gaptuple))$ time, which implies also rectangular upper bounds for $\matchProb$ with either length or regular constraints. Secondly, we show that, assuming $\OVH$ holds, there are no algorithms solving any of these problems polynomially faster. \looseness=-1

Note that, when dealing with length constraints, a constraint $(\lowerBoundShort{i}, \upperBoundShort{i})$ is equivalent to the regular language $C_i=\{x\in \Sigma^* \mid \lowerBoundShort{i} \leq |x| \leq \upperBoundShort{i} \}$, which is accepted by a DFA with $\Theta(\upperBoundShort{i})$ states. So, we could also interpret a tuple $\gaptuple$ of reg-len constraints as a tuple of regular constraints only, by considering in each component of $\gaptuple$ the intersection of the regular constraint with the regular language defined by the length constraints. However, this would lead to a growth in the number of states needed to model $\gaptuple$, and, as we will see in the following, to a less efficient algorithm for $\matchProb$. In this setting, we state our first main result. The full proof is given in Appendix \ref{sec:MatchingAlgo}. To emphasise the merits of our approach, we overview in Appendix \ref{sec:MatchingAlgo} several simpler approaches and their complexity (and shortcomings).

\begin{theorem}\label{constantPatternsRegularLength}
$\matchProb$ with reg-len constraints can be solved in $O(|w|\states(\gaptuple) + \size(\gaptuple))$~time.\looseness=-1
\end{theorem}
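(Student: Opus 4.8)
The plan is to design a dynamic-programming / automaton-simulation algorithm that processes the string $w$ once while maintaining, for each prefix of the pattern $p$ and each non-zero gap constraint, the set of ``reachable states'' compatible with embedding a prefix of $p$ into $w$ up to the current position. For reg-len constraints $C_i = C'_i \cap \{v \mid \lowerBoundShort{i} \leq |v| \leq \upperBoundShort{i}\}$, the fundamental difficulty is that we cannot afford to expand the length constraint into the automaton $A'_i$ (that would blow the state count up by a factor of $\upperBoundShort{i}$, which is exponential in the binary encoding). So the core idea is to simulate $A'_i$ on the candidate gap \emph{symbol by symbol}, while tracking the \emph{length} of the gap separately as a counter, and then checking the length bound $[\lowerBoundShort{i}, \upperBoundShort{i}]$ against this counter only at the moment we try to ``close'' the gap by matching $p[i+1]$. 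The product $A'_i \times \text{counter}$ is never materialised; the counter is handled arithmetically in $O(1)$ time per step.

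Concretely, I would maintain for each pattern position $i \in [k]$ a frontier describing, after reading $w[1..j]$, whether $p[1..i]$ can be embedded with the last matched symbol $p[i]$ placed at some position $\leq j$, together with the set of states of $A'_i$ that are reachable by running $A'_i$ on the gap $w[(\text{last match})+1..j]$ currently being built, and the range of possible gap lengths. The first steps would be: (1) fix the representation — each non-zero $C_i$ is given by a $\DFA$ $A'_i$ plus a length pair $(\lowerBoundShort{i}, \upperBoundShort{i})$, and observe that zero-gaps $C_i = \{\emptyword\}$ simply force $e(i+1) = e(i)+1$ and cost nothing; (2) set up the left-to-right scan of $w$, advancing all currently ``open'' gap-simulations by feeding $w[j]$ into the relevant $A'_i$'s transition function (one step each); (3) whenever $w[j] = p[i+1]$, test whether there is a reachable accepting configuration of $A'_i$ whose accumulated gap length lies in $[\lowerBoundShort{i}, \upperBoundShort{i}]$, and if so mark $p[1..i+1]$ as embeddable ending at $j$ and initialise the simulation of the next gap $A'_{i+1}$ starting at state $q_{0,i+1}$ with length $0$. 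The key invariant to prove correct is that this frontier exactly captures all embeddings of $p$ into $w$ satisfying $\gaptuple$, which follows by induction on $j$ using the determinism of the $A'_i$ (so ``reachable state'' is a single state per open simulation, keeping the per-step work bounded by $\states(\gaptuple)$).

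For the running-time accounting, I would argue that at each of the $|w|$ positions we perform one transition step for each currently relevant non-zero-gap automaton, and that the total number of state-transitions charged over the whole scan is $O(|w| \states(\gaptuple))$, plus the $O(\size(\gaptuple))$ one-time cost of reading in and preprocessing the automata (e.g.\ storing transition tables for $O(1)$-time lookups, precomputing for each $A'_i$ the set of states from which an accepting state is reachable, etc.). The length counters are manipulated in $O(1)$ arithmetic per step under the unit-cost RAM model with logarithmic word size, so the $(\lowerBoundShort{i}, \upperBoundShort{i})$ bounds, given in binary, never force us to enumerate lengths explicitly. Handling $+\infty$ and $0$ as boundary values of the length bounds is a routine special-casing.

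The main obstacle I expect is the interplay between the automaton simulation and the length counter when deciding whether a gap \emph{can} be closed at position $j$: since the gap $w[(\text{start})+1..j-1]$ is a fixed string, the set of states $A'_i$ can be in after reading it is a single deterministic state, but we must verify both that this state is accepting \emph{and} that the realised length $j - 1 - \text{start}$ satisfies the length bound — and crucially we must allow the embedding to choose \emph{where} the previous symbol $p[i]$ was matched, since different earlier matches give different gap strings and lengths. The subtlety is therefore bookkeeping multiple candidate start positions efficiently without redoing work; I would resolve this by showing that a greedy/leftmost strategy (or, more carefully, a frontier that records only the best reachable configuration) suffices, exploiting that for matching (as opposed to counting) it is enough to know whether \emph{some} valid embedding exists. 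Establishing that this compressed frontier does not lose any realisable embedding — i.e.\ that it is safe to keep only one representative state per open gap rather than a set of start positions — is the heart of the correctness proof and the place where the determinism of the $A'_i$ and a careful monotonicity argument must be combined.
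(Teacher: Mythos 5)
Your high-level architecture coincides with the paper's: a left-to-right dynamic program over $w$ that records, for each prefix of $p$ ending at each position of $w$, whether an embedding satisfying the constraints so far exists, with the length bounds kept as numbers rather than expanded into the automata. The gap is in the step you yourself flag as ``the heart of the correctness proof'' and then resolve incorrectly. At position $i$, to decide whether the $t$-th gap can be closed you must know whether \emph{some} earlier valid endpoint $j$ (with $D[j][t]=1$) satisfies \emph{both} $w[j+1..i]\in L(A_t)$ \emph{and} $\lowerBoundShort{t}\le i-j\le\upperBoundShort{t}$. Different candidate start positions $j$ generally put $A_t$ in \emph{different} states at position $i$ (determinism only guarantees that two traces which ever coincide stay coincident, not that they coincide), and they have different lengths $i-j$. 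There is no single ``best'' configuration: one candidate may lie inside the length window but sit in a state from which the run will not accept at the right moment, while another is in an accepting state but violates the window. So keeping ``only one representative state per open gap,'' or a greedy/leftmost choice, loses realisable embeddings; leftmost canonicity fails already for plain length constraints with nontrivial upper bounds, and the conjunction with regular constraints is exactly the regime where neither the sliding-window trick (length only) nor the min-start-per-state trick (regular only) survives.

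What is actually needed — and what the paper supplies — is a way to answer, for every position $i$ and every state $q$, whether the set of start positions whose trace reaches $(i,q)$ intersects the interval $[\,i-\upperBoundShort{t},\, i-\lowerBoundShort{t}\,]$, in total time $O(|w|\,|Q_t|)$ per constraint. The paper's key observation is that, by determinism, the union of all traces forms a forest of disjoint trees with $O(|w|\,|Q_t|)$ nodes in which a start position $j$ reaches $(i,q)$ iff $(i,q)$ is an ancestor of the leaf $(j,q_{0,t})$ at depth difference $i-j$; it then uses level-ancestor queries plus an amortised marking procedure (stopping at already-marked nodes) to process all leaves within the forest-size bound. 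Your frontier, if made correct by storing for each state the set of candidate start positions, degenerates to maintaining per-state sorted sets with range queries, which is the paper's ``simpler but less efficient variant'' costing an extra $\poly\!\log$ factor — so the proposal as written does not reach the claimed $O(|w|\states(\gaptuple)+\size(\gaptuple))$ bound, and in its one-representative form it is not correct.
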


\begin{proof}[Proof Sketch]
Assume $|w|=n$, $|p|=m$, and $\gaptuple =((\lowerBoundShort{1},$ $\upperBoundShort{1}),A_{1}), \ldots, (\lowerBoundShort{m-1},$ $\upperBoundShort{m-1}),A_{m-1}))$, where $A_i=(Q_i,q_{0,i},F_i,\delta_i)$ are DFAs defining the regular constraints and $(\lowerBoundShort{i}, \upperBoundShort{i})$ are pairs of numbers defining the length constraints. 
Let $i_1, \ldots, i_{k-1}\in [m-1]$ be such that $C_i\neq \{\emptyword\}$ (i.\,e., $C_i$ is a non-zero constraint of $\gaptuple$), for all $i\in \{i_1, \ldots, i_{k-1}\}$, and $C_i=\{\emptyword\}$, for all $i\notin \{i_1, \ldots, i_{k-1}\}$.
Clearly, $\states(\gaptuple)=\sum_{j=1}^{k-1}|Q_{i_j}|\geq \nz(\gaptuple)$. With $i_0=0$ and $i_k=m$, we compute the words $p_j=p[(i_{j-1}+1)..i_j]$, for $j\in [k]$, and we construct in linear time longest common extension data structures (see \cite{dinklage_et_al:LIPIcs:2020:12905} and the references therein) for the word $x=wp$, allowing us to check in constant time whether $w[i+1..i+|p_j|]=p_j$, for $i\leq n$. 

After this preprocessing part, the main part of our algorithm consists in a dynamic programming approach. We compute a two-dimensional $n\times k$ array $D[\cdot][\cdot]$, where $D[i][\ell]=1$ if and only if $p[1..|p_1\cdots p_\ell|]$ can be embedded in $w[1..i]$ and this embedding satisfies the first $\ell-1$ non-zero constraints of $\gaptuple$ and maps $p_\ell$ to the suffix of length $|p_\ell|$ of $w[1..i]$. Otherwise, $D[i][\ell]=0$. To start the computation of $D$, we initialize all the elements of $D$ with $0$. We then set $D[i][1]=1$ if and only if $w[1..i]$ ends with $p_1$, i.\,e., $w[i-|p_1|+1..i]=p_1$. 

Further, assume that, for some $t\in [k-1]$, we have computed $D[\cdot][\ell]$, for all $\ell\leq t$, and we want to compute $D[\cdot][t+1]$. This is the most involved and deep part of our algorithm and its main component is computing an array $f_{t+1}[\cdot]$, with $n$ elements, such that $f_{t+1}[i]=1$ iff there exists a position $j$ for which $D[j][t]=1$, $w[j+1..i]\in L(A_{t})$, and $\lowerBoundShort{t} \leq |w[j+1..i]|\leq \upperBoundShort{t}$. A full description of this part of the algorithm is given in Appendix \ref{sec:MatchingAlgo}; here we just sketch it.\looseness=-1

We first collect in a list $L_{t+1}=j_1<\ldots <j_r$ (increasingly sorted) all the positions $j$ of $w$ with $D[j][t]=1$.
Then, we compute a graph $G_{t+1}$ that has nodes of the form $(i,q)$, with $i\in [n]$ and $q\in Q_t$, and consists of the union, over $j\in L_{t+1}$, of the (not necessarily disjoint) paths 
$[(j,q_{0,t}), (j+1,q^j_1),\ldots, (n,q^j_{n-j})]$, where $\delta_t(q_{0,t},w[j+1])=q^j_1$ and $\delta_t(q^j_r,w[j+r+1])=q^j_{r+1}$, for all $r\in [n-j-1]$. Intuitively, such a path records the trace of the computation of $A_t$ on the input $w[j+1..n]$. For efficiency, these paths (and, therefore, the graph $G_{t+1}$) can be simultaneously constructed to avoid redundant computations. An important observation is that if two such paths intersect, then they are identical after their first common node; this is, indeed, true because $A_t$ is a deterministic finite automaton. Consequently, $G_{t+1}$ is a collection of disjoint trees $T_1, T_2, \ldots, T_z$. As there are no edges between any pair of nodes $(n,q)$ and $(n,q')$, with $q,q'\in Q_t$, each such tree $T_i$ can be seen as a rooted tree, whose root is its single node of the form $(n,q)$ and whose leaves are some of the nodes $(j,q_{0,t})$, with $j\in L_{t+1}$. \looseness=-1

Then, based on a series of efficient data structures and further insights, we efficiently mark, for each tree $T_i$ and for each leaf $(j,q_{0,t})$ of $T_i$, all the ancestors $(d,q)$ of $(j,q_{0,t})$ such that $\lowerBoundShort{t}\leq |w[j+1..d]|=d-j\leq \upperBoundShort{t}$. Once we have completed the marking for tree $T_i$, a node $(j,q)$ is marked if and only if there exists a path ${\mathcal P}$ of length $\ell$, with $\lowerBoundShort{t}\leq \ell\leq \upperBoundShort{t}$, which connects a leaf $(j',q_{0,t})$ of $T_i$ to $(j,q)$. Or, in other words, $\delta_t(q_{0,t}, w[j'+1..j])=q$. 
The trees $T_i$, with $i\in [z]$, are computed in $O(n|Q_t|)$ time, while the marking takes $O(\sum_{i=1}^p|T_i|)$ time.\looseness=-1

Finally, we simply set, for $i$ from $1$ to $n$, $f_{t+1}[i]=1$ if and only if there exists a state $q\in F_t$ such that the node $(i,q)$ is marked. This means that $f_{t+1}[i]=1$ if and only if there exists a word $w[j+1..i]$ of length $\ell$, with $\lowerBoundShort{t}\leq \ell\leq \upperBoundShort{t}$, such that $j\in L_{t+1}$ and $\delta_t(q_{0,t}, w[j+1..i])$ is a final state (i.\,e., $w[j+1..i]\in L(A_{t})$). 

Coming now back to the computation of the elements of $D$, we set $D[i][t+1]=1$ if and only if $w[i-|p_{t+1}|+1..i]=p_{t+1}$ and $f_{t+1}[i-|p_{t+1}|]=1$. Clearly, $D[\cdot][t+1]$ is correctly computed. 

After $D$ is computed, we decide that $p \subseq_{\gaptuple} w$ if and only if there exists $j$ with $D[j][k]=1$. The whole process can be implemented in $O(|w|\states(\gaptuple) + \size(\gaptuple))$ time.
\end{proof}

The next results are now immediate. Note that for these particular cases (but, to the best of our knowledge, not for their conjunction,  covered in Theorem \ref{constantPatternsRegularLength}) simpler algorithms exist. \looseness=-1

\begin{corollary}\label{constantPatternsRL}
(1). $\matchProb$ with length constraints can be solved in $O(|w|\nz(\gaptuple))$ time.\\ (2). $\matchProb$ with regular constraints can be solved in $O(|w|\states(\gaptuple) + \size(\gaptuple))$ time.
\end{corollary}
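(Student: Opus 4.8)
The plan is to derive both claims from Theorem~\ref{constantPatternsRegularLength} by casting each restricted constraint type as a reg-len constraint and then specialising the bound, so that no new algorithm has to be designed from scratch.

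For part~(2) I would observe that every regular constraint $C_i \in \REG$ is the reg-len constraint obtained by pairing $C_i$ with the trivial length constraint $(0, +\infty)$. Since $0 \le |v| \le +\infty$ holds for every $v \in \Sigma^*$, this length component is vacuous, so an embedding satisfies the reg-len tuple exactly when it satisfies the original regular tuple, and the two notions of $\subseqSet{\gaptuple}{w}$ coincide. Feeding this tuple into the algorithm of Theorem~\ref{constantPatternsRegularLength} therefore decides $\matchProb$ with regular constraints, and because the DFAs $A_i$, and hence $\states(\gaptuple)$ and $\size(\gaptuple)$, are left unchanged, the running time $O(|w|\states(\gaptuple) + \size(\gaptuple))$ transfers verbatim. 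This part is immediate.

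For part~(1) I would dually regard each non-zero length constraint $(\lowerBoundShort{i}, \upperBoundShort{i})$ as the reg-len constraint pairing it with the regular part $\Sigma^*$. A DFA for $\Sigma^*$ needs a single state (initial and accepting, with a self-loop on every letter), so each non-zero gap contributes exactly one state and $\states(\gaptuple) = \nz(\gaptuple)$. Theorem~\ref{constantPatternsRegularLength} then yields $O(|w|\nz(\gaptuple) + \size(\gaptuple))$, which is the desired bound up to the stray $\size(\gaptuple)$ term.

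Removing that term is the only genuine point of the proof, and it is where I expect the effort to lie. The key is that, once every regular part equals $\Sigma^*$, the automaton machinery inside the proof of Theorem~\ref{constantPatternsRegularLength} degenerates completely: every node of each graph $G_{t+1}$ carries the unique accepting state $q_{0,t}$, so the trees $T_i$ and the ancestor-marking collapse to the single condition ``$f_{t+1}[i] = 1$ iff there is $j$ with $D[j][t] = 1$ and $\lowerBoundShort{t} \le i - j \le \upperBoundShort{t}$''. No transition function is ever consulted, so for length constraints—where the input only supplies the binary pairs $(\lowerBoundShort{i}, \upperBoundShort{i})$ and no automaton is materialised at all—the $\size(\gaptuple)$ term, which accounts precisely for processing DFA representations, simply never arises. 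To keep each $f_{t+1}$ linear I would replace the tree-marking by a direct two-pointer sweep over $i$: as $i$ increases, the admissible window of source positions $[i - \upperBoundShort{t},\, i - \lowerBoundShort{t}]$ shifts monotonically, so maintaining the number of indices $j$ with $D[j][t] = 1$ inside this window sets each $f_{t+1}[i]$ in amortised constant time and avoids the quadratic blow-up that naive per-interval marking would cause. Each of the $\nz(\gaptuple)$ non-zero levels then costs $O(|w|)$, and together with the $O(|w| + |p|)$ longest-common-extension preprocessing (we may assume $|p| \le |w|$, as otherwise the answer is trivially negative) the total is $O(|w|\nz(\gaptuple))$. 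The main obstacle is thus not the reduction but verifying carefully that this length-only specialisation legitimately avoids scanning any automaton, so that the $\size(\gaptuple)$ term can be dropped rather than merely bounded.
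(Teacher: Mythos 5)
Your proposal is correct and matches the paper's own argument: statement (2) is obtained by pairing each regular constraint with the vacuous length constraint $(0,+\infty)$, and statement (1) by pairing each length constraint with $\Sigma^*$ as a one-state DFA, after which the paper likewise replaces the tree-marking machinery by a sliding-window (dequeue) computation of $f_{t+1}$ to get the clean $O(|w|\nz(\gaptuple))$ bound. The only minor difference is that your worry about the stray $\size(\gaptuple)$ term is not strictly necessary, since for one-state DFAs $\size(\gaptuple)=O(\nz(\gaptuple)|\Sigma|)\subseteq O(|w|\nz(\gaptuple))$ under the paper's integer-alphabet assumption, so the bound already follows from Theorem~\ref{constantPatternsRegularLength} without modifying the algorithm.
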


It is worth noting that the matching problem can be solved in $O(|w|)$ time when $\gaptuple$ only defines constraints that are $\{\emptyword\}$ or $\Sigma^*$, which covers, e.\,g., the cases of subsequence matching or string matching. 
In particular, the greedy strategy used for matching regular patterns with variables (see, e.\,g., \cite{FernauMMS20}) can be easily adapted to solve $\matchProb$ with length constraints in linear time, when the upper bounds on each gap are trivial (i.\,e., they are all greater or equal to the length of the input word). So, as far as length constraints are concerned, it seems that non-trivial upper bounds lead to an increase in the difficulty of the $\matchProb$ problem; a particularly efficient approach for subsequences with general length constraints is given in \cite{BilleEtAl2012}, but, in the worst case, it still has rectangular complexity. However, even when non-trivial length upper bounds are used, there are still some simpler particular cases. For instance, when working with {\em strings with don't cares} (or partial words), where each gap has a fixed length (i.\,e., the lower and upper bounds are the same), $\matchProb$ can be solved in time $O(|w| \log |p|)$ \cite{CliffordC07}. \looseness=-1\par
A gapped sequence $(p, \gaptuple)$ with reg-len constraints can be represented as a classical regular expression $r_{(p, \gaptuple)}$, so $\matchProb$ can be solved by a textbook algorithm in $O(|w||r_{(p, \gaptuple)}|)$~\cite{Thompson68}, which is optimal w.\,r.\,t. polynomial speed-ups, conditional on $\OVH$~\cite{BackursIndyk2016}. However, including the string $p$ and the length constraints in the regular expression might, once more, lead to a slower algorithm compared to our direct approach, as $|r_{p,\gaptuple}|$ may be much larger than $\states(\gaptuple)$. \looseness=-1\par
To summarise, at an intuitive level, we could say that as long as we have non-trivial length or regular constraints, $\matchProb$ seems to become more difficult than its counterpart for classical subsequences. This intuitive remark is confirmed by our second main result.

\begin{theorem}\label{lowerBoundLength}
$\matchProb$ with length constraints cannot be solved in $\mathcal{O}(|w|^h \nz(\gaptuple)^g)$ time 
with $h+g= 2-\epsilon$ for some $\epsilon>0$, unless $\OVH$ fails. This holds even if $|\Sigma| = 4$ and all length constraints are $(0, \ell)$ with $\ell \leq 6$.
\end{theorem}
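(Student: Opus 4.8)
The plan is to establish the lower bound by a fine-grained reduction from \OV, invoking \OVH. Given an instance with sets $A = \{\vec{a}_1, \ldots, \vec{a}_n\}$ and $B = \{\vec{b}_1, \ldots, \vec{b}_n\}$ of Boolean vectors of dimension $d$, I would build in time $\bigO(n \cdot \poly(d))$ a gapped sequence $(p, \gaptuple)$ with length constraints and a string $w$ such that both $|w|$ and $\nz(\gaptuple) = |p|-1$ are $\Theta(n \cdot \poly(d))$, the alphabet has size $4$, every constraint has the form $(0,\ell)$ with $\ell \le 6$, and
\[
p \subseq_{\gaptuple} w \quad\Longleftrightarrow\quad \exists\, j,k:\ \vec{a}_j \text{ and } \vec{b}_k \text{ are orthogonal.}
\]
The encoding of $A$ lives in $p$ and the encoding of $B$ lives in $w$. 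Since the constraints $(0,\ell)$ permit a bounded-gap embedding (skipping at most $\ell$ symbols between consecutive matched positions), an embedding behaves like a constrained left-to-right alignment, and I must exploit the fact that the prefix $w[1..e(1)]$ and suffix $w[e(k)..|w|]$ are unconstrained.

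The construction proceeds through two gadget levels. At the \emph{coordinate} level I would design constant-length blocks over $\{\ta, \tb, \tc, \td\}$: a pattern block $PB(a_i)$ for $A$ and a text block $TB(b_i)$ for $B$, tuned so that $PB(a_i)$ embeds into $TB(b_i)$ within the gap budget $\ell \le 6$ if and only if $\neg(a_i = 1 \wedge b_i = 1)$, i.e.\ exactly when the coordinate does not violate orthogonality (the bad case $a_i = b_i = 1$ forces a gap exceeding $\ell$). At the \emph{vector} level I would concatenate these blocks, separating them by a dedicated marker symbol and choosing the gap bounds tightly enough that coordinate $i$ of $PB$ can only be matched against coordinate $i$ of $TB$ (no drift). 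A whole vector-to-vector match then succeeds precisely when all $d$ coordinate tests pass, so that a vector gadget realizes the conjunction over coordinates, i.e.\ orthogonality of the pair.

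The crux is lifting the per-pair test to the existential quantifier over all $n^2$ pairs \emph{without} letting the total length grow beyond $\Theta(n\poly(d))$ and without free long-range skipping (bounded gaps forbid jumping over whole gadgets). My plan is to place all $A$-gadgets consecutively in $p$ and all $B$-gadgets consecutively in $w$, use the free prefix to choose a global shift that aligns the $A$-sequence as a diagonal over the $B$-sequence, and equip each gadget with a \emph{neutral pass-through} branch (matchable within budget regardless of the vectors) alongside the \emph{verification} branch above; padding $B$ on both sides with all-zero vector gadgets then lets every relevant shift be realized. The intended equivalence is that a valid embedding exists iff along some shift all aligned pairs can be traversed while at least one genuine verification succeeds, which should collapse to existence of an orthogonal pair. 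The hard part will be engineering these branches so that (i) an orthogonal pair always yields a legal bounded-gap embedding (completeness) and (ii) no \emph{spurious} embedding exists that matches everything in pass-through mode and thereby accepts when no pair is orthogonal (soundness); forcing ``at least one real verification per accepting run'' while staying inside $|\Sigma| = 4$ and $\ell \le 6$ is exactly the delicate gadget bookkeeping, and I expect most of the proof effort to go there.

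Finally, I would close the argument by the standard transfer: if $\matchProb$ with length constraints admitted an $\bigO(|w|^h \nz(\gaptuple)^g)$ algorithm with $h + g = 2-\epsilon$, then plugging in $|w|, \nz(\gaptuple) = \Theta(n \cdot \poly(d))$ would solve \OV\ in time $\bigO\!\big((n\poly(d))^{h}(n\poly(d))^{g}\big) = \bigO(n^{2-\epsilon}\poly(d))$, contradicting \OVH; and since the instance already uses only $|\Sigma|=4$ and constraints $(0,\ell)$ with $\ell \le 6$, the stated restricted form of the bound follows.
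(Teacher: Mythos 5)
You have the right architecture --- a fine-grained reduction from $\OV$, constant-size coordinate gadgets based on a factor-containment test for single bits, tight gap bounds to prevent drift within a vector gadget, the unconstrained prefix/suffix to select a global shift, and doubling/padding of one side so that every shift is realizable --- and this matches the skeleton of the paper's reduction (the paper puts $A$ into $w$ and $B$ into the pattern, the mirror of your choice, but that is inessential). The transfer step at the end is also the standard one and is fine.

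However, the point you defer as ``delicate gadget bookkeeping'' is not bookkeeping; it is the single load-bearing idea of the proof, and your sketch does not contain it. If every vector gadget carries a neutral pass-through branch, then a priori \emph{every} shift admits an all-pass-through embedding and the reduction is unsound; you need a mechanism that forces at least one genuine verification in every accepting run, and nothing in your outline supplies one. The paper's solution is a \emph{three-track} construction with a monotone progress argument: each $A$-gadget $\codeSketch_a(\vec{a}_\ell)$ consists of three interleaved tracks, where tracks $1$ and $3$ encode the all-zero vector (these are your pass-through branches) and track $2$ encodes the actual vector $\vec{a}_\ell$ (the verification branch). The $\#$-separators and the constant gap bounds are tuned so that (i) each $B$-gadget must commit to exactly one track of one $A$-gadget, (ii) as the pattern advances from one $A$-gadget to the next, the selected track index can stay the same or increase by one but never decrease, (iii) the leading $@$ with its $(0,5)$ constraint forces the first $B$-gadget into track $1$ or $2$, and (iv) the trailing $@$ with its $(0,5)$ constraint forces the last $B$-gadget into track $2$ or $3$. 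Monotonicity then guarantees that some $B$-gadget passes through track $2$, which is embeddable only if that pair is orthogonal; conversely an orthogonal pair $(\vec{a}_{\ell},\vec{b}_j)$ yields a legal embedding by running $\vec{b}_1,\ldots,\vec{b}_{j-1}$ on track $1$, $\vec{b}_j$ on track $2$, and the rest on track $3$. Without this (or an equivalent potential-function mechanism realized inside $|\Sigma|=4$ and gaps at most $6$), your soundness direction does not go through, so the proposal as written has a genuine gap at its center.
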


\begin{proof}[Proof Sketch]
Let $A=\{\vec{a}_1, \ldots, \vec{a}_n\}$ and $B=\{\vec{b}_1, \ldots, \vec{b}_n\}$, with $A,B \subset \{0,1\}^d$ be an $\OV$-instance. We transform $A$ into a string $w \in \Sigma^* = \{0,1,\#,@\}^*$ and $B$ into a string $p \in \Sigma^*$ and a $(|p|-1)$-tuple $\gaptuple$ of length constraints. For convenience, we represent the gapped sequence $(p, \gaptuple)$ with $p=p[1]\cdots p[m]$ by  writing the length constraints in between the symbols, i.\,e., $p[1] \stackrel{\gaptuple[1]}{\leftrightarrow} p[2] \stackrel{\gaptuple[2]}{\leftrightarrow} \cdots \stackrel{\gaptuple[m-1]}{\leftrightarrow} p[m]$, and we omit $\stackrel{\gaptuple[i]}{\leftrightarrow}$ if $\gaptuple[i]=(0,0)$. For example, if $p=abab$ and $\gaptuple[1]=(0,0)$, $\gaptuple[2]=(1,5)$, and $\gaptuple[3]=(0,6)$, we use the notation $ab \stackrel{(1,5)}{\leftrightarrow}  a\stackrel{\leq 6}{\leftrightarrow} b$. \par
Let $\vec{a}_i=(a_i^1,\ldots, a_i^d)$ and $\vec{b}_i=(b_i^1,\ldots, b_i^d)$, for all $i\in [n]$. We shall represent the vectors from $A$ and $B$ by different encodings $\codeSketch_a(\cdot)$ and $\codeSketch_b(\cdot)$, respectively. The $0$ and $1$ entries in the $A$-vectors are encoded by $\codeSketch_a (0)=010$ and $\codeSketch_a(1)=100$, and the $0$ and $1$ entries in the $B$-vectors are encoded by $\codeSketch_b(0)=10$ and $\codeSketch_b(1)=01$. We note that for every $x, y \in \{0, 1\}$, $\codeSketch_b(x)$ is a factor of $\codeSketch_a(y)$ if and only if $x \cdot y = 0$. This means that the orthogonality of $\vec{a}_i$ and $\vec{b}_{i'}$ is characterised by the situation that, for every $j \in [d]$, $\codeSketch_b(b_{i'}^j)$ is a factor of $\codeSketch_a(a_{i}^j)$.\par
We represent each bit $a_i^j$ of $\vec{a}_i \in A$ as the string $\#\codeSketch_a(0) \# \#\codeSketch_a(a_i^j) \# \#\codeSketch_a(0) \#$, and the whole vector $\vec{a}_i$ as the concatenation
$\codeSketch_a(\vec{a}_i) = \prod_{j=1}^d ([_1\#\codeSketch_a(0) \#]_1 [_2\#\codeSketch_a(a_i^j) \#]_2[_3 \#\codeSketch_a(0) \#]_3)\,,$
where the brackets $[_1 \ldots ]_1$, $[_2 \ldots ]_2$, $[_3 \ldots ]_3$ are not actual symbols of the gadget, but serve the only purpose to illustrate that $\codeSketch_a (\vec{a}_i)$ has three individual \emph{tracks}, where track $1$ and $3$ correspond to $d$ occurrences of $\#\codeSketch_a(0) \#$ (representing the all-$0$ vector), while track $2$ represents the actual vector $\vec{a}_i$. These three tracks play a central role in the correctness of the reduction. \looseness=-1\par
For $i \in [n]$, every vector $\vec{b}_i \in B$ is also represented by listing all bit encodings $\codeSketch_b(b_i^j)$, but in a slightly different way and, most importantly, as a gapped sequence (in the notation defined above):
$(\codeSketch_b(\vec{b}_i), \gaptuple_i) = \left(\prod_{j=1}^{d-1} (\# \stackrel{\leq 1}{\leftrightarrow} \codeSketch_b(b_i^j) \stackrel{\leq 1}{\leftrightarrow} \# \# \stackrel{\leq 3}{\leftrightarrow} \# \# \stackrel{\leq 3}{\leftrightarrow} \#)\right)\# \stackrel{\leq 1}{\leftrightarrow} \codeSketch_b(b_i^d) \stackrel{\leq 1}{\leftrightarrow} \#.$

It can be shown (see Appendix \ref{sec:MatchingLowerBound}) that if $\codeSketch_b(\vec{b}_i) \subseq_{e} \codeSketch_a(\vec{a}_{\ell})$ and $e$ satisfies $\gaptuple_i$, then the embedding $e$ maps each $\codeSketch_b(b_i^j)$ to the $\codeSketch_a(0)$ of $\codeSketch_a(\vec{a}_{\ell})$'s first track, or each $\codeSketch_b(b_i^j)$ to the $\codeSketch_a(a_{\ell}^j)$ of $\codeSketch_a(\vec{a}_{\ell})$'s second track, or each $\codeSketch_b(b_i^j)$ to the $\codeSketch_a(0)$ of $\codeSketch_a(\vec{a}_{\ell})$'s third track. More precisely, due to how we use the symbols $\#$, the factor $\codeSketch_b(b_i^1)$ must be mapped to $[_1\#\codeSketch_a(0) \#]_1$ or to $[_2\#\codeSketch_a(a_i^1) \#]_2$ or to $[_3 \#\codeSketch_a(0) \#]_3$. Since we have $4$ occurrences of $\#$ between each $\codeSketch_b(b_i^{j})$ and $\codeSketch_b(b_i^{j + 1})$, and between two consecutive parts of the same track in $\codeSketch_a(\vec{a}_i)$, all the following factors $\codeSketch_b(b_i^2), \codeSketch_b(b_i^3), \ldots$ must be mapped to the same track $\codeSketch_b(b_i^1)$ is mapped to. This is illustrated in Figure~\ref{fig:fittinggadgets}. Based on these considerations, it is clear that $\codeSketch_b(\vec{b}_i) \subseq_{e} \codeSketch_a(\vec{a}_{\ell})$ with $e$ mapping $\codeSketch_b(\vec{b}_i)$ to $\codeSketch_a(\vec{a}_{\ell})$'s second track is possible if and only if $\vec{a}_{\ell}$ and $\vec{b}_{i}$ are orthogonal.\par

\begin{figure}[h!]
\vspace*{-5pt}
    \centering
  \includegraphics[width=14cm]{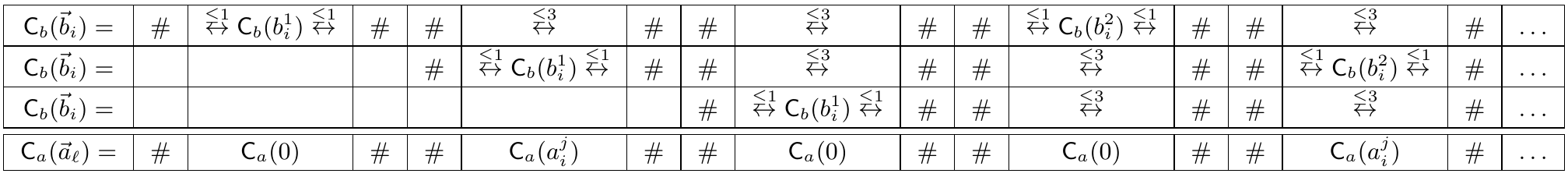}
\caption{Possible embeddings of $\codeSketch_b(\vec{b}_i)$ in $\codeSketch_a(\vec{a}_\ell)$, selecting its first, second, or third track.}
    \label{fig:fittinggadgets}
\end{figure}

The remaining challenge is to combine the gadgets $\codeSketch_a(\vec{a}_i)$ into a string $w$, and the gadgets $(\codeSketch_b(\vec{b}_i), \gaptuple_i)$ into a gapped sequence $(p, \gaptuple)$, such that $p \subseq_{e} w$ for an embedding $e$ satisfying $\gaptuple$ if and only if $e$ is such that every $(\codeSketch_b(\vec{b}_i), \gaptuple_i)$ is mapped to some $\codeSketch_a(\vec{a}_{\ell})$, and there is necessarily at least one pair $i, \ell \in [n]$ such that $(\codeSketch_b(\vec{b}_i), \gaptuple_i)$ is embedded into $\codeSketch_a(\vec{a}_{\ell})$'s second track. We next define $w$ and $(p, \gaptuple)$, and then discuss why they satisfy the property from above:\looseness=-1

\hspace*{1.2cm}
$w = \left (\prod_{i=1}^{n-1} @ \codeSketch_a(\vec{a}_i) \right) @ \codeSketch_a(\vec{a}_n)\left (\prod_{i=1}^{n-1} @ \codeSketch_a(\vec{a}_i) \right) @\,,$\\
\hspace*{1.1cm}$(p,\gaptuple) = @\stackrel{\leq 5}{\leftrightarrow} \left( \prod_{j=1}^{n-1} \codeSketch_b(\vec{b}_j) \stackrel{\leq 1}{\leftrightarrow} \# \stackrel{\leq 3}{\leftrightarrow} \#  \stackrel{\leq 1}{\leftrightarrow} \# \stackrel{\leq 3}{\leftrightarrow} \# \stackrel{\leq 6}{\leftrightarrow} \right) \codeSketch_b(\vec{b}_n) \stackrel{\leq 5}{\leftrightarrow} @.$

If $p \subseq_{e} w$ for an embedding $e$ satisfying $\gaptuple$, then the first $@$-symbol of $p$ is mapped to the $@$-symbol of $w$ occurring before an occurrence of $\codeSketch_a(\vec{a}_{\ell_1})$ for some $\ell_1$, and this occurrence is in the prefix $\left (\prod_{i=1}^{n-1} @ \codeSketch_a(\vec{a}_i) \right)$ of $w$. By reasoning about the occurrences of symbols $\#$ and the length constraints (see Appendix \ref{sec:MatchingLowerBound}), we can show that $\codeSketch_b(\vec{b}_1)$ must be embedded in $\codeSketch_a(\vec{a}_{\ell_1})$ in the way discussed above (i.\,e., $\gaptuple_1$ is satisfied and $\codeSketch_b(\vec{b}_1)$ is entirely mapped to some track $q \in \{1, 2, 3\}$ of $\codeSketch_a(\vec{a}_{\ell_1})$). For simplicity, assume that $\ell_1 \leq n - 1$. The factor $\stackrel{\leq 1}{\leftrightarrow} \# \stackrel{\leq 3}{\leftrightarrow} \#  \stackrel{\leq 1}{\leftrightarrow} \# \stackrel{\leq 3}{\leftrightarrow} \# \stackrel{\leq 6}{\leftrightarrow}$ between $(\codeSketch_b(\vec{b}_1),\gaptuple_1)$ and the next part $(\codeSketch_b(\vec{b}_2),\gaptuple_2)$ will enforce that $\codeSketch_b(\vec{b}_2)$ is embedded in $\codeSketch_a(\vec{a}_{\ell_1 + 1})$, and, moreover, it will be mapped to $\codeSketch_a(\vec{a}_{\ell_1 + 1})$'s track $q$ or $q+1$ (as there can be at most $18$ symbols between $\codeSketch_b(\vec{b}_1)$ and $\codeSketch_b(\vec{b}_2)$, track $3$ cannot be reached in the case $q=1$). \par
By repeating this argument, we can show that if $(\codeSketch_b(\vec{b}_j),\gaptuple_j)$ is embedded in track $s$ of $\codeSketch_a(\vec{a}_{\ell_{j}})$ (with $\ell_j \leq n - 1$), then $(\codeSketch_b(\vec{b}_{j+1}),\gaptuple_{j+1})$ is embedded in track $s$ or $s + 1$ of $\codeSketch_a(\vec{a}_{\ell_{j} + 1})$ in case that $s \in \{1, 2\}$, and it is necessarily embedded in track $s$ of $\codeSketch_a(\vec{a}_{\ell_{j} + 1})$ in case that $s = 3$. If $\ell_j = n$, then analogously $(\codeSketch_b(\vec{b}_{j+1}),\gaptuple_{j+1})$ is mapped to $\codeSketch_a(\vec{a}_{1})$ of $w$'s suffix $(\prod_{i=1}^{n-1} @ \codeSketch_a(\vec{a}_i)) @$. 
Consequently, each $\codeSketch_b(\vec{b}_j)$ is mapped to a track of $\codeSketch_a(\vec{a}_{\ell_j})$, and the tracks to which these $\codeSketch_b(\vec{b}_j)$ are mapped may start with track $1$ or $2$, and then can only increase until we possibly map some $\codeSketch_b(\vec{b}_j)$ to track $3$. However, after having mapped the last occurrence of $\#$ in $\codeSketch_b(\vec{b}_n)$ to an occurrence of $\#$ in $\codeSketch_a(\vec{a}_{\ell_n})$, we can afford a gap of length at most $5$ before mapping the last symbol $@$ of $(p, \gaptuple)$ to an occurence of $@$ in $w$. By the structure of $(p, \gaptuple)$ and $w$, this is only possible if $\codeSketch_b(\vec{b}_n)$ is mapped to track $2$ or $3$ of $\codeSketch_a(\vec{a}_{\ell_n})$. \looseness=-1 \par
We conclude that if $p \subseq_{\gaptuple} w$, then, for some $j, \ell_j \in [n]$, $(\codeSketch_b(\vec{b}_j), \gaptuple_j)$ is mapped to track $2$ of $\codeSketch_a(\vec{a}_{\ell_j})$; thus, $\vec{a}_{\ell_j}$ and $\vec{b}_{j}$ are orthogonal. On the other hand, the explanations from above show that if $\vec{a}_{\ell_j}$ and $\vec{b}_{j}$ are orthogonal vectors, then $p$ can be embedded into $w$ by an embedding that satisfies $\gaptuple$, i.\,e., an embedding that maps $(\codeSketch_b(\vec{b}_j), \gaptuple_j)$ to track $2$ of $\codeSketch_a(\vec{a}_{\ell_j})$, all $(\codeSketch_b(\vec{b}_{j'}), \gaptuple_{j'})$ with $1 \leq j' < j$ to the first tracks of some $\codeSketch_a(\vec{a}_{\ell_{j'}})$, and all $(\codeSketch_b(\vec{b}_{j'}), \gaptuple_{j'})$ with $j < j' \leq n$ to the third tracks of some $\codeSketch_a(\vec{a}_{\ell_{j'}})$. \par
In this reduction, we have $|\Sigma| = 4$, all constraints are $(0, \ell)$ with $\ell \leq 6$, and $|w|, |p| \in \Theta(nd)$. If $\matchProb$ can be solved in $O(|w|^g|p|^h)$ with $g+h=2-\epsilon$ for some $\epsilon>0$, then $\OV$ can be solved in $O(nd + (nd)^{2-\epsilon})$. Since $\nz(\gaptuple) \in \Theta(nd)$, solving $\matchProb$ in $O(|w|^g\nz(\gaptuple)^h)$ with $g+h=2-\epsilon$ for some $\epsilon<0$ also contradicts $\OVH$.
\end{proof}

We emphasise that, according to our proof, these lower bounds hold for $\matchProb$ with length constraints even if we only have constant upper bounds on the length of the gaps.

\begin{corollary}\label{lowerBoundRegular}
$\matchProb$ with regular constraints
cannot be solved in $\mathcal{O}(|w|^h \states(\gaptuple_p)^g)$ time with $h+g= 2-\epsilon$ for some $\epsilon>0$, unless $\OVH$ fails. This holds even if $|\Sigma| = 4$ and all regular constraints are expressed by constant size DFAs.
\end{corollary}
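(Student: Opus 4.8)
The plan is to obtain Corollary~\ref{lowerBoundRegular} as a direct consequence of the reduction already constructed in the proof of Theorem~\ref{lowerBoundLength}, by simply reinterpreting its length constraints as regular constraints. Recall that in that reduction every gap constraint is either the zero-gap $(0,0)$ (i.e., $\{\emptyword\}$) or a length constraint of the form $(0,\ell)$ with $\ell \leq 6$, the alphabet is $\Sigma = \{0,1,\#,@\}$, and the produced instance satisfies $|w| \in \Theta(nd)$ and $\nz(\gaptuple) \in \Theta(nd)$. Since regular constraints subsume length constraints, I would take exactly the same strings $w$ and $p$ and replace each constraint by an equivalent $\DFA$.

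The key step is to check that these $\DFA$s are of constant size. For a fixed $\ell$, the language $\{v \in \Sigma^* \mid |v| \leq \ell\}$ defining the constraint $(0,\ell)$ is accepted by a $\DFA$ that reads symbols while counting up to $\ell$ and moves to a (rejecting) sink once more than $\ell$ symbols have been consumed; this $\DFA$ has $\ell + 2$ states over the $4$-letter alphabet $\Sigma$. As $\ell \leq 6$, each such $\DFA$ has at most $8$ states, i.e., constant size, so every non-zero-gap constraint of the rebuilt instance is a regular constraint given by a constant-size $\DFA$. The zero-gaps $(0,0)$ remain $\{\emptyword\}$ and do not contribute to $\states(\gaptuple)$. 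Because the underlying languages are unchanged, the reduction is still correct: $p \subseq_{\gaptuple} w$ if and only if the original $\OV$-instance has an orthogonal pair, exactly as argued in the proof of Theorem~\ref{lowerBoundLength}.

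It remains to track the parameters. The rebuilt instance has $\nz(\gaptuple) \in \Theta(nd)$ non-zero gaps, each realised by a $\DFA$ with $O(1)$ states, so $\states(\gaptuple) = \Theta(\nz(\gaptuple)) = \Theta(nd)$, while still $|w| \in \Theta(nd)$. Hence an algorithm solving $\matchProb$ with regular constraints in time $O(|w|^h \states(\gaptuple)^g)$ with $h + g = 2 - \epsilon$ would run on these instances in time $O((nd)^{2-\epsilon}) = O(n^{2-\epsilon}\poly(d))$, which (together with the linear-time construction of the instance) would decide $\OV$ in $O(n^{2-\epsilon}\poly(d))$ time and contradict $\OVH$. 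The only thing that needs genuine care in this argument is the twofold bookkeeping — confirming that the counting $\DFA$s really are of constant size and that $\states(\gaptuple)$ therefore matches $|w|$ up to constant factors — but both are routine, so I expect no substantial obstacle beyond correctly invoking the already-established reduction.
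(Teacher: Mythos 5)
Your proposal is correct and follows essentially the same route as the paper's own proof: reuse the reduction from Theorem~\ref{lowerBoundLength} verbatim and replace each constant length constraint $(0,\ell)$, $\ell \leq 6$, by an equivalent counting $\DFA$ with $\ell+2$ states, so that $\states(\gaptuple) \in \Theta(\nz(\gaptuple)) = \Theta(nd)$ and the $\OVH$ contradiction carries over unchanged. Your bookkeeping about the zero-gaps not contributing to $\states(\gaptuple)$ and the constant size of the $\DFA$s matches the paper's argument exactly.
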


From Theorem~\ref{lowerBoundLength} and Corollary~\ref{lowerBoundRegular} we also get that $\matchProb$ with length, regular, or reg-len constraints cannot be solved in $\mathcal{O}(|w|^h |p|^g)$ time, with $h+g=2-\epsilon$, nor in $\mathcal{O}(|w|^{2-\epsilon})$ time. Moreover (see Appendix \ref{sec:lowerBoundsBinary}) we can show similar lower bounds for $|\Sigma|=2$ as well. \par

Compared to the $\OVH$-bound for regular expression matching of~\cite{BackursIndyk2016}, we provide a lower bound for a much more restricted problem (i.\,e., matching gapped sequences with length constraints, a subclass of regular expressions that still seems to have a significant practical relevance); thus, a stronger lower bound (this is also why our $\OV$-reduction has a significantly different structure and is technically more involved than that of~\cite{BackursIndyk2016}). In particular, our lower bound applies (unlike those from~\cite{BackursIndyk2016}) to the
case of matching variable length gap patterns, and settles the complexity of that problem. 
We wrap up this section by noting that Theorem \ref{lowerBoundLength} and Corollary \ref{lowerBoundRegular} show that (if $\OVH$ holds) the algorithm of Theorem \ref{constantPatternsRegularLength}, also when used for regular constraints or length constraints only, is optimal in the sense that there are no algorithms which can solve $\matchProb$ in the respective settings polynomially faster.\looseness=-1

\section{Analysis Problems for Gapped Subsequences}\label{sec:NonUniversality}

Let us recall that the \emph{universality}, \emph{containment} and \emph{equivalence problem} (denoted by $\uniProb$, $\contProb$ and $\equiProb$ for short) consist in deciding $\subseqSet{\gaptuple}{w} = \Sigma^{k}$, $\subseqSet{\gaptuple}{w} \subseteq \subseqSet{\gaptuple}{w'}$, and $\subseqSet{\gaptuple}{w} =  \subseqSet{\gaptuple}{w'}$, respectively, for a given $(k-1)$-tuple $\gaptuple$ of gap constraints and strings $w, w' \in \Sigma^*$. 
As mentioned before, these problems can be solved in polynomial time for classical subsequences (see Appendix \ref{sec:ConP} for further details). We show next that these problems become much harder for non-trivial length or regular constraints. \par
From Cor.~\ref{constantPatternsRL}~and~Thm.~\ref{constantPatternsRegularLength}, we can directly conclude the following brute-force upper bounds.

\begin{theorem}\label{upperBoundUnboundedAlphabets}
(1) The problems $\uniProb$, $\contProb$ and $\equiProb$ with length (or reg-len) constraints can be solved in time $\bigO(|\Sigma|^{k} \nz(\gaptuple) \ell)$ (respectively, $\bigO(|\Sigma|^{k} \states(\gaptuple) \ell)$), where $\ell = \max\{|w|, |w'|\}$.\\
(2) The problems $\uniProb_{\Sigma}$, $\contProb_{\Sigma}$ and $\equiProb_{\Sigma}$ with length (or reg-len) constraints can be solved in time $2^{\bigO(k)} \nz(\gaptuple) \ell$ (respectively, $2^{\bigO(k)} \states(\gaptuple) \ell$), where $\ell = \max\{|w|, |w'|\}$.
\end{theorem}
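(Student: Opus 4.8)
The plan is to reduce each of the three analysis problems to $\bigO(|\Sigma|^{k})$ instances of the matching problem $\matchProb$ and then invoke the matching algorithms of Corollary~\ref{constantPatternsRL} (for length constraints) and Theorem~\ref{constantPatternsRegularLength} (for reg-len, or regular, constraints). The enabling observation is that, since $|\gaptuple| = k-1$, every $\gaptuple$-subsequence of any string has length exactly $k$; hence both $\subseqSet{\gaptuple}{w}$ and $\subseqSet{\gaptuple}{w'}$ are subsets of $\Sigma^{k}$ and are completely determined by which of the $|\Sigma|^{k}$ words of length $k$ they contain. This reduces each problem to testing membership of each $p \in \Sigma^{k}$ in these sets, i.e. to matching queries $p \subseq_{\gaptuple} w$ and $p \subseq_{\gaptuple} w'$.

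First I would enumerate all words $p \in \Sigma^{k}$ and, for each, decide the relevant memberships via the matching algorithm. For $\uniProb$, since $\subseqSet{\gaptuple}{w} \subseteq \Sigma^{k}$ holds trivially, universality is equivalent to $\Sigma^{k} \subseteq \subseqSet{\gaptuple}{w}$, so I would output ``yes'' iff $p \subseq_{\gaptuple} w$ for every $p \in \Sigma^{k}$. For $\contProb$ I would check for each $p$ whether $p \subseq_{\gaptuple} w$ implies $p \subseq_{\gaptuple} w'$ (two matching queries per word), and for $\equiProb$ whether $p \subseq_{\gaptuple} w$ holds exactly when $p \subseq_{\gaptuple} w'$; in both cases I output ``yes'' iff the condition holds for all $p$. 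Correctness is immediate from the definitions of the three problems together with the characterisation of the subsequence sets above (and one may assume $k \leq \ell$, since otherwise no length-$k$ subsequence exists and all three problems are trivially decided).

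For the running time, each matching query with length constraints costs $\bigO(\ell\, \nz(\gaptuple))$ by Corollary~\ref{constantPatternsRL}(1), and each query with reg-len (or regular) constraints costs $\bigO(\ell\, \states(\gaptuple) + \size(\gaptuple))$ by Theorem~\ref{constantPatternsRegularLength}, where $\ell = \max\{|w|, |w'|\}$. Performing $\bigO(|\Sigma|^{k})$ such queries yields the bounds of part~(1). Part~(2) then follows by specialising to a fixed alphabet $\Sigma$: then $|\Sigma|^{k} = 2^{k \log |\Sigma|} = 2^{\bigO(k)}$, which turns the bounds of part~(1) into $2^{\bigO(k)} \nz(\gaptuple)\, \ell$ and $2^{\bigO(k)} \states(\gaptuple)\, \ell$, respectively.

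Since the whole argument is a direct brute-force enumeration layered on top of the matching algorithms, there is essentially no mathematical obstacle; the only point requiring care is the bookkeeping for the reg-len case, where one must ensure that the one-time preprocessing of the constraint automata (the $\size(\gaptuple)$ summand) is carried out once and shared across all $|\Sigma|^{k}$ queries rather than paid per query. Because $\size(\gaptuple) = \bigO(\states(\gaptuple)\,|\Sigma|)$ and $|\Sigma|^{k} \states(\gaptuple)\, \ell \geq |\Sigma|\,\states(\gaptuple)$ for $k,\ell \geq 1$, this shared preprocessing is dominated by the leading term, so it does not appear as a separate summand and the bound stays $\bigO(|\Sigma|^{k} \states(\gaptuple)\, \ell)$ as stated.
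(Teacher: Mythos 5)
Your proposal is correct and follows essentially the same route as the paper: enumerate all $|\Sigma|^{k}$ candidate words and decide membership in $\subseqSet{\gaptuple}{w}$ and $\subseqSet{\gaptuple}{w'}$ via the matching algorithms of Corollary~\ref{constantPatternsRL} and Theorem~\ref{constantPatternsRegularLength}, then specialise to fixed $\Sigma$ for part (2). Your extra remark about amortising the $\size(\gaptuple)$ preprocessing across queries is a reasonable bookkeeping detail that the paper leaves implicit.
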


We shall next complement these brute-force upper bounds by suitable lower bounds, which demonstrate that significantly faster algorithms are unlikely to exist. For convenience, we state our complexity results for the complement problems, i.\,e., \emph{non-universality problem} ($\nuniProb$), \emph{non-containment problem} ($\ncontProb$), and \emph{non-equivalence problem} ($\nequiProb$). Moreover, we state the lower bounds for the case of length constraints only. By simply interpreting the length constraints as regular constraints, all the lower bounds also apply to the case of regular constraints (this does not cause an exponential size increase of the instances, see Appendix \ref{sec:instances}).\looseness=-1\par 
Our first result establishes the general $\npclass$-completeness (even for small constant alphabets and length constraints), and that the exponent $\bigO(k)$ of Theorem~\ref{upperBoundUnboundedAlphabets}(2) cannot be significantly improved, unless ETH or SETH fail. We will discuss some proof ideas later on. \looseness=-1

\begin{theorem}\label{HardnessNonUniversalityBounded}
For every fixed alphabet $\Sigma$ with $|\Sigma| \geq 3$, $\nuniProb_{\Sigma}$, $\ncontProb_{\Sigma}$ and $\nequiProb_{\Sigma}$ with length constraints are $\npclass$-complete, even if all length constraints are $(1, 5)$. Moreover,
\begin{itemize}
\item they cannot be solved in subexponential time $2^{\smallO(k)} \poly(|w|, k))$ (unless ETH fails), 
\item they cannot be solved in time $\bigO(2^{k(1-\epsilon)} \poly(|w|, k))$ (unless SETH fails).
\end{itemize}
\end{theorem}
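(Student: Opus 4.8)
The plan is to prove membership in $\npclass$ by a standard guess-and-check argument, and then to obtain both the hardness and the fine-grained lower bounds from a single reduction from $\SatProb$ whose output parameter $k$ is (essentially) the number of variables. For membership, note that all three problems are polynomially verifiable: a witness for $\nuniProb$ is a string $p \in \Sigma^k$ with $p \notin \subseqSet{\gaptuple}{w}$, for $\ncontProb$ a string $p \in \subseqSet{\gaptuple}{w}\setminus\subseqSet{\gaptuple}{w'}$, and for $\nequiProb$ a string lying in exactly one of the two sets; in each case the relevant (non\nobreakdash-)membership tests are decided in polynomial time by the matching algorithm of Theorem~\ref{constantPatternsRegularLength}.

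It therefore suffices to prove hardness for $\nuniProb_\Sigma$ and to transfer it to the other two problems. For this I would use a \emph{universal} string $u$ with $\subseqSet{\gaptuple}{u} = \Sigma^k$ under the constraints $(1,5)$; for small alphabets such a string exists and is short, e.g.\ $u = (\sigma_1\sigma_2\sigma_3)^{2k}$ guarantees that from any matched position every symbol can be reached with a gap in $[1,5]$. Then $\subseqSet{\gaptuple}{w} = \Sigma^k$ iff $\subseqSet{\gaptuple}{u} \subseteq \subseqSet{\gaptuple}{w}$ iff $\subseqSet{\gaptuple}{u} = \subseqSet{\gaptuple}{w}$, so the instance $(w,\gaptuple)$ of $\nuniProb$ maps to the instances $(u,w,\gaptuple)$ of $\ncontProb$ and $\nequiProb$ without changing $\Sigma$, the constraints, or $k$; hence all lower bounds transfer verbatim.

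For the hardness of $\nuniProb_\Sigma$ I would reduce from $\SatProb$ with variables $v_1,\dots,v_n$ and clauses $c_1,\dots,c_m$. I would encode a truth assignment $\alpha$ as a length-$k$ string $p_\alpha$ with one block per variable (using $0$ and $1$ for the value and the third symbol $\#$ only as a delimiter), and design $(w,\gaptuple)$ so that $p_\alpha \in \subseqSet{\gaptuple}{w}$ iff $\alpha$ falsifies some clause; consequently $\Sigma^k\setminus\subseqSet{\gaptuple}{w}$ consists exactly of the satisfying assignments, and non-universality holds iff $F$ is satisfiable. The word $w$ is built as a concatenation $G_1 S G_2 S \cdots S G_m$ of clause gadgets separated by a blocking segment $S$ too long to be crossed by a gap of length at most $5$; this forces any embedding of $p_\alpha$ to stay inside a single $G_j$, so that $p_\alpha \subseq_{\gaptuple} w$ iff $p_\alpha \subseq_{\gaptuple} G_j$ for some $j$. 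Each gadget $G_j$ is assembled region by region so that the region for $v_i$ offers, within the gap window, only the value $0$ if $v_i$ occurs positively in $c_j$, only $1$ if it occurs negatively, and both values otherwise; thus $G_j$ embeds precisely the assignments that make every literal of $c_j$ false. The lower bound $1$ prevents two symbols of $p_\alpha$ from collapsing onto one position, while the upper bound $5$ together with the delimiters forces the embedding to advance in lock-step through the regions; a further gadget absorbs all malformed strings so that they never become spurious witnesses.

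The main obstacle is the gadget engineering under the rigid constant constraints $(1,5)$: one must simultaneously (i) force the embedding to traverse a clause gadget's regions in order, matching exactly one symbol per variable, (ii) make each region accept precisely the values consistent with the clause being falsified, and (iii) rule out any advantage from straddling gadget boundaries or mis-aligning with delimiters — all while the only ``neutral'' symbol is $\#$ and every gap lies in $[1,5]$. The parameter bookkeeping then yields the stated bounds: the reduction produces $|w| = \poly(n,m)$ and $k = \bigO(n)$, so a $2^{\smallO(k)}\poly$ algorithm would solve $3$-$\SatProb$ in $2^{\smallO(n)}\poly$ time (using sparsification to assume $m = \bigO(n)$), contradicting $\ETH$, and $\npclass$-hardness follows as a by-product. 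For the sharp $\SETH$ bound I would refine the encoding so that the delimiter overhead is $\smallO(n)$ and $k = n + \smallO(n)$, whence a $\bigO(2^{k(1-\epsilon)}\poly)$ algorithm would decide $k'$-$\SatProb$ in $\bigO(2^{n(1-\epsilon')}\poly)$ for some $\epsilon' > 0$; keeping this near-optimal encoding compatible with the lock-step gadget design is the most delicate point.
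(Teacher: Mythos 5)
Your proposal follows essentially the same route as the paper: guess-and-check for membership, a reduction from $\SatProb$ in which each clause becomes a gadget listing exactly the falsifying values of each variable, gadgets separated by blocks too long for the $(1,5)$ gaps to cross, an extra gadget absorbing every length-$k$ string that is not a well-formed assignment (in the paper, all strings containing $\#$), and a short universal string to transfer hardness to $\ncontProb$ and $\nequiProb$. The only presentational difference is that the paper factors the reduction through an intermediate ``meta non-universality'' problem, which it then reuses for the $\wclass[1]$-hardness result. One remark on the point you single out as most delicate: in the paper's construction the pattern contains \emph{no} delimiter symbols at all --- each variable occupies exactly one position of the subsequence, $\#$ appears only in the text $w$ (as the separator $\#^{m-1}$ inside gadgets and $\#^{3m}$ between them), and the absorber gadget handles every string with an occurrence of $\#$ --- so $k$ equals the number of variables exactly and the $\SETH$ bound needs no refinement of the encoding; your phrasing ``one block per variable'' with $\#$ as a delimiter risks $k = 2n$, which would lose the $\bigO(2^{k(1-\epsilon)})$ bound, so make sure the delimiters live only in $w$.
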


This directly leads to the question whether these problems are tractable if $|\Sigma| \leq 2$. This is obviously true for unary alphabet $\Sigma = \{\ta\}$ (note that in this case, $\subseqSet{\gaptuple}{w} = \Sigma^k = \{\ta^k\}$ if $(\sum_{i \in [k]} \lowerBoundShort{i}) + k \leq |w|$, and $\subseqSet{\gaptuple}{w} = \emptyset$ otherwise), but $\npclass$-complete for $|\Sigma| = 2$:

\begin{theorem}\label{binaryCaseHardnessTheorem}
For every fixed alphabet $\Sigma$ with $|\Sigma| = 2$, $\nuniProb_{\Sigma}$, $\ncontProb_{\Sigma}$ and $\nequiProb_{\Sigma}$ with length constraints are $\npclass$-complete even if each length constraint is $(0, 0)$ or $(3, 9)$. 
\end{theorem}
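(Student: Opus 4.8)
The plan is to settle membership in $\npclass$ uniformly for all three problems, and then to obtain $\npclass$-hardness by encoding the ternary instances of Theorem~\ref{HardnessNonUniversalityBounded} over a binary alphabet. For membership, observe that a witness for each of $\nuniProb_{\Sigma}$, $\ncontProb_{\Sigma}$, $\nequiProb_{\Sigma}$ is a single word $p \in \Sigma^{k}$: for $\nuniProb$ it certifies $p \notin \subseqSet{\gaptuple}{w}$, for $\ncontProb$ it lies in $\subseqSet{\gaptuple}{w} \setminus \subseqSet{\gaptuple}{w'}$, and for $\nequiProb$ it lies in the symmetric difference of $\subseqSet{\gaptuple}{w}$ and $\subseqSet{\gaptuple}{w'}$. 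Since $|p| = k = |\gaptuple|+1$ is polynomial in the input size, and since membership $p \in \subseqSet{\gaptuple}{w}$ is exactly an instance of $\matchProb$ and hence decidable in polynomial time by Corollary~\ref{constantPatternsRL}, every such witness can be verified in polynomial time. Thus all three problems lie in $\npclass$, and it remains to prove hardness.

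For hardness I would reduce from $\nuniProb_{\Sigma_3}$ with $\Sigma_3 = \{\ta,\tb,\tc\}$ and all constraints $(1,5)$, which is $\npclass$-hard by Theorem~\ref{HardnessNonUniversalityBounded}. The idea is to realise every ternary symbol by a short fixed binary \emph{gadget} and to carry the length information of the ternary gaps into the binary gaps. Concretely, I would fix a binary codeword for each symbol of $\Sigma_3$, glue the bits of a gadget together using $(0,0)$ constraints (so that a gadget of $p$ is forced onto a contiguous factor of the binary text $w$), and use the $(3,9)$ constraints between gadgets so that a ternary gap of length in $[1,5]$ is realised precisely by the admissible binary gaps between consecutive gadget occurrences. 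The binary text is then the concatenation of the gadgets encoding the ternary text, with suitable spacing, and the binary constraint tuple is the induced sequence of $(0,0)$ and $(3,9)$ constraints. The target correspondence is: a binary word $p$ that is a valid concatenation of codewords embeds into $w$ iff the ternary word it encodes is a $\gaptuple$-subsequence of the ternary text, so that binary non-universality mirrors ternary non-universality.

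Having established hardness of $\nuniProb_{\Sigma}$, I would lift it to $\ncontProb_{\Sigma}$ and $\nequiProb_{\Sigma}$ in the usual way: for the tuple $\gaptuple$ produced above, build a polynomial-size word $w_U$ with $\subseqSet{\gaptuple}{w_U} = \Sigma^{k}$ and output $(w_U, w)$, since then both $\subseqSet{\gaptuple}{w_U} \not\subseteq \subseqSet{\gaptuple}{w}$ and $\subseqSet{\gaptuple}{w_U} \neq \subseqSet{\gaptuple}{w}$ hold iff $\subseqSet{\gaptuple}{w} \neq \Sigma^{k}$. The main obstacle, and the reason the admissible constants are so constrained, is making the binary encoding correct against \emph{all} of $\Sigma^{k}$ rather than only against codeword concatenations. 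A binary word may decompose into gadget-length factors that are not codewords (``garbage''), and an embedding may attempt to straddle gadget boundaries; moreover the very existence of the universal word $w_U$ depends on the $(3,9)$ windows being wide enough to reach every needed factor at each step. I would resolve the garbage words by arranging the text so that every such word is always embeddable (hence never a non-universality witness), and resolve boundary-straddling by choosing the code and the $(0,0)$/$(3,9)$ spacing so that, once the first gadget of $p$ is aligned, all later gadgets are forced onto gadget boundaries. The delicate part — where a careful, case-by-case analysis of the possible embeddings is unavoidable — is checking that the single choice of constants $(3,9)$ and the fixed codeword length simultaneously reproduce the ternary $(1,5)$ behaviour, forbid boundary-straddling, and keep all garbage words (and the universal word $w_U$) within reach of the windows.
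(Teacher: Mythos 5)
Your NP-membership argument and the lifting from $\nuniProb$ to $\ncontProb$/$\nequiProb$ via a universal word are fine and match the paper. The hardness reduction, however, has a quantitative obstruction that I believe is fatal as proposed. If each symbol of $\Sigma_3$ is encoded by a binary codeword of length $L$, then $L \geq 2$ (two codewords of length $1$ cannot distinguish three symbols). For soundness you need codewords of the pattern to align only with the intended codeword occurrences in the binary text; these occurrences are pairwise disjoint length-$L$ factors appearing in the order of the ternary text. Consequently, a ternary gap of length $g$ (i.\,e., $g$ skipped ternary positions) forces a binary gap of length at least $gL$ between the images of consecutive pattern blocks. The ternary instances from Theorem~\ref{HardnessNonUniversalityBounded} genuinely require gaps up to $5$ (the gadget $T$ in that reduction realises gaps anywhere in $[m-1,3m-1]=[1,5]$), so completeness would require a binary gap of at least $5L \geq 10$ to be admissible, while the single inter-block constraint you allow is $(3,9)$. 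No choice of code length or spacing escapes this: the admissible binary window must simultaneously contain $1\cdot L + (\text{spacing terms})$ and $5\cdot L + (\text{spacing terms})$, i.\,e., have width at least $4L \geq 8$, whereas $(3,9)$ has width $6$. So a black-box block encoding of the ternary instance cannot reproduce the $(1,5)$ behaviour under the constants fixed in the statement; this is essentially why the paper remarks that routing through the three-letter construction ``seems not possible'' for the binary case.

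The paper instead gives a bespoke reduction directly from $\SatProb$ over $\{\ta,\tb\}$: each Boolean variable is represented by \emph{two adjacent} pattern positions glued by a $(0,0)$ constraint and forced (by the structure of the text) to read $\ta\ta$ or $\tb\tb$, with $(3,9)$ constraints between consecutive pairs; a clause $c_i$ becomes a text gadget $w_i$ whose $\gaptuple$-subsequences in $(\ta\ta\vee\tb\tb)^k$ are exactly the non-satisfying assignments, and a separate gadget $T$ realises exactly the ``garbage'' strings containing some pair in $\{\ta\tb,\tb\ta\}$. This sidesteps your obstruction because the gap ranges are chosen for the binary gadgets from scratch rather than inherited (and multiplied by $L$) from a prior instance. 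If you want to salvage your plan, you would have to redesign the intermediate instance so that only a gap range of width at most $\lfloor 6/L\rfloor$ is ever needed, which in effect forces you back to a direct construction of the paper's kind.
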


Let us now consider the case where $\Sigma$ is not treated as a constant. Theorem~\ref{upperBoundUnboundedAlphabets} means that $\nuniProb$, $\ncontProb$ and $\nequiProb$ with length constraints are trivially fixed parameter tractable if parameterised by both $|\Sigma|$ and $k$. Moreover, since $\ell = \max\{|w|, |w'|\}$ bounds both $|\Sigma|$ and $k$, we also have fixed parameter tractability with respect to $\ell$ for trivial reasons. Are the problems fixed-parameter tractable with respect to the single parameter $|\Sigma|$ or the single parameter $k$? With respect to $|\Sigma|$, this is answered in the negative by Theorem~\ref{binaryCaseHardnessTheorem} (unless $\pclass = \npclass$). With respect to parameter $k$, the following result gives a negative answer as well.

\begin{theorem}\label{HardnessNonUniversalityUnbounded}
Problems $\nuniProb$, $\ncontProb$ and $\nequiProb$ with length constraints cannot be solved in running time $\bigO(f(k) \poly(|w|, k))$ for any computable function $f$ (unless $\fptclass = \wclass[1]$). 
\end{theorem}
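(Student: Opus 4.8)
The plan is to prove $\wclass[1]$-hardness with respect to the parameter $k$ (the length of the subsequences) by a parameterized reduction from $\kCliqueProb$, indeed its multicoloured variant \textsc{Multicoloured Clique}, which is $\wclass[1]$-complete when parameterized by the clique size. Note first that the unbounded alphabet is essential here: by Theorem~\ref{upperBoundUnboundedAlphabets}(2), for every fixed $\Sigma$ the problems are solvable in time $2^{\bigO(k)} \nz(\gaptuple)\ell$, which is already fpt with respect to $k$. Our reduction will therefore use an alphabet whose size grows with the number of vertices $n$ of the input graph, while keeping the subsequence length $k$ bounded by a function of the clique size $\kappa$ alone (we aim for $k \in \bigO(\kappa^2)$), so that a computable map $\kappa \mapsto k$ exists; together with a polynomial-time construction this yields a valid parameterized reduction, whence an $\bigO(f(k)\poly(\len{w},k))$ algorithm for $\nuniProb$ would give an fpt-algorithm for \textsc{Multicoloured Clique}, forcing $\fptclass = \wclass[1]$.

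Given a graph $G$ with class partition $V = V_1 \uplus \cdots \uplus V_\kappa$ and the task of selecting one vertex per class so that the chosen vertices are pairwise adjacent, I would build a string $w$ and a $(k-1)$-tuple $\gaptuple$ of length constraints over an alphabet containing one symbol per vertex (plus a constant number of separators), so that $\Sigma^{k} \setminus \subseqSet{\gaptuple}{w} \neq \emptyset$ holds precisely when $G$ admits such a multicoloured clique. The guiding principle is to exploit the quantifier structure of the subsequence relation: a string $s$ is a $\gaptuple$-subsequence of $w$ iff \emph{some} embedding meets all gap-length constraints, while $s$ is missing iff \emph{every} candidate embedding is blocked. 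I would encode a candidate clique as a witness string $s$ of length $k \in \bigO(\kappa^2)$ that records, for each pair $(i,j)$ of classes, the two endpoints to be checked, and design $w$ as a concatenation of gadgets (one verification region per class-pair) so that the length constraints force any embedding of $s$ to align each pair-slot of $s$ with the corresponding region of $w$. The gadgets are set up so that an embedding can be completed exactly when some examined pair is a non-edge, or when the vertices recorded for a class differ across its slots; hence $s$ is a $\gaptuple$-subsequence of $w$ iff $s$ fails to describe a clique, and the missing strings are exactly the encodings of multicoloured $\kappa$-cliques.

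The main obstacle will be the design of $w$ and the length constraints themselves, for two reasons. First, length constraints restrict only the lengths of gaps, not their content, so all adjacency information and all alignment must be encoded positionally, through the spacing of symbol occurrences inside $w$; making a witness verify all $\binom{\kappa}{2}$ adjacencies therefore forces $k \in \Theta(\kappa^2)$ pair-slots rather than $\kappa$, together with a consistency mechanism ensuring that the vertex chosen for a class is the same in every slot referring to it. Second, the completeness direction is delicate: I must ensure that \emph{every} non-clique witness (including malformed ones) is genuinely a $\gaptuple$-subsequence, so that $\subseqSet{\gaptuple}{w}$ covers all of $\Sigma^{k}$ except the clique encodings; this amounts to showing that a single defect -- a non-edge or an inconsistency -- always opens up at least one globally consistent embedding of the whole witness. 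Verifying both directions, and checking that the bounds in the length constraints can be taken as small fixed numbers, is where the bulk of the technical work lies.

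Finally, the statement for $\ncontProb$ and $\nequiProb$ follows from the one for $\nuniProb$ by the standard reductions: for the tuple $\gaptuple$ produced above one constructs in polynomial time a string $w'$ with $\subseqSet{\gaptuple}{w'} = \Sigma^{k}$ (a suitably padded listing of the alphabet), after which $\subseqSet{\gaptuple}{w} \neq \Sigma^{k}$ is equivalent to $\subseqSet{\gaptuple}{w'} \not\subseteq \subseqSet{\gaptuple}{w}$ and to $\subseqSet{\gaptuple}{w} \neq \subseqSet{\gaptuple}{w'}$. These transformations leave the parameter $k$ unchanged, so the $\wclass[1]$-hardness transfers to all three problems.
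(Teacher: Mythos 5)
Your overall strategy is the right one and matches the paper's in spirit: a parameterized reduction from a $\wclass[1]$-hard problem in which the subsequence length $k$ is bounded by a computable function of the source parameter while the alphabet grows with the instance, so that an $\bigO(f(k)\poly(|w|,k))$ algorithm would collapse $\fptclass$ and $\wclass[1]$; your observation that Theorem~\ref{upperBoundUnboundedAlphabets}(2) forces the alphabet to be unbounded is also correct. The concrete reduction differs, though. The paper starts from $\kISProb$ (after adding a loop to every vertex) and uses only $k$ slots, one per selected vertex: for every edge $e_i=(u,v)$ and every pair of positions $r\neq s$ it forms a ``row'' whose language is exactly the set of length-$k$ strings over $V$ carrying $(u,v)$ at positions $(r,s)$; the loops make strings with repeated symbols automatically covered, so the union of all rows misses a string precisely when that string spells a $k$-independent set. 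This is phrased via the intermediate problem $\metaNuniProb$ and then translated into a single word $K(W_1,\ldots,W_q)$ with length constraints $(m-1,3m-1)$, $m=|\Gamma|$, reusing the same machinery as the $\npclass$-hardness proof. Your multicoloured-clique encoding with $\Theta(\kappa^2)$ pair-slots is viable in principle, but it takes on an extra burden that the paper's choice avoids entirely: because each colour class appears in $\kappa-1$ slots, you must additionally cover, as $\gaptuple$-subsequences, every witness that is \emph{inconsistent} across slots, on top of the non-edge and malformed witnesses, which multiplies the gadget regions and the case analysis. Since the gadget design and both correctness directions are left as acknowledged open work, the proposal is a plan rather than a proof.

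One concrete point will defeat you if pursued as stated: you cannot hope that ``the bounds in the length constraints can be taken as small fixed numbers.'' As the paper notes immediately after the theorem, if all upper bounds are at most a constant $\ell$, one can enumerate the at most $\ell^{k-1}$ tuples of gap sizes and test each in polynomial time, giving an fpt-algorithm in $k$; hence a reduction producing constant length constraints cannot establish $\wclass[1]$-hardness unless $\fptclass=\wclass[1]$ already holds. The length constraints must grow with the instance (in the paper they are $(m-1,3m-1)$ with $m=|V|$), which is forced anyway because the gadget blocks must list entire vertex sets and therefore have length $\Theta(|V|)$. Keep the constraints instance-dependent and the rest of your plan can be carried out; your transfer to $\ncontProb$ and $\nequiProb$ via an explicitly $\gaptuple$-universal companion string is exactly how the paper concludes as well.
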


This result only holds for unbounded alphabets and length constraints. Indeed, for constant $\Sigma$ the brute-force algorithm is an fpt-algorithm with respect to $k$. Moreover, if the upper length constraints are bounded by some constant $\ell$, then we only have to enumerate at most $\ell^{k-1}$ candidate tuples of gap sizes and check whether one of them induces an embedding satisfying $\gaptuple$ with respect to $w$, which again would yield an fpt-algorithm with respect to $k$.

\textbf{Proof Ideas for the Lower Bounds.} We present some proof ideas and sketches for the lower bounds mentioned above. For convenience, we only consider the non-universality problem here. Full proof details can be found in Appendix \ref{sec:NonUniversalityAppendix}. \par
Theorem~\ref{HardnessNonUniversalityBounded} can be proven by a reduction from $\SatProb$. In order to get the ETH and SETH lower bounds, this reduction must yield instances with a $(k-1)$-tuple of gap constraints, where $k$ is exactly the number of Boolean variables. Theorem~\ref{HardnessNonUniversalityUnbounded} can be shown by a very similar reduction that starts from the standard parameterisation of the independent set problem. Both reductions can be conveniently defined by using a \emph{meta non-universality problem} ($\metaNuniProb$ for short) as an intermediate step, which we define next.\par
Let $\Gamma = \{b_1, b_2, \ldots, b_m\}$ be some alphabet, and let $q, k \in \mathbb{N}$. An instance of the problem is a $(q \times k)$-matrix with the entries $W_{i, j}$, which are subsets of $\Gamma$. For every $i \in [q]$, we associate with row $i$ of the matrix the language $\lang{W_i} = W_{i, 1} \cdot W_{i, 2} \cdots W_{i, k}$, i.\,e., we simply represent the elements of $W_{i, 1} \times W_{i, 2} \times \ldots \times W_{i, k}$ as length-$k$ strings over $\Gamma$ in the natural way. The question is then to decide whether $\cup_{i \in [q]} \lang{W_i} \neq \Gamma^k$ (see Figure~\ref{figure:exampleReductions} for an example).

\begin{figure}
\centering

\begin{equation*}
\begin{pmatrix}
\{\ta\} & \{\tb\} & \{\ta, \tc, \td\}\\
\{\tc, \td\} & \{\tb\} & \{\ta\}\\
\{\ta\} & \{\tb, \tc, \td, \te\} & \{\td\}\\
\{\te\} & \{\tb\} & \{\tc, \te\}
\end{pmatrix}
\hspace{1cm}
\begin{pmatrix}
\{0\} & \{1\} & \{0\} & \{0, 1\} & \{0, 1\} & \{0, 1\} &\\
\{1\} & \{1\} & \{0, 1\} & \{0, 1\} & \{0, 1\} & \{0\} &\\
\{0, 1\} & \{0, 1\} & \{0\} & \{0\} & \{0\} & \{0, 1\} &
\end{pmatrix}
\end{equation*}

\vspace*{-5pt}   
\caption{Left side: example instance of $\metaNuniProb$ for $\Gamma = \{\ta, \tb, \tc, \td, \te\}$, $q = 4$, and $k = 3$. Note that, e.\,g., $W_{3, 2} = \{\tb, \tc, \td, \te\}$ and $W_{4, 1} = \{\te\}$; moreover, $\lang{W_1} = \{\ta\} \cdot \{\tb\} \cdot \{\ta, \tc, \td\} = \{\ta \tb \ta, \ta \tb \tc, \ta \tb \td\}$. Since $\cup_{i \in [4]} \lang{W_i} \neq \Gamma^3$, this is a negative instance. Right side: the $\SatProb$-instance $c_1 = \{v_1, \neg v_2, v_3\}$, $c_2 = \{\neg v_1, \neg v_2, v_5\}$, $c_3 = \{v_3, v_4, v_5\}$ over the variables $\{v_1, v_2, \ldots, v_6\}$ as an instance of $\metaNuniProb$ . Note that $100010 \notin \cup_{i \in [3]} \lang{W_i}$; thus, $100010$ is a satisfying assignment.}
\label{figure:exampleReductions}
\vspace*{-15pt}
\end{figure}

We next discuss, how we can reduce $\SatProb$ to $\metaNuniProb$. Let $F = \{c_1, c_2, \ldots, c_q\}$ be a Boolean formula in CNF on variables $\{v_1, \ldots, v_k\}$ (i.\,e., $c_i \subseteq \{v_1, \neg v_1, \ldots, v_k, \neg v_k\}$). We define alphabet $\Gamma = \{0, 1\}$ and the $(q \times k)$-matrix with the entries $W_{i, j}$ as follows. For every $i \in [q]$ and $j \in [k]$, we define $W_{i, j} = \{0\}$, if $v_j \in c_i$, $W_{i, j} = \{1\}$, if $\neg v_j \in c_i$, and $W_{i, j} = \{0, 1\}$, if $\{v_j, \neg v_j\} \cap c_i = \emptyset$. It can be verified with moderate effort, that for every $i \in [q]$, $\lang{W_i}$ contains exactly the Boolean assignments that do not satisfy clause $c_i$.
Hence, $\cup_{i \in [q]} \lang{W_i} \neq \{0, 1\}^k$ if and only if $F$ is satisfiable (see Figure~\ref{figure:exampleReductions} for an example). \par
In a rather similar way, we can also phrase the independent set problem in terms of $\metaNuniProb$. For the independent set problem, we get an undirected graph $G = (V, E)$ with $|V| = n$ and $E = \{e_1, e_2, \ldots, e_m\}$, and a $k \in [|V|]$, and the question is whether $G$ has a $k$-independent set, i.\,e., a set $A \subseteq V$ with $|A| = k$ and $\{u, u'\} \notin E$ for every $u, u' \in A$ with $u \neq u'$. This can be expressed in terms of $\metaNuniProb$ as follows. We interpret the set $V$ of vertices as the alphabet $\Gamma$. We fix some bijection $\nu : \{(i, r, s) \in [m] \times [k] \times [k] \mid r \neq s\} \to [m k(k-1)]$. For every $i \in [m]$ with $e_i = (u, v)$, and every $r, s, j \in [k]$ with $r \neq s$, we define $W_{\nu(i, r, s), j} = \{u\}$, if $j = r$, $W_{\nu(i, r, s), j} = \{v\}$, if $j = s$, and $W_{\nu(i, r, s), j} = V$, else. For example, if $e_9 = (v_3, v_7)$ and $k = 4$, then row $\nu(9, 2, 4)$ of the matrix would be $V \:\: \{v_3\}\:\:V\:\:\{v_7\}$. \par
It is a bit more difficult to see why this reduction works. The idea is that we represent sets of vertices of cardinality \emph{at most} $k$ by length-$k$ strings over $V$ (note that sets of cardinality strictly less than $k$ can be represented by strings with repeated symbols). For every edge $(u, v)$ and for all pairs of positions $r, s \in [k]$, the language $\lang{W_{\nu(i, r, s)}} = W_{\nu(i, r, s), 1} W_{\nu(i, r, s), 2} \ldots W_{\nu(i, r, s), k}$ represented by row $\nu(i, r, s)$ of the matrix contains exactly the strings $w \in \Gamma^k$ with $(w[r], w[s]) = (u, v)$, i.\,e., strings that represent non-independent sets with edge $(u, v)$. For the example $e_9 = (v_3, v_7)$ and $k = 4$, we have $\lang{W_{\nu(9, 2, 4)}} = \{v_1 v_3 v_1 v_7,$ $v_2 v_3 v_1 v_7,$ $\ldots,$ $v_n v_3 v_1 v_7,$ $\ldots,$ $v_1 v_3 v_2 v_7,$ $v_2 v_3 v_2 v_7,$ $\ldots\}$. \par
This whole idea works only because, in our setting, we assume that every vertex has a loop since then strings $w$ of $V^k$ contain an edge $(w[r], w[s]) \in E$ for some $r, s \in [k]$ if and only if the corresponding set of vertices is not independent or of cardinality strictly less than $k$ (the latter is represented by a loop, i.\,e., $w[r] = w[s]$).
In summary, $G$ has a $k$-independent set if and only if not all length-$k$ strings are in $\bigcup_{i \in [m], r, s \in [k], r \neq s} \lang{W_{\nu(i, r, s)}}$. \par
The main technical challenge is to show a reduction from $\metaNuniProb$ to $\nuniProb$ with length constraints. We next give a sketch of this reduction.
Let $\Gamma = \{b_1, b_2, \ldots, b_m\}$, $q, k \in \mathbb{N}$, and, for every $i \in [q], j \in [k]$, let $W_{i, j} \subseteq \Gamma$. We transform this $\metaNuniProb$ instance into an instance of $\nuniProb$ with length constraints as follows. We first define the alphabet $\Sigma = \Gamma \cup \{\#\}$ (with $\# \notin \Gamma$). Then we define a $(k-1)$-tuple $\gaptuple = (C_1, C_2, \ldots, C_{k-1})$ of gap constraints with $C_i = (\lowerBoundShort{i}, \upperBoundShort{i}) = (m-1, 3m-1)$ for every $i \in [k-1]$ (recall that $m$ is $\Gamma$'s cardinality). To conclude the reduction, we have to construct a string $K(W_1, \ldots, W_q)$ over $\Sigma$, such that $\subseqSet{\gaptuple}{K(W_1, \ldots, W_q)} = \Sigma^k$ if and only if $\cup_{i \in [q]} \lang{W_i} = \Gamma^k$. We do this in several steps. \looseness=-1\par
For every $i \in [q]$ and $j \in [k]$, let $w_{i, j} \in \Gamma^*$ be some string representation of $W_{i, j}$, i.\,e., $\alphabet{w_{i, j}} = W_{i, j}$ and $|w_{i, j}| = |W_{i, j}| \leq m$. For every $i \in [q]$, we define the string\\
\hspace*{1,2cm}{$
S(W_{i}) = w_{i, 1} (\#)^{m-1} w_{i, 2} (\#)^{m-1} \ldots (\#)^{m-1} w_{i, k}\,.
$}

We can show that those $\gaptuple$-subsequences of $S(W_{i})$ that do not contain occurrences of symbol $\#$ must be mapped to $S(W_{i})$ in such a way that each $j \in [k]$ is mapped to $w_{i, j}$. More precisely, for every $i \in [q]$, we have that {$(\subseqSet{\gaptuple}{S(W_{i})} \cap \Gamma^*) = \lang{W_i}$.} \hfill ($\dagger$)\par

Next, we define a string $T$ whose purpose it is to contain \emph{all} $\gaptuple$-subsequences that contain at least one occurrence of $\#$.
For every $i \in [k]$, let $T_i = T_{i, 1} T_{i, 2} \ldots T_{i, k}$, where, for every $j \in [k] \setminus \{i\}$, $T_{i, j} = b_1 b_2 \ldots b_m \#^{m}$, and $T_{i, i} = \#^{m}$. We define $T$ by $T = T_1 (\#^{3m}) T_2 (\#^{3m}) \ldots (\#^{3m}) T_k$.
The idea here is that any $\gaptuple$-subsequence of $T$ must be mapped entirely into some $T_i$, which, due to the length constraints, forces position $i$ to be mapped to $T_{i, i} = \#^{m}$, i.\,e., to an occurrence of $\#$. More precisely, we have {$\subseqSet{\gaptuple}{T} = \{w \in \Sigma^k \mid |w|_{\#} \geq 1\}$.} \hfill ($\diamond$)

Finally, we set $K(W_{1}, \ldots, W_{q}) = T (\#^{3m}) S(W_1) (\#^{3m}) S(W_2) (\#^{3m}) \ldots (\#^{3m}) S(W_q)$. By using ($\dagger$) and ($\diamond$) from above,
we can now prove $\subseqSet{\gaptuple}{K(W_1, \ldots, W_q)} = \Sigma^{k}$ if and only if $\cup_{i \in [q]} \lang{W_{i}} = \Gamma^k$, which concludes the proof of correctness. \par

For proving Theorem~\ref{binaryCaseHardnessTheorem}, using $\metaNuniProb$ as an intermediate step seems not possible, since it introduces another symbol $\#$ to the alphabet. However, we can devise a similar reduction.
The main difference is that we represent each Boolean variable by two consecutive symbols of the subsequence, i.\,e., we need a $(2k-1)$ tuple of length constraints (therefore, the reduction does not yield a SETH bound as mentioned in Theorem~\ref{HardnessNonUniversalityBounded}). Since we cannot conveniently use a separator $\#$ that is not used for expressing Boolean assignments, the constructed string is more complicated in this reduction (see Appendix~\ref{sec:NonUniversalityAppendix} for full details).




\appendix 

%
%

\section{Regular and Length Constraints}

\subsection{On our choice  of Representing Regular Constraints By DFAs (and potential complications resulting from alternative representations)} \label{NFAvsDFA}

The question on how should one represent the regular constraints (both when they appear alone, and when they appear in conjunction with length constraints) seems a valid and interesting question to us. Natural options would have been NFAs, DFAs, or regular expressions (regexes). We have chosen to represent them with DFAs and, in the following, we argue that this is a reasonable (and not unusual) choice. 

{\bf Impact of the choice.} Before putting forward our argument, we note that this choice impacts only two results, namely Corollary \ref{constantPatternsRL}(2) and Theorem \ref{constantPatternsRegularLength} (the algorithmic results for matching subsequences with regular gap constraints). All the other results hold irrespective of the representation used for the regular language present in the constraints.

{\bf Motivation of the choice.} As said, we have chosen to represent the regular constraints as deterministic finite automata (DFAs), rather than representing them as regular expressions (regex) or non-deterministic finite automata (NFAs). This representation is not unusual when dealing with testing whether factors of words are in a regular language, see, e.\,g., the overview of the results of Imre Simon regarding factorization forests and their application to such problems \cite{Bojanczyk09} or the works related to sliding window algorithms for regular languages \cite{GanardiHKLM18,GanardiHL18,GanardiHL18b,GanardiHLS19}. 

Additionally, we note that in the case of regular constraints of constant complexity (i.\,e., where $O(1)$ states are needed for each regular constraint) there is no difference in the asymptotic complexity of our algorithms w.\,r.\,t. the chosen representation of the constraints: DFA, NFA, or regex. Moreover, all the lower bounds would still hold as stated. To the same end, it is worth noting that in some of the works on sliding window algorithms for regular languages, the DFAs for the respective regular languages are assumed to have constant size \cite{GanardiHLS19}; a deeper discussion of such restriction in that setting is made in \cite{GanardiHKLM18}. 

Finally, regular constraints of constant (and relatively small) complexity are quite usual in practice, as they can model simple constraints such as the presence/absence of some letters in a string, filtering according to the presence/absence of some constant strings, restrictions on the order in which some symbols appear in a word, etc. 

The survey \cite{amadini2021survey} and the references therein provide examples of {\em small} regular constraints (in the sense that they are accepted by automata with a small number of states, or described by short regexes) which appear and are relevant in the area of string solving, and, as such, formal verification. As string solving is an area in which regular constraints on strings play an important role, we have investigated Kaulza \cite{saxena2010symbolic,KaluzaWeb,liang2016efficient}, one the standard benchmarks containing string constraints, developed based on practical applications of string solving (in particular, symbolic execution), and usually used in the evaluation of string solvers (as mentioned in \cite{zaligVinderJournal}). In this investigation, we have focused on the regular constraints (appearing alone or in conjunction with length constraints or other types of constraints) and their complexity (for simplicity, we present here the length of the regex specifying them and the number of states in a minimal DFA accepting them). We have used the BASC tool (\url{https://b4sc.github.io/}) to extract the wanted information from the respective benchmark, and have obtained the following results.
\begin{itemize}
\item The Kaluza benchmark contains 47305 instances, out of which 20740, that is around $43\%$, contain regular constraints. 
\item In total, there are 207038 regular constraints (specified as regular expressions) appearing in these instances (there can, of course, be more constraints in each instance, sometimes even more constraints for the same variable). 
\item All these regular expressions have length at most 20. The average length of the regexes occurring in Kaluza is lower than $8$. The NFAs cannonically obtained from these regexes have, in average, $17$ states. 
\item $99\%$ of the minimal DFAs corresponding to these regexes have at most $20$ states, and the average number of states in these minimal DFAs is lower than $11$.
\end{itemize}

{\bf DFAs, NFAs, regexes: differences in size (in theory).} Folklore results show that there are regular languages $L$ for which the size of the shortest regular expression describing $L$ is logarithmic w.\,r.\,t. the size of the minimal DFA accepting $L$. However, there are also regular languages $L$ for which the size of the DFA accepting $L$ is logarithmic w.\,r.\,t. the size of the shortest regular expression describing $L$ \cite[Example 23 in the Arxiv version]{GruberH15}.  So, among DFAs and regexes there is no representation of a regular language which is guaranteed to be exponentially smaller than the other one. As DFAs are particular cases of NFAs, it is clear that there is no advantage in choosing DFAs over NFAs; however, dealing with unrestricted NFAs induced some complications which we were not able to solve in the case of Theorem \ref{constantPatternsRegularLength} and obtain an algorithm which matches the lower bounds we were able to show (more on this below).

{\bf On the impacted results.} So, let us address now more throughly the two results which are impacted by the choice of the model used to describe regular gap constraints. 

In the case of Corollary \ref{constantPatternsRL}(2), one can easily show that $\matchProb$ with regular constraints can be solved in $O(|w|\size(\gaptuple))$ time when the constraints are specified as NFAs or regexes, using a dynamic programming strategy (this is discussed later in the Appendix). So, the difference is that, unsurprisingly, the total size of the automaton appears here instead of the number of states. 

The case of Theorem \ref{constantPatternsRegularLength} is more complicated (and it would make sense to revisit this paragraph after reading its complete proof, appearing later in this appendix). In that case, the regular constraints appear in conjunction with length constraints. The key idea used in our efficient algorithm solving $\matchProb$ is that the traces of the computations of each automaton encoding the constraints on suffixes of the word are essentially linear lists of pairs (position in the word, state of the automaton); that is, they do not fork, because the automaton is deterministic. This allows us to represent the totality of these traces as a forest of disjoint trees. This property is already important for the first generic step of our algorithm, but it is crucial for the later steps: in the case when the regular constraints would be given as regexes or NFAs, the first generic step would still produce a graph $G_{t+1}$, but this would be a directed acyclic graph, without the nice disjointness properties which are used later in the algorithm to get a good running time (for instance, the sets $S_q$ from the simpler variant described in the proof of Theorem \ref{constantPatternsRegularLength} would not be disjoint and managing them would be more time consuming). Hence, the data structures and ideas we use in the later steps of that algorithm (e.\,g., level ancestor data structures or our marking procedure), and which are custom-designed for such forests of disjoint trees, seem to need non-trivial adaptations and extensions to work for directed acyclic graphs  (as it would be the case when implementing the first generic step of the algorithm in the case of constraints defined by regexes or NFAs) within a similar rectangular complexity. 

It seems that this is also a good point to emphasise that both the more complicated data structures (w.\,r.\,t. those needed to get the results in Corollary \ref{constantPatternsRL}) as well as the usage DFAs in the framework of Theorem \ref{constantPatternsRegularLength} are a consequence of the interplay between regular and length constraints, which complicates the matching problem $\matchProb$ significantly. 

Thus, with respect to Theorem \ref{constantPatternsRegularLength}, we preferred to stick to the representation of regular constraints by DFAs, as, in this setting, we have obtained matching upper and lower bounds (in Theorem \ref{lowerBoundLength}) for the complexity of the matching problem, and we think that this is mathematically interesting. It is an interesting open problem whether a similar result can be obtained when specifying the constraints by NFAs (or regular expressions); there does not seem to be an easy way to modify the construction below to work for nondeterministic automata or regexes.

\subsection{Conjunctions of Regular {\bf and} Length Constraints?}\label{ap:RegAndLen}

After the discussion on the way regular constraints are represented, it is maybe a good point to also briefly mention that dealing with conjunctions of regular and length constraints is also not unusual at all. In our investigation of Kaluza string solving benchmark, where both these types of constraints are usual, we have noted the following.
\begin{itemize}
\item The Kaluza benchmark contains $47305$ string solving instances. From these, $20740$ instances contain regular constraints (approximatively $43\%$) and $21246$ instances contain length constraints (approximatively $44\%$). 
\item There is a large overlap between the instances containing regular and length constraints. There are $19812$ instances which contain both types of constraints. This corresponds to $95\%$ of the instances containing regular constraints, and to $93\%$ of the instances containing length constraints.
\end{itemize}
In general, length constraints can be more general than lower and upper bounds on the length of the variables (which intuitively correspond to the gaps in our setting); they can be linear relations between the lengths of variables and constants. Our result in Theorem \ref{constantPatternsLengthExtended} shows that $\matchProb$ becomes NP-hard in the respecting setting. However, a visual inspection of a random sample of the Kaluza instances showed that in many cases the length constraints already are, or can be immediately reduced, to the simpler case of only ower and upper bounds on the length of the variables. 

To conclude, at least when considering the representative Kaluza string solving benchmark, it is very common to have combinations of regular and length constraints and, at least to a certain extent, these are not very complex.

\section{Some Details Omitted From the Preliminaries}

\subsection{Computational Model}\label{sec:compModel}

The computational model we use is the standard unit-cost RAM with logarithmic word size: for an input of size $n$, each memory word can hold $\log n$ bits. Arithmetic and bitwise operations with numbers in $[1:n]$ are, thus, assumed to take $O(1)$ time. Numbers larger than $n$, with $\ell$ bits, are represented in $O(\ell/\log n)$ memory words, and working with them takes time proportional to the number of memory words on which they are represented. In all the problems, we assume that we are given one word $w$ (respectively, two words $w$ and $w'$), with $|w|=n$ (respectively, $|w|=n\geq |w'|=m$), over an alphabet $\Sigma=\{1,2,\ldots,\sigma\}$, with $|\Sigma|=\sigma\leq n$. Whenever a gapped sequence $(p,\gaptuple)$ is also part of the input of the problem, then $|p|\leq n$ and $p$ is also over the same alphabet $\Sigma$ defined above. The way $\gaptuple$ is given is discussed in the main part of the paper. 
So, in general, we assume that the processed words are sequences of integers (called letters or symbols), each fitting in $O(1)$ memory words. This is a common assumption in string algorithms: the input alphabet is said to be {\em an integer alphabet}. For a more detailed general discussion on this model see, e.\,g.,~\cite{crochemore}. 

\subsection{Basic Stringology Data Structures}

For a length-$n$ word $w$ we can build in $\bigO(n)$ time the \emph{suffix array} structure, as well as the \emph{longest common extension} (also known as {\em longest common prefix}) data structures allowing us to retrieve in constant time the length of the longest common prefix of any two suffxes $w[1..n]$ and $w[1..j]$ of $w$ (for details, see, e.\,g.,~\cite{Gusfield1997, KarkkainenEtAl2006,dinklage_et_al:LIPIcs:2020:12905}, and the references therein).

\section{Some Details About Gap Constrained Subsequences Omitted From The Main Part}

\subsection{Size Measure of Our Instances}\label{sec:instances}

\begin{remark}\label{integersRemark}
Let $(p,\gaptuple)$ be a gapped sequence over $\Sigma$, and let $w \in \Sigma^*$. If $|p| > |w|$, then $p$ is not a $\gaptuple$-subsequence of $w$, and $\subseqSet{\gaptuple}{w} = \emptyset$. Hence, we generally assume that for the matching problem we always have $|p| \leq |w|$.
If $(p,\gaptuple)$ is a gapped sequence with length constraints, then $\lowerBoundShort{i} > |w|$ for some $i \in [|p|-1]$ directly implies that $p$ is not a $\gaptuple$-subsequence of $w$, and $\subseqSet{\gaptuple}{w} = \emptyset$, and if $\upperBoundShort{i} > |w|$ for some $i \in [|p|-1]$, we could replace $\upperBoundShort{i}$ by $|w|$ to obtain an equivalent instance. In particular, note that we can check $\lowerBoundShort{i} > |w|$ and $\upperBoundShort{i} > |w|$ in linear time in $|w|$. Consequently, we shall also assume that $\lowerBoundShort{i} \leq \upperBoundShort{i} \leq |w|$ (or $\lowerBoundShort{i} \leq \upperBoundShort{i} \leq \min\{|w|, |w'|\}$ for the non-equivalence, non-containment and non-universality problems).

In particular, the remark above shows that in all the analysis problems we consider in Section \ref{sec:NonUniversality}, we can replace the length constraints by regular constraints without exponentially increasing the overall size of the input instance, which consists in one word $w$ (or two words $w$ and $w'$), and a gapped sequence $(p,\gaptuple)$ (or simply a tuple of gap constraints $\gaptuple$). Indeed, instead of the constraint $(\lowerBoundShort{i},\upperBoundShort{i})$ we can use the regular-constraint $C_i=\{u\in \Sigma^*\mid  \lowerBoundShort{i}\leq |u| \leq \upperBoundShort{i}\}$, which is accepted by a DFA with $O(\upperBoundShort{i})$ states. Clearly, this would lead to an exponential growth in the size of the string describing the respective gap constraint, but  as $\upperBoundShort{i}$ is always upper bounded by the length of the input string(s), the overall growth in the size of the string describing the input instance is just polynomial. 
\end{remark}

\subsection{Constraints for the Prefix and Suffix Gaps}
\label{prefixSuffix}
For a word $w$, with $|w|=m$, a $(k-1)$-tuple of gap constraints $\gaptuple$, and an embedding $e : [k] \to [m]$ which satisfies $\gaptuple$, we note that $\gaptuple$ only restricts the form of gaps induced by $e$ in $w$, i.\,e., of the factors $w[e(i)+1..e(i+1)-1]$, for $i\in [k-1]$. So, it is natural to ask how one could also restrict the factors $w[1..e(1)]$ and $w[e(k)..m]$. Our formalism can model this case too. 
Let $\gaptuple=(C_0,C_1,\ldots, C_{k-1},C_k)$ be a $(k+1)$-tuple of gap constraints for some $p \in \Sigma^k$, i.\,e., $C_0$ and $C_k$ are interpreted as constraints for the prefix $w[1..e(1)]$ and the suffix $w[e(k)..m]$ induced by some embedding $e$ with $p \subseq_e w$. We can describe this setting by defining $p' = \$ p \$$, where $\$ \notin \Sigma$ is a new symbol, and by interpreting $\gaptuple$ as a tuple $\gaptuple'$ of gap constraints for $p'$, i.\,e., a tuple of gap constraints that constrains the $k+1$ actual gaps of $p'$. Then, there is an embedding $e$ that embeds $p$ into $w$ such that $\gaptuple$ is satisfied (i.\,e., with the prefix and suffix gap constraints) if and only if there is an embedding $e'$ that embeds $p'$ into $w$ such that $\gaptuple'$ is satisfied in the sense defined in Section~\ref{sec:GapConstraints}.

In this context, we can revisit our comments on {\bf regular pattern matching} from Section \ref{sec:GapConstraints}. Let $\pi= w_0 x_1 w_1 \cdots x_k w_k $ be a regular (Angluin-style~\cite{Angluin80}) pattern, where $x_i$, for $i\in [k]$ are variables, and $w_i\in \Sigma^*$, for $i\in \{0\} \cup [k]$, are constant factors. We can determine whether a word $w$ matches $\pi$ exactly (so, whether we can replace $x_1,\ldots,x_k$ with words from $\Sigma^*$ such that $\pi$ becomes equal to $w$) as follows. We define the gapped-sequence $(p,\gaptuple)$ with $p=\$w_0w_1\cdots w_k\$$ where $\$$ is a new letter and $\gaptuple[i]=\Sigma^*$ (or $(0,+\infty)$ as length constraint), for all $i=1+\sum^t_{j=0} |w_j|$ with $0 \leq t \leq k$,  and $\gaptuple[i]=\{\emptyword\}$  (respectively, $(0,0)$, as length constraint) otherwise. Then we need to determine if an embedding $e$ exists which satisfies $\gaptuple$ w.\,r.\,t. $\$w\$$. Note that the addition of $\$$ forces the first (respectively, last) symbols of $p$ and $\$ w\$ $ to be aligned, and, as such, it enforces the exact matching of $w$ to $\pi$. Moreover, it also accounts for the case when one of $w_0$ or $w_k$ is not $\varepsilon$, and essentially allows us to impose gap constraints on the initial and final ``gaps'' occurring before and after the embedding of the constant factors of $\pi$ in $w$.

\subsection{Related Problems Expressed by Subsequences with Gap Constraints} \label{relatedProbs}
Let us discuss some examples that show how classical string matching or formal languages problems can be expressed in terms of subsequences with gap constraints.

\begin{itemize}
\item \textsf{String Matching}: To find all occurrences of a word $p$ in a word $w$, it is enough to define the gapped-sequence $(p,\gaptuple)$ with $\gaptuple[i]=\{\emptyword\}$ (or, $(0,0)$ as length constraint), for all $i\in [|p|-1]$, and find all the embeddings $e$ which satisfy $\gaptuple$ w.\,r.\,t. $w$. 

\item \textsf{Regular Pattern Matching}: 
Let $\pi= w_0 x_1 w_1 \cdots x_k w_k $ be a regular (Angluin-style~\cite{Angluin80}) pattern, where $x_i$, for $i\in [k]$, are variables, and $w_i\in \Sigma^+$, for $i\in \{0\} \cup [k]$. To find all  matches of $\pi$ to factors of $w$, we would need to find all assignments of the variables $x_1,\ldots,x_k$ to words from $\Sigma^*$ such that $\pi$ becomes equal to a factor of $w$. Therefore, we define the gapped-sequence $(p,\gaptuple)$ with $p=w_0w_1\cdots w_k$ and $\gaptuple[i]=\Sigma^*$ (respectively, $\gaptuple[i]=(0,+\infty)$ as length constraint), for all $i=\sum^t_{j=0} |w_j|$ with $0 \leq t \leq k-1$,  and $\gaptuple[i]=\{\emptyword\}$  (respectively, $(0,0)$, as length constraint) otherwise. Then we need to find all the embeddings $e$ which satisfy $\gaptuple$ w.\,r.\,t. $w$. See also Appendix \ref{prefixSuffix}.

\item \textsf{Partial Words} (also known as strings with don't cares \cite{Sadri2008}): Consider a partial word $\pi= w_0 \diamond^{\ell_1} w_1 \cdots \diamond^{\ell_k} w_k $, where $\diamond$ is a wildcard, which can be replaced by any letter of the alphabet $\Sigma$, $\ell_i \geq 1$ for all $i\in [k]$, and $w_i\in \Sigma^+$, for $i\in \{0\} \cup [k]$. To find all matches of $\pi$ to factors of $w$, we need to find a replacement of each $\diamond$ from $\pi$ by a letter of $\Sigma$ such that $\pi$ becomes equal to a factor of $w$. Therefore, we define the gapped-sequence $(p,\gaptuple)$ with $p=w_0w_1\cdots w_k$ and $\gaptuple[i]=\Sigma^{\ell_{t+1}}$ (respectively, $(\ell_{t+1}, \ell_{t+1})$ as length constraint), for all $i=\sum^t_{j=0} |w_i|$ with $0 \leq t \leq k-1$,  and $\gaptuple[i]=\{\emptyword\}$  (respectively, $(0,0)$, as length constraint) otherwise. Then we need to find all the embeddings $e$ which satisfy $\gaptuple$ w.\,r.\,t. $w$. \looseness=-1
\item \textsf{Generalisations of the Downward Closure and Simon's Congruence}: Consider the $(n-1)$-tuple of gap constraints $\gaptuple_n$ such that $\gaptuple_n[i] = \Sigma^*$ with $i \in [n-1]$. Then, $\subseqSet{\gaptuple_n}{w}$ is the set of all length-$n$ subsequences of a string $w$. Thus, in this context, the problem of checking $\subseqSet{\gaptuple}{w} = \subseqSet{\gaptuple}{w'} $ for some $\gaptuple$ is a natural extension of the well-known Simon's congruence (see the introduction).
Moreover, we also note that the \emph{downward closure} of $\{w\}$, i.\,e., the set of subsequences of $w$, is $\cup_{n\leq |w|} \subseqSet{\gaptuple_n}{w}$. This can be extended to languages, and we can define the \emph{downward closure} of $L$ as $\cup_{w\in L} \left( \cup_{n\leq |w|} \subseqSet{\gaptuple_n}{w}\right)$.  Hence, considering a set ${\mathcal C}$ of (more complicated)  gap constraints, the set $\cup_{w\in L} \left( \cup_{\gaptuple \in {\mathcal C}} \subseqSet{\gaptuple}{w}\right)$ can be seen as a generalisation of the downward closure of languages, the \emph{downward closure induced by} ${\mathcal C}$. It is interesting to note that, although sometimes it cannot be computed (see \cite{Zetzsche16}, and the references therein), the downward closure of any language is necessarily regular \cite{HAINES196994}. When considering the downward closure induced by a finite set ${\mathcal C}$ of gap constraints, then $\cup_{w\in L} \left( \cup_{\gaptuple \in {\mathcal C}} \subseqSet{\gaptuple}{w}\right)$ is also finite. However, if ${\mathcal C}$ is not finite, then $\cup_{w\in L} \left( \cup_{\gaptuple \in {\mathcal C}} \subseqSet{\gaptuple}{w}\right)$ is not necessarily regular anymore. Indeed, if ${\mathcal C}$ is the set of $n$-tuples of gap constraints $\gaptuple'_n$ such that $\gaptuple'_n[i] = \{\emptyword\}$, with $i \in [n-1]$, then the downward closure of $L$ induced by ${\mathcal C}$ describes the set of factors of the words of $L$, which, in general, is not necessarily regular. 
We consider it an interesting open question to understand for which sets ${\mathcal C}$ of gap constraints the induced downward closure is a regular language for all languages.
\end{itemize}

\section{Proof Details Omitted From Section~\ref{sec:matching}}\label{sec:MatchingAlgo}

Before starting the proof of Theorem~\ref{constantPatternsRegularLength}, let us state a general result which would serve as an example for a basic approach for $\matchProb$ for all the cases approached in this paper.

\begin{proposition}
$\matchProb$ with polynomial constraints can be solved in polynomial time.
\end{proposition}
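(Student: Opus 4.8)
The plan is to interpret \emph{polynomial constraints} as the condition that each gap language $C_i$ is represented so that, given any word $v$, membership $v \in C_i$ can be decided in time polynomial in $|v|$ plus the size of the representation of $C_i$. Under this reading the statement follows from a straightforward dynamic programming algorithm, essentially the skeleton underlying Theorem~\ref{constantPatternsRegularLength} but with the specialised automaton machinery replaced by black-box membership queries. Write $n = |w|$ and $m = |p|$, and recall we may assume $m \le n$ (Remark~\ref{integersRemark}). I would compute a boolean array $D$ indexed by $[m] \times [n]$ with the intended meaning that $D[i][j] = 1$ if and only if there is an embedding $e \colon [i] \to [j]$ with $e(i) = j$ inducing $\subsequence{w[1..j]}{e} = p[1..i]$ and satisfying the first $i-1$ gap constraints $(C_1, \ldots, C_{i-1})$ with respect to $w[1..j]$.

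The base case is $D[1][j] = 1$ iff $w[j] = p[1]$. For the transition, for $i \in [m-1]$ I would set $D[i+1][j] = 1$ iff $w[j] = p[i+1]$ and there exists some $j' < j$ with $D[i][j'] = 1$ and $w[j'+1 .. j-1] \in C_i$. Finally I would answer that $p \subseq_{\gaptuple} w$ exactly when $D[m][j] = 1$ for some $j \in [n]$. Correctness is immediate by induction on $i$: a valid embedding of $p[1..i+1]$ ending at position $j$ decomposes uniquely as a valid embedding of $p[1..i]$ ending at some $j' < j$ together with a gap $w[j'+1..j-1]$ that must lie in $C_i$, and conversely any such decomposition assembles into a valid embedding.

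For the complexity, the array has $O(mn)$ cells; computing each cell $D[i+1][j]$ scans the $O(n)$ candidate predecessors $j'$ and, for each, issues one membership query on the factor $w[j'+1..j-1]$. Since every such factor has length at most $n$, each query runs in $\poly(n)$ time by the polynomiality assumption, so the total running time is $O(mn^2) \cdot \poly(n)$, which is polynomial in the input size. (One could equivalently precompute, for each $i$ and each pair $(a,b)$, whether $w[a..b] \in C_i$, using $O(mn^2)$ membership queries, and then run the DP over this precomputed table; this makes the polynomial bound fully transparent.)

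There is no genuine combinatorial obstacle here — the proposition is a baseline sanity check rather than a hard result. The only point that merits a sentence of care is the two-fold bookkeeping behind the word ``polynomial'': one must observe both that the \emph{number} of membership queries is polynomially bounded (it is $O(mn^2)$, since only factors of $w$ are ever tested) and that each \emph{individual} query is on an object of polynomial size (every tested factor has length $\le n$), so that a per-query polynomial cost composes into an overall polynomial bound. I would state the definition of polynomial constraints precisely at the outset so that these two facts are manifest, after which the argument is routine.
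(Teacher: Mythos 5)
Your proposal is correct and follows essentially the same route as the paper's proof: the same dynamic programming table (with indices transposed), the same black-box membership interpretation of polynomial constraints, and the same complexity accounting — the paper even realises your parenthetical remark by precomputing a three-dimensional table $M[a][b][i]$ recording whether $w[a..b]\in C_i$ before running the DP, arriving at the same $O(n^2 m\cdot\poly(n))$ bound. No substantive differences.
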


\begin{proof}
Assume $|w|=n$, $|p|=m$, and $\gaptuple = (C_1,\ldots, C_{m-1})$, where, for every $i \in [m-1]$, the constraint $C_i$ is given as a (black box) procedure $\checkP(i,u)$, which checks $u\in C_i$ in time $O(P(|u|))$ for some polynomial $P$.

To solve the matching problem for the gapped sequence $(p, \gaptuple)$ and the word $w$ we will first preprocess $w$, by computing a three dimensional $n\times n\times m$ array $M[\cdot][\cdot][\cdot]$ where, for $i,j\in[n]$ and $k\in [m]$, $M[i][j][k]=1$ if $w[i..j]\in C_k$ and $M[i][j][k]=0$, otherwise. The elements of $M$ can be computed na\"ively in $O(n^2mP(n))$ time by simply checking for all $i$ and $j$ whether $\checkP(i,w[i..j])$ returns true. 

Further, we define a two dimensional $n\times m$ array $D[\cdot][\cdot]$, where $D[i][k]=1$ if and only if $p[1..k]$ can be embedded in $w[1..i]$ by an embedding $e$ which satisfies $\gaptuple^k$, for $\gaptuple^k = (C_1,\ldots, C_{k-1})$, and maps position $k$ of $p$ to $w[i]$. Otherwise, $D[i][k]=0$. 

The elements of the matrix $D$ can be computed easily by dynamic programming. We initially set all elements of $D$ to $0$. Then, we set $D[i][1]=1$ if and only if $w[i]=p[1]$. Then, for $k\geq 2$, we set $D[i][k]=1$ if and only if there exists $j<i$ such that $D[j][k-1]=1$ and $M[j+1][i-1][k]=1$ and $w[i]=p[k]$. Clearly, this shows how the matrix $D$ can be computed in $O(n^2 m)$. 

Finally, we decide that $p\subseq_{\gaptuple} w$ if and only if there exists $j\leq n$ such that $D[j][m]=1$. By the above, we can decide this in $O(n^2 mP(n) )$ time.
\end{proof}

In general, the na\"ive result in the previous proposition can be improved, as we will see in the following. 

We now see the actual {\bf Proof of Theorem~\ref{constantPatternsRegularLength}}. For certain steps of the algorithm found at the core of this proof, we will sketch how they can be implemented in a simpler but less efficient manner. After the proof, we will also sketch how our approach can be simplified for the case when {\bf only} regular or length constraints are used. None of these simplifications seem to work in the case when {\bf both} regular {\bf and} length constraints are used.

\begin{proof}
Assume $|w|=n$, $|p|=m$, and $\gaptuple =(C_1,\ldots, C_{m-1})$ such that, for $i\in [m-1]$, $C_i=((\lowerBoundShort{i},$ $\upperBoundShort{i}),A_{i}))$, where $A_i=(Q_i,q_{0,i},F_i,\delta_i)$ are DFAs defining the regular constraints and $(\lowerBoundShort{i}, \upperBoundShort{i})$ are pairs of numbers defining the length constraints. 

We begin with a simple observation: in time linear in the size of the gapped sequence-part of the input (that is, in $O(|p|+\size(\gaptuple))$ time), we can identify  the non-zero constraints of $\gaptuple$. Let $i_1, \ldots, i_{k-1}\in [m-1]$ be numbers such that $C_i\neq \{\emptyword\}$, for all $i\in \{i_1, \ldots, i_{k-1}\}$, and $C_i=\{\emptyword\}$, for all $i\notin \{i_1, \ldots, i_{k-1}\}$. Clearly, $\states(\gaptuple)=\sum_{j=1}^{k-1}|Q_{i_j}|$ and $\states(\gaptuple)\geq k-1$. 

Moreover, with $i_0=0$ and $i_k=m$, we define the words $p_j=p[i_{j-1}+1..i_j]$, for $j\in [k]$. For $i\in [k]$, we denote $n_i=|p_1\cdots p_i|$ and $n_0=0$. From an algorithmic point of view, one can preprocess the word $p$ in linear time to compute the non-empty words $p_1,\ldots,p_k$. 

Further, we can construct in linear time the suffix array and the longest common extension (also known as longest common prefix) data structures for the word $x=wp$. These allow us to check in constant time whether $w[j+1..j+|p_i|]=p_i$, for all $j$ and $i$, by simply checking in $O(1)$ time whether the longest common prefix of $x[j+1..n+m]$ and $x[n+|p_{i-1}|..n+m]$ has at least length $|p_i|$.

All the steps described above are part of a {\bf preprocessing part} of our algorithm.

{\bf The main part of our algorithm} consists in a dynamic programming approach. 

We define and aim to compute a two dimensional $n\times k$ array $D[\cdot][\cdot]$, where $D[i][\ell]=1$ if and only if $p[1..n_\ell]$ can be embedded in $w[1..i]$ and this embedding satisfies the first $\ell-1$ non-zero constraints $C_{i_1}, \ldots, C_{i_{\ell-1}}$ of $\gaptuple$ and maps $p_\ell$ to the suffix of length $|p_\ell|$ of $w[1..i]$. Otherwise, $D[i][\ell]=0$. 

To compute the elements of the matrix $D$, we proceed as follows. 

Firstly, we initialize all the elements of $D$ with $0$. We then set $D[i][1]=1$ if and only if $w[1..i]$ ends with $p_1$, i.\,e., $w[i-|p_1|+1..i]=p_1$.

Further, assume that, for some $t\in [k-1]$, we have computed $D[\cdot][j]$, for all $j\leq t$, and we want to compute $D[\cdot][t+1]$. This part is the most involved part of our algorithm and its aim is computing an array $f_{t+1}[\cdot]$, with $n$ elements, such that $f_{t+1}[i]=1$ if and only if there exists a position $j$ such that $D[j][t]=1$, $w[j+1..i]\in L(A_{t})$, and $\lowerBoundShort{t} \leq |w[j+1..i]|\leq \upperBoundShort{t}$. 

{\bf We now present a procedure for the efficient computation of $f_{t+1}[\cdot]$.} This procedure consists in several generic steps. 

{\em The first generic step} of this procedure is the following.

We first collect in a list $L_{t+1}$, in increasing order, all the positions $i$ of $w$ such that $D[i][t]=1$. Let $j_1<\ldots <j_r$ be the elements of $L_{t+1}$. 

Then, we compute a graph $G_{t+1}$ and a two dimensional $n\times |Q_{t}|$ array $M_{t+1}[\cdot][\cdot]$ as follows. Initially, all elements of $M_{t+1}$ are set to $0$. We then set $M_{t+1}[j][q_{0,t}]=1$ for all $j\in L_{t+1}$. Now, we compute $G_{t+1}$: its nodes are pairs $(i,q)$ with $i\in [n]$ and $q\in Q_t$. Intuitively, the nodes of $G_{t+1}$ will be exactly those pairs $(i,q)$ for which $M[i][q]=1$, and the edges are $[(i,q),(i+1),q']$ where $\delta_t(q,w[i+1])=q'$. The construction of $G_{t+1}$ is immediate: for $i$ from $j_1$ to $n$, and for $q\in Q_t$, if $M_{t+1}[i][q]=1$ and  $\delta_t(q,w[i+1])=q'$, we add the edge $[(i,q),(i+1),q']$ to $G_{t+1}$ and set $M_{t+1}[i+1][q']=1$. 

Intuitively, $G_{t+1}$ consists in the union, over $j\in L_{t+1}$, of the (not necessarily disjoint) paths $[(j,q_{0,t}), (j+1,q^j_1),\ldots, (n,q^j_{n-j})]$, where $\delta_t(q_{0,t},w[j+1])=q^j_1$  and $\delta_t(q^j_r,w[j+r+1])=q^j_{r+1}$, for all $r\in [n-j-1]$. Intuitively, such a path records the trace of the computation of $A_t$ on the input $w[j+1..n]$. An important observation is that if two such paths $[(j,q_{0,t}), (j+1,q^{j}_1),\ldots, (n,q^{j}_{n-j})]$ and $[(j',q_{0,t}), (j'+1,q^{j'}_1),\ldots, (n,q^{j'}_{n-{j'}})]$ intersect, then they stay identical after their first common node; this is, indeed, true because $A_t$ is a deterministic finite automaton. Consequently, $G_{t+1}$ is a collection of disjoint trees (i.\,e., a forest) $T_1, T_2, \ldots, T_z$. As there are no edges between any pair of nodes $(n,q)$ and $(n,q')$, with $q,q'\in Q_t$, we can compute these trees by a series of depth first searches starting in the nodes $(n,q)$, with $q\in Q_t$, of $G_{t+1}$. Clearly, when computing these nodes, we can associate to each node $(j,q)$ the label $\ell_{j,q}$ if and only if $(j,q)$ is contained in the tree $T_{\ell_{j,q}}$. In other words, the label allows us to check quickly which tree contains each node of $G_t$.

From now on, each of the trees $T_i$ will be seen as a rooted tree, whose root is the single node of the form $(n,q)$ contained in that tree; the root of $T_i$, for $i\in [z]$, is denoted $(n,q_i)$ and the leaves of each of these trees are of the form $(j,q_{0,t})$, with $j\in L_{t+1}$. 

So, at the end of the first main step, we have obtained the rooted trees $T_1, \ldots T_z$. For simplicity, we denote by $|T_i|$ the number of nodes in $T_i$ (the size of $T_i$). Clearly, $p\leq |Q_t|$ and $\sum_{i=1}^p|T_i|\leq n|Q_t|$.

{\em Sketch of a simple but less efficient variant:} At this point, we could use a rather direct approach to compute $f_{t+1}[\cdot]$. We give an informal sketch of this idea, without too many implementation details as, anyway, we have a more efficient variant below. The role of this short interlude is to explain the need for a heavier data structures machinery in the efficient variant. The reader not interested in this simpler but less efficient variant, can skip directly to the description of the second generic step of our algorithm. 

In general, we want to identify each node $(j,q)$ (of a tree $T_i$) in our forest for which there exists a path ${\mathcal P}$ of length $\ell$, with $\lowerBoundShort{t}\leq \ell\leq \upperBoundShort{t}$, and a leaf $(j',q_{0,t})$ of $T_i$ such that the path ${\mathcal P}$ starts with $(j',q_{0,t})$ and ends with $(j,q)$. Or, in other words, $\delta_t(q_{0,t}, w[j'+1..j])=q$ and $\lowerBoundShort{t}\leq |w[j'+1..j]|\leq \upperBoundShort{t}$. For such a node, we can set $f_{t+1}[j]=1$, as there exists a word $w[j'+1..j]$ of length $\ell$, with $\lowerBoundShort{t}\leq \ell\leq \upperBoundShort{t}$, such that $j'\in L_{t+1}$ and $\delta_t(q_{0,t}, w[j'+1..j])$ is a final state. 

So, we will consider $j$ from $1$ to $n$ in increasing order and, while doing this, we maintain a collection of disjoint sets $S_q$, with $q\in Q_t$, included in $\{1,\ldots,n\}$.  Initially, all the sets in the collection are empty. When considering $j=i$, we construct the sets $S'_{q}=\cup_{q'\in Q_t, \delta(q',w[i])=q} S_{q'}$, for $q\in Q_{t}$; afterwards, we insert $i$ in $S'_{q_{0,t}}$ if and only if $i\in L_{t+1}$. Then, we simply set $S_q\gets S'_q$, for all $q\in Q_{t}$. Intuitively, $S_q$ contains those positions $j'\in\{1,\ldots,i\}$ such that there is a path between $(j',q_0)$ and $(i,q)$; note now that these sets are pairwise disjoint because the automaton $A_t$ is deterministic. Therefore, we set $f_{t+1}[i]=1$ if and only if there exists a final state $f\in F_t$ such that $S_f$ contains a position $j'$ with $\lowerBoundShort{t}\leq |w[j'+1..j]|\leq \upperBoundShort{t}$. A careful implementation of the sets $S_q$ from our collection (e.\,g., based on AVL-trees, which allows us fast searching, and an implementation of tree-union in which simply the elements of the smaller tree are inserted, one by one, in the larger tree) would lead to a computation of $f_{t+1}[\cdot]$ in $O(n |Q_{t}| \poly\!\log(n))$. In the following, we present a more efficient variant for the computation of $f_{t+1}[\cdot]$. 

{\em In the second generic step of our efficient procedure}, we construct {\em level ancestor data structures} for the trees $T_i$, with $i\in [p]$.  

More precisely, the {\em Level Ancestor Problem} is defined as follows (see \cite{BenderFarachTCS}). In a rooted tree $T$, $v$ is an ancestor of $u$ if the shortest path (i.\,e., the only simple path) from the root to $u$ goes through $v$; the depth of a node $u$, denoted $\depth(u)$, is the number of edges on the shortest path from $u$ to the root of $T$. 
For a rooted tree $T$, let $\LA_T(u,d)=v$, where $v$ is an ancestor of $u$  and $\depth(v)=d$, if such a node exists, or $\uparrow$ otherwise. The {\em Level Ancestor Problem} consists in a preprocessing phase and a querying phase:
\begin{itemize}
	\item Preprocessing: A rooted tree $T$ with $N$ vertices.
	\item Querying: For a node $u$ in the rooted tree $T$, query $\levelAncestor_{T} (u,d)$ returns $\LA_{T} (u,d)$, if it exists, and false otherwise. 
\end{itemize}
A simple and elegant solution for this problem which has $O(N)$ preprocessing time and $O(1)$ time for query-answering can be found in, e.\,g., \cite{BenderFarachTCS} (see also \cite{BenAmram} for a more involved discussion). 

So, for each tree $T_i$ we can compute in $O(|T_i|)$ time data structures allowing us to answer $\levelAncestor_{T_i}$ queries in $O(1)$ time. 

{\em In the third generic step}, we have a procedure of marking nodes in the trees $T_i$, for $i\in [z]$. So, let us consider one of these trees $T_i$ and explain how we process it. In this step, the marking can be maintained, e.\,g., using a boolean $n\times |Q_t|$ matrix which simply keeps track for each node whether it was marked or not; we do not go in further details with this, at it would simply make the exposition more heavy. 

In this step we would like to mark, for any leaf $(j,q_{0,i})$ of $T_i$, all the ancestors $(d,q)$ of $(j,q_{0,i})$ such that $\lowerBoundShort{t}\leq |w[j+1..d]|=d-j\leq \upperBoundShort{t}$.

At a high level, this is done as follows. For each such leaf $(j,q_{0,t})$ of $T_i$, we compute $(j+\lowerBoundShort{t},q_{-}^j)=\levelAncestor_{T_i}((j,q_{0,t}), n - j- \lowerBoundShort{t} )$. We mark this node and all its ancestors while going upwards on the path from $(j+\lowerBoundShort{t},q{-}^j)$ towards the root of $T_i$, until we reach a node $(j+\upperBoundShort{t},q^j_{+})$. We also mark $(j+\upperBoundShort{t},q^j_{+})$ and then we run the marking procedure for the next leaf of $T_i$. 

To avoid marking the same nodes multiple times, we can implement the marking procedure as follows.

By going through the elements of $L_{t+1}$ (which were ordered increasingly), we can produce for each node $T_i$ the stack $L_{t+1,i}$ containing all leaves of $T_i$, ordered decreasingly (top to bottom) w.\,r.\,t. the first component of each node $(j,q)$. Note that the sets/stacks $L_{t+1,i}$, for $i\in [z]$, define a partition of $L_{t+1}$. 

Assume, now, that the leaves of $T_i$ are $(j_{1,i},q_{0,t})$, $(j_{2,i},q_{0,t}), \ldots$, $(j_{s_i,i},q_{0,t})$, with  $j_{1,i}> j_{2,i}> \ldots > j_{s_i,i}$ (and note that $s_i$ is the number of leaves of $T_i$). 

Now, for $g$ from $1$ to $s_i$, we compute $(j_{g,i}+\lowerBoundShort{t},q_{-}^g)=\levelAncestor_{T_i}((j_{g,i},q_{0,t}), n - j_{g,i}- \lowerBoundShort{t} )$. We mark this node and all its ancestors while going upwards on the path from $(j_{g,i}+\lowerBoundShort{t},q_{-}^g)$ towards the root of $T_i$, until we reach a node $(j_{g,i}+\upperBoundShort{t},q_{+}^g)$ or we meet an already marked node. If we reach $(j_{g,i}+\upperBoundShort{t},q^g_{+})$, we also mark $(j_{g,i}+\upperBoundShort{t},q^g_{+})$, and then we continue with the marking procedure for the next leaf of $T_i$; otherwise, if we meet a marked node, we stop the marking for this leaf and continue with the marking procedure for the next leaf of $T_i$. 

To understand why it is correct to stop the marking for a leaf once we meet a marked node $(c,q)$, it is enough to note that this marked node must have been marked when another leaf $(j_{f,i},q_{0,t})$ with a greater first component (i.\,e., $j_{f,i}>j_{g,i}$) was considered. In that case, the path of marked nodes between $(j_{f,i}+\lowerBoundShort{t},q_{-}^f)$ and $(j_{f,i}+\upperBoundShort{t},q_{+}^f)$, ends with the node $(j_{f,i}+\upperBoundShort{t},q_{+}^f)$ for which we have $j_{f,i}+\upperBoundShort{t}>j_{g,i}+\upperBoundShort{t}$. Thus, the node $(j_{g,i}+\upperBoundShort{t},q_{+}^g)$ is actually on the path from $(c,q)$ to $(j_{f,i}+\upperBoundShort{t},q_{+}^f)$, so every node between $(c,q)$ and $(j_{g,i}+\upperBoundShort{t},q_{+}^g)$ is already marked and we do not need to mark them again.

Once we have completed the marking for all trees $T_i$, with $i\in [z]$, we have achieved the following: a node $(j,q)$ is marked if and only if there exists a path ${\mathcal P}$ of length $\ell$, with $\lowerBoundShort{t}\leq \ell\leq \upperBoundShort{t}$, and a leaf $(j',q_{0,t})$ of $T_i$ such that the path ${\mathcal P}$ starts with $(j',q_{0,t})$ and ends with $(j,q)$. Or, in other words, $\delta_t(q_{0,t}, w[j'+1..j])=q$ and $\lowerBoundShort{t}\leq |w[j'+1..j]|\leq \upperBoundShort{t}$. 

{\em In the fourth, and final, generic step}, we simply set, for $i$ from $1$ to $n$, $f_{t+1}[i]=1$ if and only if there exists a state $q\in F_t$ such that the node $(i,q)$ is marked. This means that $f_{t+1}[i]=1$ if and only if there exists a word $w[j+1..i]$ of length $\ell$, with $\lowerBoundShort{t}\leq \ell\leq \upperBoundShort{t}$, such that $j\in L_{t+1}$ and $\delta_t(q_{0,t}, w[j+1..i])$ is a final state (i.\,e., $w[j+1..i]\in C_{t}$). 

Clearly, $f_{t+1}[\cdot]$ is correctly computed with respect to the definition given when we introduced this array. 
  
{\bf After we compute the array $f_{t+1}[\cdot]$, we return to the main algorithm}, and move on to the computation of the elements $D[\cdot][t+1]$ from our main array $D$. 

We simply set $D[i][t+1]=1$ if and only if $w[i-|p_{t+1}|+1..i]=p_{t+1}$ and $f_{t+1}[i-|p_{t+1}|]=1$. It is not hard to see that $D[\cdot][t+1]$ is correctly computed.

{\bf To conclude, in the final step of our algorithm}, we decide that the input gapped sequence $(p, \gaptuple)$ matches the word $w$ if and only if there exists $j$ such that $D[j][k]=1$. 

We now discuss the {\bf complexity of our approach}. The preprocessing part is done in $O(n + \size(\gaptuple) )$ time. The initialization of the array $D[\cdot][\cdot]$ and the computation of its first column $D[\cdot][1]$ can be done in linear time $O(n)$. Now, for each $t\geq 1$, the computation of $f_{t+1}$ takes $O(n|Q_t|)$ time. Indeed, the first generic step of this computation takes  $O(n|Q_t|)$ time. The second generic step takes $O(\sum_{i=1}^z|T_i|)=O(n|Q_t|)$ time. The third generic step takes time proportional to the number of marked nodes. As each node of each tree is marked at most once, then the time needed to complete this step is also $O(\sum_{i=1}^z|T_i|)=O(n|Q_t|)$. Finally, the fourth generic step can be performed in $O(n|F_t|)$ time. So, overall, the computation of $f_{t+1}[\cdot]$ takes, as claimed, $O(n|Q_t|)$ time. Therefore, computing all the elements of $D[\cdot][t+1]$ can be done in $O(n|Q_t|)$ time (including here the computation of $f_{t+1}[\cdot]$). Consequently, all the elements of $D[\cdot][\cdot]$ can be computed in $O(\sum_{i=1}^{k-1} n|Q_i|)=O(n\states(\gaptuple))$ time. Finally, deciding whether the input gapped sequence $(p, \gaptuple)$ matches the word $w$ based on the array $D[\cdot][\cdot]$ can be done in $O(n)$ time. Therefore, the considered matching problem can be solved in $O(|w|\states(\gaptuple) + |w| + \size(\gaptuple) )=O(|w|\states(\gaptuple) + \size(\gaptuple) )$.
\end{proof}

Regarding Corollary \ref{constantPatternsRL}, as each length constraint can be represented as reg-len constraints by adding the regular constraint $\Sigma^*$ (specified as DFA with one state), statement (1) follows immediately. Statement (2) follows from Theorem \ref{constantPatternsRegularLength}.

However, let us now comment more on how the case of $\matchProb$ with either regular constraints or length constraints can be handled.

In the case of length constraints only, we use the same approach and, just like in the proof above, assume that, for some $t\in [k-1]$, we have computed $D[\cdot][j]$, for all $j\leq t$, and we want to compute $D[\cdot][t+1]$. We again compute an array $f_{t+1}[\cdot]$, with $n$ elements, such that $f_{t+1}[i]=1$ if and only if there exists a a position $j$ such that $D[j][t]=1$ and $\lowerBoundShort{t} \leq |w[j+1..i]|\leq \upperBoundShort{t}$. This can be done in a simple way: we go through the positions $i$ from $1$ to $n$ and maintain in a dequeue, in increasing order, the positions $j$ such that $D[j][t]=1$ and $\lowerBoundShort{t} \leq |w[j+1..i]|\leq \upperBoundShort{t}$. For each $i$, we need to simply check if the dequeue is empty or not; if not, then we can set $f_{t+1}[i]=1$ (and otherwise leave $f_{t+1}[i]=0$). We then continue as in the algorithm above, after $f_{t+1}$ is computed. We obtain the time complexity stated in Corollary \ref{constantPatternsRL}(1).

In the case of regular constraints only, we use the same general approach and, once more, assume that, for some $t\in [k-1]$, we have computed $D[\cdot][j]$, for all $j\leq t$, and we want to compute $D[\cdot][t+1]$. We again compute an array $f_{t+1}[\cdot]$, with $n$ elements, such that $f_{t+1}[i]=1$ if and only if there exists a a position $j$ such that $D[j][t]=1$ and $w[j+1..i]\in L(A_{t})$. 
This can be done in a simple way: we go through the positions $i$ from $1$ to $n$ and maintain an array $S[\cdot]$ of size $|Q_t|$, where $S[q]=\min\{j\in L_{t+1}\mid \delta(q_{0,t},w[j+1..i])=q\}$ (or $S[q]=\infty$ if the respective set is emtpy). Clearly, if we have computed $S[\cdot]$ for position $i-1$, we can immediately update it for position $i$. Then, once $S$ is updated for position $i$, we need to simply check if there exists $f\in F_{t}$ such that $S[f]\neq \infty$; if yes, then we can set $f_{t+1}[i]=1$ (and otherwise leave $f_{t+1}[i]=0$). We then continue as in the algorithm above, after $f_{t+1}$ is computed. We obtain trivially the time complexity stated in Corollary \ref{constantPatternsRL}(2). In the case when the regular constraints are given as NFAs (or regexes) instead of DFAs, the computation of $f_{t+1}[\cdot]$ can be done in time $O(n |A_t|)$, where $|A_t|$ is the number of transitions of the automaton $A_t$ (or the size of the regex representing $L(A_t)$). 

It is worth noting that none of these simple approaches can be extended to work for the case when regular and length constraints are combined, and, in fact, our approach for that case proposes a data structure that creates the environment in which the main ideas in these simple approaches can interact efficiently. 

\subsection{Proof of Theorem~\ref{lowerBoundLength}}\label{sec:MatchingLowerBound}

Before stating the proof, note that the reduction is presented in a slightly different way compared to the proof sketch from the main part of this paper. More precisely, what is called $\code_b (\vec{b}_i)$ in the following corresponds to the part $\codeSketch_b (\vec{b}_i) \stackrel{\leq 1}{\leftrightarrow} \# \stackrel{\leq 3}{\leftrightarrow} \#  \stackrel{\leq 1}{\leftrightarrow} \# \stackrel{\leq 3}{\leftrightarrow} \#$ of the proof sketch, while the part $\codeSketch_b (\vec{b}_i)$ from the proof sketch is defined as $u_i$ later in the proof.

\begin{proof}
We reduce $\OV$ to the matching problem for gapped sequences with length constraints. We consider an instance of $\OV$: $A=\{\vec{a}_1, \ldots, \vec{a}_n\}$ and $B=\{\vec{b}_1, \ldots, \vec{b}_n\}$, with $A,B \subset \{0,1\}^d$; we can assume that $d\geq 2$. We transform this $\OV$-instance into an instance of the matching problem for gapped sequences with length constraints. We need to define a word $w$ (which, intuitively, corresponds to the set $A$), and a gapped sequence $(p, \gaptuple)$ with length constraints (which corresponds to the set $B$). Interestingly, the gapped sequence $(p, \gaptuple)$ will additionally fulfil the property that $\gaptuple[i]=(0,k_i)$, where $0\leq k_i \leq 6$, for all $i\in [|p|-1]$. We will show that $p \subseq_{\gaptuple} w$ if and only if there exist two vectors $\vec{a}_i$ and $\vec{b}_j$ which are orthogonal. 

To simplify the exposition, when representing the gapped sequence $(p, \gaptuple)$ with $p=p[1]\cdots p[m]$, we will use the notation $(p,\gaptuple) = p[1] \stackrel{\gaptuple[1]}{\leftrightarrow} p[2] \stackrel{\gaptuple[2]}{\leftrightarrow} \cdots \stackrel{\gaptuple[m-1]}{\leftrightarrow} p[m]$. Moreover, we omit the symbol $\stackrel{\gaptuple[i]}{\leftrightarrow}$ from the notation of $(p,\gaptuple)$ if and only if $\gaptuple[i]=(0,0)$. If $\gaptuple[i]=(0,k)$, then we write $\stackrel{\leq k}{\leftrightarrow}$ instead of the symbol $\stackrel{\gaptuple[i]}{\leftrightarrow}=\stackrel{(0,k)}{\leftrightarrow}$ when writing the notation for $(p,\gaptuple)$. 
For instance, if $p=abab$ and $\gaptuple[1]=(0,0)$, $\gaptuple[2]=(1,5)$, and $\gaptuple[2]=(0,6)$, we denote $(p,\gaptuple)=ab \stackrel{(1,5)}{\leftrightarrow}  a\stackrel{\leq 6}{\leftrightarrow} b$. 

So, let us now {\bf define our reduction}. Once more, we start with two sets $A=\{\vec{a}_1, \ldots, \vec{a}_n\}$ and $B=\{\vec{b}_1, \ldots, \vec{b}_n\}$, with $A,B \subset \{0,1\}^d$. 
We assume $\vec{a}_i=(a_i^1,\ldots, a_i^d)$ and $\vec{b}_i=(b_i^1,\ldots, b_i^d)$, for all $i\in [n]$. 
As usually, we define a series of gadgets, which will be combined to produce the word $w$ and the gapped sequence $(p,\gaptuple)$. The alphabet over which we define the words $w$ and $p$ is $\Sigma = \{0,1,\#,@\}$

{\bf The first set of gadgets} is defined as follows: 
\begin{itemize}
\item $\code_a (0)=010$, $\code_a(1)=100$;
\item $\code_b(0)=10$, $\code_b(1)=01$.
\end{itemize} 

{\bf The second set of gadgets} is obtained based on the first set defined above:
\begin{itemize}
\item For $i\in [n]$, $\code_a (\vec{a}_i)$ is the word\\
$\left(\prod\limits_{j=1}^d (\#\code_a(0) \# \#\code_a(a_i^j) \# \#\code_a(0) \#)\right).$
\item For $i\in [n-1]$, $\code_b (\vec{b}_i)$ is the gapped sequence defined as \\
$\left(\prod\limits_{j=1}^{d-1} (\# \stackrel{\leq 1}{\leftrightarrow} \code_b(b_i^j) \stackrel{\leq 1}{\leftrightarrow} \# \# \stackrel{\leq 3}{\leftrightarrow} \# \# \stackrel{\leq 3}{\leftrightarrow} \#)\right)\# \stackrel{\leq 1}{\leftrightarrow} \code_b(b_i^d) \stackrel{\leq 1}{\leftrightarrow} \#  \stackrel{\leq 1}{\leftrightarrow} \# \stackrel{\leq 3}{\leftrightarrow} \#  \stackrel{\leq 1}{\leftrightarrow} \# \stackrel{\leq 3}{\leftrightarrow} \#.$
\item For $i=n$, $\code_b (\vec{b}_n)$ is the gapped sequence defined as \\
$\left(\prod\limits_{j=1}^{d-1} (\# \stackrel{\leq 1}{\leftrightarrow} \code_b(b_n^j) \stackrel{\leq 1}{\leftrightarrow} \# \# \stackrel{\leq 3}{\leftrightarrow} \# \# \stackrel{\leq 3}{\leftrightarrow} \#)\right)\# \stackrel{\leq 1}{\leftrightarrow} \code_b(b_n^d) \stackrel{\leq 1}{\leftrightarrow} \# \stackrel{\leq 5}{\leftrightarrow}.$
\end{itemize} 
We say that each $\code_a (\vec{a}_i)$ contains three tracks. The first track of the concatenation 

\begin{equation*}
\left(\prod\limits_{j=1}^d (\#\code_a(0) \# \#\code_a(a_i^j) \# \#\code_a(0) \#)\right) 
\end{equation*}

consists in the prefix $\#\code_a(0) \#$ of each factor of the concatenation defining $\code_a (\vec{a}_i)$. The second track consists in the factors $\#\code_a(a_i^j) \#$ occurring in the middle of each factor in the concatenation above. Finally, the third track consists in the factors $\#\code_a(0) \#$ occurring as a suffix of each factor of the concatenation defining $\code_a (\vec{a}_i)$. To ease the understanding, we can highlight these tracks in the concatenation by placing the factors defining the $i^{th}$ track between brackets $[_i \cdots ]_i$: 

\begin{equation*}
\code_a (\vec{a}_i)=\left(\prod\limits_{j=1}^d ([_1\#\code_a(0) \#]_1 [_2\#\code_a(a_i^j) \#]_2[_3 \#\code_a(0) \#]_3)\right)\,. 
\end{equation*}

Clearly, just like the case of usual parentheses, these indexed brackets are not part of the string $\code_a (\vec{a}_i)$. 

{\bf The final set of gadgets}, corresponding to the word $w$ and the gapped sequence $(p,\gaptuple)$ are defined as follows.
\begin{itemize}
\item $w= \left (\prod\limits_{i=1}^{n-1} @ \code_a(\vec{a}_i) \right) @ \code_a(\vec{a}_n)\left (\prod\limits_{i=1}^{n-1} @ \code_a(\vec{a}_i) \right) @$.
\item $(p,\gaptuple) = @\stackrel{\leq 5}{\leftrightarrow} \left( \prod\limits_{j=1}^{n-1} \code_b(\vec{b}_j) \stackrel{\leq 6}{\leftrightarrow} \right) \code_b(\vec{b}_n) @$.
\end{itemize}

We continue with the {\bf correctness proof} for our reduction, i.\,e., the proof of the claim that the instance of $\OV$ defined by $A$ and $B$ contains two orthogonal vectors $\vec{a}_i$ and $\vec{b}_j$ if and only if $p\subseq_{\gaptuple} w$. Note that $\vec{a}_i$ and $\vec{b}_j$ are orthogonal is equivalent to $a_i^f b_j^f=0,$ for $f\in [d]$. 

{\bf We first note} that, for $g,h\in \{0,1\}$, we have $g\cdot h = 0$ if and only if $\code_b(h)$ is a factor of $\code_a(g)$. Indeed $\code_b(0)=10$ is a factor of $\code_a(0)=010$, $\code_b(1)=01$ is a factor of $\code_a(1)=100$, $\code_b(1)=01$ is a factor of $\code_a(0)=010$, but $\code_b(1)=01$ is not a factor of $\code_a(1)=100$. See Figure \ref{fig:basicgadgets}.
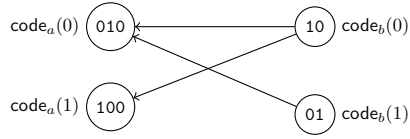
\begin{figure}[h!] 
    \centering
    \begin{tikzpicture}[scale=0.7, transform shape]
        \node[draw, circle, label=left:{$\code_a(0)$}] (A0) at (0,0) {$\mathtt{010}$};
        \node[draw, circle, below=6mm of A0, label=left:{$\code_a(1)$}] (A1) {$\mathtt{100}$};
        \node[draw, circle, right=30mm of A0, label=right:{$\code_b(0)$}] (B0) {$\mathtt{10}$};
        \node[draw, circle, below=9mm of B0, label=right:{$\code_b(1)$}] (B1) {$\mathtt{01}$};

        \draw[<-] (A0) -- node[below]{} (B0);
        \draw[<-] (A0) -- node[left, xshift=-5mm, yshift=-1mm]{} (B1);
        \draw[<-] (A1) -- node[right, xshift=5mm, yshift=-1mm]{} (B0);
        
    \end{tikzpicture}
    \caption{Gadgets for the encoding of single bits. The edges represent ``factor of'' relations.}
    \label{fig:basicgadgets}
\end{figure}

{\bf Secondly}, for $i\in [n]$, we define the gapped  sequence\\
$(u_i,\gaptuple_{i})=\left(\prod\limits_{j=1}^{d-1} (\# \stackrel{\leq 1}{\leftrightarrow} \code_b(b_i^j) \stackrel{\leq 1}{\leftrightarrow} \# \# \stackrel{\leq 3}{\leftrightarrow} \# \# \stackrel{\leq 3}{\leftrightarrow} \#)\right)\# \stackrel{\leq 1}{\leftrightarrow} \code_b(b_i^d) \stackrel{\leq 1}{\leftrightarrow} \#  $. \\
Please note that $\gaptuple_i$ can be seen as the restriction of $\gaptuple$ on the factor $u_i$ of $p$. 

Let us now consider some $i, \ell\in [n]$. We analyse the possible embeddings $e$ of $u_i$ in the word $\code_a(\vec{a}_\ell)$, which satisfy $\gaptuple_i$.  Note that $\code_a(\vec{a}_\ell)$ contains $6d$ $\#$-symbols, while $u_i$ contains $6d-4$  $\#$-symbols. Thus, in an embedding of $u_i$ in $\code_a(\vec{a}_\ell)$, the first three $\#$-symbols of $u_i$ must be aligned to some of the first seven $\#$-symbols of $\code_a(\vec{a}_\ell)$. By the fact that, in an embedding $e$ which satisfies $\gaptuple_i$, we have to embed the two symbols of $\{0,1\}$ occurring between the first two $\#$-symbols of $u_i$ (i.\,e., the string $\code_b(b_i^1) $) into a factor of $\code_a(\vec{a}_\ell)$ which ends before the seventh $\#$ of $\code_a(\vec{a}_\ell)$, and also because the second and third $\#$-symbols of $u_i$ have no gap between them, we get that the symbols $\code_b(b_i^1)$ are embedded as a factor of either the first factor $\code_a(0)$ of the first track, or the first factor $\code_a(0)$ of the third track of $\code_a(\vec{a}_\ell)$, or of the factor  $\code_a(a_\ell^1)$ of the second track of $\code_a(\vec{a}_\ell)$. Now, if $\code_b(b_i^1)$ is embedded, as just described, in the factor corresponding to track $q \in \{1, 2, 3\}$ of $\code_a(\vec{a}_\ell)$, then, by the fact that there are exactly six $\#$ symbols between $\code_b(b_i^j)$ and $\code_b(b_i^{j+1})$, we get that all the factors $\code_b(b_i^j)$ will be embedded in the corresponding factors of $\code_a(\vec{a}_\ell)$ of the same track $q$. Such embeddings are always possible when $\gaptuple_i$ is satisfied, and no other embeddings with these properties are possible. So, in an embedding of $u_i$ in the word $\code_a(\vec{a}_\ell)$, the way we embed $\code_b(b_i^1)$ selects a track of $\code_a(\vec{a}_\ell)$, in which all the factors $\code_b(b_i^j)$ will be embedded. We say that $e$ selects track $q$ of $\code_a(\vec{a}_\ell)$ if $\code_b(b_i^1)$ is embedded in track $q$ of $\code_a(\vec{a}_\ell)$. 

We now claim that $\vec{a}_\ell$ and $\vec{b}_i$ are orthogonal if and only if there exists an embedding $e$ which satisfies $\gaptuple_i$ and, for all $j \in [d]$, maps the factor $ \code_b(b_i^j)$ of $u_i$ to a factor of $ \code_a(a_\ell^j)$ of $\code_a(\vec{a}_\ell)$. In other words, $\vec{a}_\ell$ and $\vec{b}_i$ are orthogonal if and only if there is an embedding $e$ which satisfies $\gaptuple_i$ and that selects the second track of $\code_a(\vec{a}_\ell)$. Clearly, if there exists an embedding $e$ which satisfies $\gaptuple_i$ and maps the factor $ \code_b(b_i^j)$ of $u_i$ to a factor of $ \code_a(a_\ell^j)$ of $\code_a(\vec{a}_\ell)$, for all $j \in [d]$, respectively, then $b_i^j a_{\ell}^j=0$, for all $j \in [d]$. So, $a_\ell$ and $b_i$ are orthogonal. 
For the other implication, we proceed as follows. If $\vec{a}_\ell$ and $\vec{b}_i$ are orthogonal, then $\code_b(b_i^j) $ is a factor of $ \code_a(a_\ell^j)$ for $j\in [d]$. As mentioned above, we can construct the embedding $e$ as follows. To begin with, for $j\in [d]$, $e$ maps the symbols of the factor $ \code_b(b_i^j)$ of $u_i$ to a factor of $\code_a(a_\ell^j)$, and the $\#$ symbol occurring before (respectively, after) $ \code_b(b_i^j)$ in $u_i$ to the $\#$ symbol occurring before (respectively, after)  $\code_a(a_\ell^j)$ in $\code_a(\vec{a}_\ell)$. Then, for $j\in [d-1]$, $e$ maps the four $\#$ symbols occurring between  $\# \code_b(b_i^j)\#$ and  $\#\code_b(b_i^{j+1})\#$ in $u_i$ to the four $\#$ symbols occurring between  $\#\code_a(a_\ell^j)\#$ and  $\#\code_a(a_\ell^{j+1})\#$ in $\code_a(\vec{a}_\ell)$. This embedding clearly satisfies the gap constraints $\gaptuple_i$. 

This proves our claim. 

Before moving on with our proof, it is worth recalling what we have shown. 
The above claim states that $\vec{a}_\ell$ and $\vec{b}_i$ are orthogonal if and only if there exists an embedding $e$ which satisfies $\gaptuple_i$ and maps the factors $ \code_b(b_i^j)$ of $u_i$ to the factors defining the second track of $\code_a(\vec{a}_\ell)$, respectively (i.\,e., $e$ selects the second track of  $\code_a(\vec{a}_\ell)$). But, as mentioned, there exist other embeddings $e$ of $u_i$ in $\code_a(\vec{a}_\ell)$, which satisfy $\gaptuple_i$, even if $a_\ell$ and $b_i$ are not orthogonal. In all these other embeddings either the first of the third track of $\code_a(\vec{a}_\ell)$ are selected. See Figure \ref{fig:fittinggadgetsAppendix}.

\begin{figure}[h!]
    \centering
  \includegraphics[width=14cm]{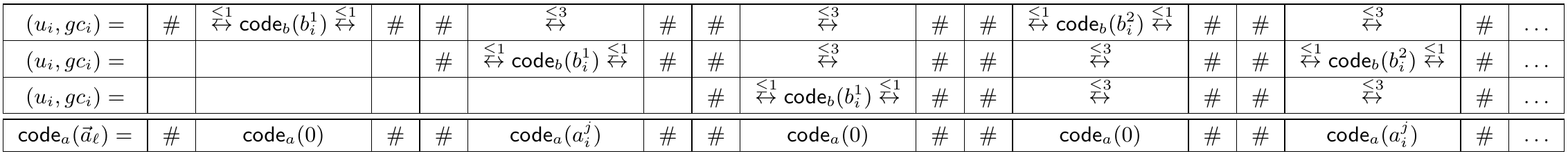}
    \caption{Possible embeddings of $u_i$ in $\code_a(\vec{a}_\ell)$, selecting the first, second, and, respectively, third track of $\code_a(\vec{a}_\ell)$.}
    \label{fig:fittinggadgetsAppendix}
\end{figure}

{\bf Thirdly}, we analyse when there exists an embedding $e$ of $p$ in $w$ which satisfies $\gaptuple$. 

Cleary, in such an embedding $e$, we map the first $@$-symbol of $p$ to an $@$-symbol of $w$. So, let us assume that the first $@$-symbol of $p$ is mapped to the $@$-symbol of $w$ occurring before an occurrence of $\code_a(\vec{a}_\ell)$. 
As $u_1$ contains $6d$ $\#$-symbols, and $\code_a(\vec{a}_\ell)$ contains $6d-4$ $\#$-symbols, and the length of the gap between $@$ and the first $\#$-symbol of $u_1$ is at most $5$, it follows that $u_1$ is then embedded in $\code_a(\vec{a}_\ell)$, and this embedding satisfies $\gaptuple_i$ (as defined above). Therefore, $u_1$ selects a track $q$ of $\code_a(\vec{a}_\ell)$. It is not hard to see that $i\leq 2$ holds, as, according to $\gaptuple$, the gap between $@$ and the first $\#$-symbol of $u_1$ is at most $5$. Then, between the suffix  $\# \stackrel{\leq 1}{\leftrightarrow} \code_b(b_1^d) \stackrel{\leq 1}{\leftrightarrow} \#$ of $(u_1,\gaptuple_1)$ and the factor $(u_2,\gaptuple_2)$, the gapped sequence $(p, \gaptuple)$ contains the factor $\stackrel{\leq 1}{\leftrightarrow} \# \stackrel{\leq 3}{\leftrightarrow} \#  \stackrel{\leq 1}{\leftrightarrow} \# \stackrel{\leq 3}{\leftrightarrow} \# \stackrel{\leq 6}{\leftrightarrow}$. This means that $u_2$ will be embedded in $\code_a(\vec{a}_{\ell+1})$, and, moreover, it will select one of the tracks $q$ or $q+1$ (as there can be at most $18$ symbols between $u_1$ and $u_2$, so track $3$ cannot be reached in the case $q=1$). See Tables \ref{tab:table1} and \ref{tab:table2}.
Now, for $i\geq 2$, as long as $u_i$ is embedded in track $q\in \{1,2\}$ of $\code_a(\vec{a}_{\ell+i-1})$, the process continues as above, and we reach the conclusion that $u_{i+1}$ must be embedded in track $q$ or $q+1$ of $\code_a(\vec{a}_{\ell+i})$. Assume now that $u_i$ is embedded in track $3$ of $\code_a(\vec{a}_{\ell+i-1})$. Then, between the factor  $\# \stackrel{\leq 1}{\leftrightarrow} \code_b(b_i^d) \stackrel{\leq 1}{\leftrightarrow} \#$ of $(u_i,\gaptuple_i)$ and the first symbol of $(u_{i+1},\gaptuple_{i+i})$, the gapped sequence $(p, \gaptuple_i)$ contains the factor $\stackrel{\leq 1}{\leftrightarrow} \# \stackrel{\leq 3}{\leftrightarrow} \#  \stackrel{\leq 1}{\leftrightarrow} \# \stackrel{\leq 3}{\leftrightarrow} \#\stackrel{\leq 6}{\leftrightarrow}$. This means that $u_{i+1}$ can be embedded either in the third track of $\code_a(\vec{a}_{\ell+i})$, or in the first track of the string $w'=\left(\prod\limits_{j=2}^d ([_1\#\code_a(0) \#]_1 [_2\#\code_a(a_{\ell+i}^j) \#]_2[_3 \#\code_a(0) \#]_3)\right)$ $@[_1\#\code_a(0) \#]_1[_2\#\code_a(a_{\ell+i+1}^1) \#]_2[_3\#\code_a(0) \#]_3$. Since $u_{i+1}$ ends with the string $\#\# \stackrel{\leq 1}{\leftrightarrow} \code_b(b_{i+1}^d) \stackrel{\leq 1}{\leftrightarrow} \#  $, and $\# \stackrel{\leq 1}{\leftrightarrow} \code_b(b_{i+1}^d) \stackrel{\leq 1}{\leftrightarrow} \#  $ must be embedded in $\# \code_a(a_{\ell+i+1}^1) \#  $, we note that this embedding of $(u_{i+1},\gaptuple_{i+1})$ is impossible: we would have to embed the factor $\#\# \stackrel{\leq 1}{\leftrightarrow} \code_b(b_{i+1}^d) \stackrel{\leq 1}{\leftrightarrow} \#  $ into the factor $@\# \code_a(a_{\ell+i+1}^1) \#  $ of $w'$, which is not possible. Thus, $u_{i+1}$ is embedded in the third track of $\code_a(\vec{a}_{\ell+i})$. \par
From now on, we can continue as above, and note that all sequences $u_j$, with $j>i+1$, will be embedded only in the third track of the corresponding factors of $w$. We conclude that in an embedding $e$ of $p$ in $w$ we simply embed each of the gapped sequences $(u_i,\gaptuple_i)$ into factors $\code_a(\vec{a}_{\ell})$ of $w$. However,  $\code_b (\vec{b}_n)$ must be embedded in such a way that the gap between the symbol of $\code_a(\vec{a}_{\ell})$ to which the last $\#$-symbol of $\code_b (\vec{b}_n)$ is mapped and an $@$ symbol is at most $5$. This means that $\code_b (\vec{b}_n)$ must have been embedded in track $2$ or $3$ of some $\code_a(\vec{a}_{\ell'})$. Therefore, at least one of $\code_b (\vec{b}_i)$ must have been embedded in track $2$ of some $\code_a(\vec{a}_{\ell''})$. Indeed, if $\code_b (\vec{b}_n)$ is embedded in track $2$ of $\code_a(\vec{a}_{\ell'})$, then this claim holds. If $\code_b (\vec{b}_n)$ is embedded in track $3$ of $\code_a(\vec{a}_{\ell'})$, then let $\code_b (\vec{b}_p)$ be such that $p$ is maximum with the property that $\code_b (\vec{b}_p)$ was not embedded in track $3$ of some $\code_a(\vec{a}_{r})$. By the arguments given above, $\code_b (\vec{b}_p)$ must have been embedded in track $2$ of  $\code_a(\vec{a}_{r})$. So, our claim that at least one of the gapped sequences $\code_b (\vec{b}_i)$, say $\code_b (\vec{b}_j)$, must be embedded in track $2$ of some $\code_a(\vec{a}_{\ell''})$ holds. 

This shows that if an embedding $e$ of $p$ in $w$, which satisfies $\gaptuple$, exists then there exist $j$ and $\ell''$ such that $\code_b (\vec{b}_j)$ is embedded in track $2$ of $\code_a(\vec{a}_{\ell''})$. That is, $\vec{b}_j$ and $\vec{a}_{\ell''}$ are orthogonal.

\begin{table}
\begin{tabular}{|c|c|c|c|c|c|c|c|c|c|}
\hline
$w=$                     & $@$ &$\varepsilon $ &  $w_1$ & $\#\code_a(a_1^d)\#$                                 &  $\#\code_a(0)\#$                                                                                                             & $@ \#\code_a(0)\#$                                      & $w_2$ \\
\hline
$(p,\gaptuple)=$ & $@$ & $\stackrel{\leq 5}{\leftrightarrow}$ & $u_1$ &  $\stackrel{\leq 1}{\leftrightarrow} \#  \stackrel{\leq 3}{\leftrightarrow} \#\stackrel{\leq 1}{\leftrightarrow}$ & $\#  \stackrel{\leq 3}{\leftrightarrow} \# $ & $\stackrel{\leq 6}{\leftrightarrow}$ & $u_2$ \\
\hline
\end{tabular}
\caption{In this example, $u_1$ is embedded in track $1$ of $\code_a(\vec{a}_1)$, and then $u_2$ is embedded in track $2$ of $\code_a(\vec{a}_2)$; we cannot embed $u_2$ in track $3$ of $\code_a(\vec{a}_2)$, as it cannot be reached. Here, $w_1$ (respectively, $w_2$) is the factor of $\code_a(\vec{a}_1)$ (respectively, $\code_a(\vec{a}_2)$)  found between the symbol to which the first symbol of $u_1$ is mapped and  the symbol to which the last symbol of $u_1$ (respectively, $u_2$) is mapped in the respective embedding.}
\label{tab:table1}
\end{table}

\begin{table}
\begin{tabular}{|c|c|c|c|c|c|c|c|c|}
\hline
$w=$                      & $@$ & $\#\code_a(0)\#$                                 & $w_1$ & $\#\code_a(0)\#$                                                                                                             & $@$                                                                               & $\#\code_a(0)\#$                                      & $\#\code_a(a_2^1)\#$                      & $w_2$ \\
\hline
$(p,\gaptuple)=$ & $@$ & $\stackrel{\leq 5}{\leftrightarrow}$ & $u_1$ &  $\stackrel{\leq 1}{\leftrightarrow} \#  \stackrel{\leq 3}{\leftrightarrow} \#$  & $\stackrel{\leq 1}{\leftrightarrow}$ & $\#  \stackrel{\leq 3}{\leftrightarrow} \# $ & $\stackrel{\leq 6}{\leftrightarrow}$ & $u_2$ \\
\hline
\end{tabular}
\caption{In this example, $u_1$ is embedded in track $2$ of $\code_a(\vec{a}_1)$, and then $u_2$ is embedded in track $3$ of $\code_a(\vec{a}_2)$. Here, $w_1$ (respectively, $w_2$) is the factor of $\code_a(\vec{a}_1)$ (respectively, $\code_a(\vec{a}_2)$)  found between the symbol to which the first symbol of $u_1$ is mapped and  the symbol to which the last symbol of $u_1$ (respectively, $u_2$) is mapped in the respective embedding.}
\label{tab:table2}
\end{table}

\begin{table}
\begin{tabular}{|c|c|c|c|c|c|c|c|c|}
\hline
 $@$                                                                               & $\# \code_a(0)\# $                                      & $\#\code_a(a_2^1)\#$                      & $w_2$ &  $@$                                                                               & $\#\code_a(0)\#$                                      & $\#\code_a(a_3^1)\#$                      &  $\emptyword$ & $w_3$ \\
\hline
  $\stackrel{\leq 1}{\leftrightarrow}$ & $\#  \stackrel{\leq 3}{\leftrightarrow} \# $ & $\stackrel{\leq 6}{\leftrightarrow}$ 
& $u_2$ & $\stackrel{\leq 1}{\leftrightarrow}$ & $\# \stackrel{\leq 3}{\leftrightarrow} \#$ & $\stackrel{\leq 1}{\leftrightarrow}\# \stackrel{\leq 3}{\leftrightarrow} \#$ &$\stackrel{\leq 6}{\leftrightarrow} $  & $u_3$ \\
\hline
\end{tabular}
\caption{Following the example in Table \ref{tab:table1}, $u_2$ is embedded in track $3$ of $\code_a(\vec{a}_2)$, and then $u_3$ is embedded in track $3$ of $\code_a(\vec{a}_3)$. Here, $w_2$ (respectively, $w_3$) is the factor of $\code_a(\vec{a}_2)$ (respectively, $\code_a(\vec{a}_3)$)  found between the symbol to which the first symbol of $u_2$ (respectively, $u_3$) is mapped and  the symbol to which the last symbol of $u_2$ (respectively, $u_3$) is mapped in the respective embedding.}
\label{tab:table3}
\end{table}

\begin{table}
\begin{tabular}{|c|c|c|c|c|c|c|c|c|}
\hline
 $@$                                                                               & $\#\code_a(0)\#$                                      & $\#\code_a(a_2^1)\#$                      & $w_2$ &  $@$                                                                               & $\#\code_a(0)\#$                                      & $\#\code_a(a_3^1)\#$                      &  $\# \code_a(0) \#$  & $w'_3$ \\
\hline
  $\stackrel{\leq 1}{\leftrightarrow}$ & $\#  \stackrel{\leq 3}{\leftrightarrow} \# $ & $\stackrel{\leq 6}{\leftrightarrow}$ 
& $u_2$ & $\stackrel{\leq 1}{\leftrightarrow}$ & $\# \stackrel{\leq 3}{\leftrightarrow} \#$ & $\stackrel{\leq 1}{\leftrightarrow}\# \stackrel{\leq 3}{\leftrightarrow} \#$ &$\stackrel{\leq 6}{\leftrightarrow} $  & $u_3$ \\
\hline
\end{tabular}
\caption{Following the example in Table \ref{tab:table1}, $u_2$ is embedded in track $3$ of $\code_a(\vec{a}_2)$, and then $u_3$ could be embedded in track $1$ of $w'_3=\left(\prod\limits_{j=2}^d ([_1\#\code_a(0) \#]_1 [_2\#\code_a(a_3^j) \#]_2[_3 \#\code_a(0) \#]_3)\right)$ $@[_1\#\code_a(0) \#]_1[_2\#\code_a(a_4^1) \#]_2[_3\#\code_a(0) \#]_3$. This is, however, impossible, as it would imply the mapping of a $\#\#$ factor of $u_3$ to a factor $\#@$ or $@\#$ of $w'_3$. Here, $w_2$ is the factor of $\code_a(\vec{a}_2)$ found between the symbol to which the first symbol of $u_2$ is mapped and  the symbol to which the last symbol of $u_2$ is mapped in the respective embedding.}
\label{tab:table4}
\end{table}

Now assume that there exist $\vec{b}_j$ and $\vec{a}_{\ell''}$ which are orthogonal. We can construct an embedding of $p$ in $w$ as follows: we embed $\code_b (\vec{b}_i)$ in track $1$ of $\code_a(\vec{a}_{\ell''-j+i})$, for $i\in [j-1]$, $\code_b (\vec{b}_j)$ in track $2$ of $\code_a(\vec{a}_{\ell''})$, and $\code_b (\vec{b}_{j+i})$ in track $3$ of $\code_a(\vec{a}_{\ell''+i})$, for $i\in [n-j]$. By our explanations, this is clearly possible.

{\bf In conclusion}, an embedding $e$ of $p$ in $w$, which satisfies $\gaptuple$, exists if and only if there exist $j$ and $\ell''$ such that $\code_b (\vec{b}_j)$ is embedded in track $2$ of $\code_a(\vec{a}_{\ell''})$. This is equivalent to saying that $\vec{b}_j$ and $\vec{a}_{\ell''}$ are orthogonal. 

This shows that our reduction is correct. The instance of $\OV$ defined by $A$ and $B$ contains two orthogonal vectors if and only if the instance of the matching problem for gapped sequences with length constraints defined by $w$ and $(p,\gaptuple)$ can be answered positively, or, equivalently, $p\subseq_{\gaptuple} w$. 
Moreover, the word $w$ and the gapped sequence $(p,\gaptuple)$ can be constructed in $O(nd)$ time and we have that $|w|, |p| \in \Theta(nd)$, and the number of bits needed to describe $\gaptuple$ is also $\Theta(nd)$. 

Assume now that there exists a solution for the matching problem for gapped sequences with length constraints running in $O(|w|^g|p|^h)$ with $g+h=2-\epsilon$ for some $\epsilon<0$. This would lead to a solution for $\OV$ running in $O(nd + (nd)^{2-\epsilon})$, a contradiction to the $\OV$-conjecture. The same argument holds for solutions running in $O(|w|^{2-\epsilon})$. 

In fact, as $\nz(\gaptuple) \in \Theta(nd)$, one can show that if there exists a solution of the considered matching problem running in $O(|w|^g\nz(\gaptuple)^h)$ with $g+h=2-\epsilon$ for some $\epsilon<0$, then there exists a solution for $\OV$ running in $O(nd + (nd)^{2-\epsilon})$, a contradiction to the $\OV$-conjecture. This proves our statement. 
\end{proof}

\subsection{Proof of Corollary~\ref{lowerBoundRegular}}

\begin{proof}
The result follows in the same way, and based on the same reduction, as Theorem \ref{lowerBoundLength}. The only observation is that   each of the length gap constraints used in that proof (which are all of the form $(0,\ell)$ for $\ell\leq 6$, so constant) can be expressed as a regular constraint, and encoded using a DFA with a constant number of states. Indeed, if we have strings over $\Sigma$ and the length constraint is $(k, \ell)$, for some constants $k$ and $\ell$ with $k \leq \ell$, then we have a DFA with $\ell + 2$ states $\{0,1,\ldots,\ell,\ell + 1\}$ with the transition $\delta(i,a)=i+1$, for all $a\in \Sigma$ and $i\in \{0,\ldots,\ell\}$, and $\delta(\ell + 1, a)=\ell+1$, for all $a\in \Sigma$. The final states of this DFA are $\{k,k + 1,\ldots, \ell\}$. Clearly, this DFA has a constant number of states. Therefore, $\states(\gaptuple_p) \in \Theta(nd)$. The result follows. 
\end{proof}

\subsection{Lower bounds for $|\Sigma|=2$}\label{sec:lowerBoundsBinary}

\begin{proof}[Proof Sketch]We can actually adapt the proof of Theorem \ref{lowerBoundLength} by simply using a block encoding of the symbols used in that reduction over a binary alphabet $\{\ta,\tb\}$. 

We rewrite the symbols of $w$ and $(p,\gaptuple)$ according to the rules $0\gets \ta^5 \tb \ta \ta \ta \tb^5$, $1\gets \ta^5 \tb \ta \tb \ta \tb^5$, $\#\gets \ta^5 \tb \tb \ta \ta \tb^5$, and $@\gets \ta^5 \tb \tb \tb \ta \tb^5$; the strings used to replace the symbols $\{0,1,\#,@\}$ are called code-blocks in the following. 
Very importantly, when rewriting the gapped sequence $(p,\gaptuple)$ by these rules, the gaps between any two consecutive symbols from $\{\ta,\tb\}$ inside the encoding of one of $\{0,1,\#,@\}$ (i.\,e., the gaps between consecutive symbols inside a code-block) are zero. 

Moreover, we adapt the non-zero gaps previously used in our reduction by multiplying all the bounds with $14$ (i.e., the length of the block codes). 

Now, the proof of Theorem \ref{lowerBoundLength} works exactly as described above, as the prefixes $\ta^5$ and $\tb^5$ used in each encoding ensure that code-blocks can only align with corresponding code-blocks in an embedding of $(p,\gaptuple)$ in $w$. 
\end{proof}
We preferred, for the sake of accessibility, to state and prove Theorem \ref{lowerBoundLength} for $|\Sigma|=4$, rather than for $|\Sigma|=2$. That is, we preferred to not complicate further a (stable) proof which is already quite technically involved, with the added benefit being relatively minor. We think that the ideas we gave in this sketch are convincing enough to support the claim that the result holds for $|\Sigma|=2$ as well, and a full proof will be added in the full journal version of this paper.

\section{The Analysis Problems for Subsequences}\label{sec:ConP}

We briefly overview what is known about the three analysis problems for simple types of gap constraints.

In the case of constraints $\gaptuple=(\{\emptyword\}, \ldots, \{\emptyword\})$ or $\gaptuple=(\Sigma^{\ell_1}, \ldots, \Sigma^{\ell_k})$, i.\,e., when the corresponding gapped sequences are words or partial words, all three problems can be solved in polynomial time, as $\subseqSet{\gaptuple}{w}$ has polynomial size in $|w|$. More efficient solutions can be easily obtained, based on, e.\,g., string processing data structures like suffix arrays. In the case of constraints $\gaptuple=(\Sigma^*, \ldots, \Sigma^*)$, i.\,e., when we deal with classical subsequences, $\subseqSet{\gaptuple}{w}$ is no longer of polynomial size, but the equivalence problem corresponds to the problem of testing the Simon congruence of two words, which can be tested in linear time \cite{dlt2020,GawrychowskiEtAl2021}. Universality can also be solved in linear time \cite{TCS::Hebrard1991,dlt2020}, while containment can be solved in polynomial time. 

\begin{proposition}
$\contProb$ for gap constraints $\gaptuple=((0,+\infty), \ldots, (0,+\infty))$ can be solved in polynomial time.
\end{proposition}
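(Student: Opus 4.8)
The plan is to exploit the fact that, since every constraint equals $\Sigma^*$, the set $\subseqSet{\gaptuple}{w}$ is simply the set of all length-$k$ subsequences of $w$ (and likewise $\subseqSet{\gaptuple}{w'}$ for $w'$), where $k = |\gaptuple| + 1$. Hence $\subseqSet{\gaptuple}{w} \subseteq \subseqSet{\gaptuple}{w'}$ \emph{fails} precisely when there is a word $u \in \Sigma^k$ with $u \subseq w$ and $u \nsubseq w'$. The first step is therefore to reformulate containment as the non-existence of such a \emph{separating} word $u$, and to search for $u$ by an automaton-based reachability argument rather than by enumerating the (possibly exponentially many) elements of $\subseqSet{\gaptuple}{w}$.

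Next I would use the classical subsequence automaton of a string $v$: its states are $\{0, 1, \ldots, |v|\}$ together with a sink $\bot_v$, and on reading $a \in \Sigma$ from state $i$ it moves to the position of the leftmost occurrence of $a$ in $v[i+1..|v|]$, or to $\bot_v$ if no such occurrence exists. Running this deterministic automaton on $u$ from state $0$ avoids $\bot_v$ if and only if $u \subseq v$; this is exactly the canonical (greedy) embedding recalled in the introduction. Each such automaton has $O(|v|)$ states and can be built in $O(|v|\,|\Sigma|)$ time. I build one for $w$, with transition function $\delta_w$, and one for $w'$, with transition function $\delta_{w'}$.

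The core step is to form the product of the two automata and look for a path of length \emph{exactly} $k$ witnessing a separating word. Concretely, let $R_0 = \{(0,0)\}$ and, for $\ell \geq 0$, set $R_{\ell+1} = \{(\delta_w(i,a), \delta_{w'}(j,a)) \mid (i,j) \in R_\ell,\ a \in \Sigma\}$, so that $R_\ell$ is the set of product states reachable by reading exactly $\ell$ symbols from $(0,0)$. A routine induction on $\ell$ confirms this invariant. After computing $R_k$, I answer ``not contained'' if and only if $R_k$ contains some state $(i, \bot_{w'})$ with $i \neq \bot_w$; such a state is reached exactly by the words $u \in \Sigma^k$ that stay alive in $w$ (so $u \subseq w$) yet die in $w'$ (so $u \nsubseq w'$). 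Since each $R_\ell$ has at most $|w|\cdot|w'|$ relevant states and we iterate $k$ layers, each state spawning at most $|\Sigma|$ transitions, the procedure runs in $O(k\,|w|\,|w'|\,|\Sigma|)$ time, which is polynomial. The case $k > |w|$ needs no special treatment: reading a symbol always advances the $w$-position by at least one, so after more than $|w|$ steps no state is alive in $w$, correctly yielding containment of the empty set.

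I expect the main obstacle to be conceptual rather than computational: the two sets being compared may have exponential size, so the essential point is the reduction of the universally quantified containment statement to a single bounded-length reachability query on a product of polynomial-size automata, together with the observation that the greedy subsequence automaton faithfully captures $\subseq$. Once that reduction is in place, correctness follows from the induction on $\ell$ establishing the invariant for $R_\ell$, and the polynomial running time is immediate. The only technical care is in the layered dynamic program: because we must track length \emph{exactly} $k$ rather than at most $k$, we iterate the reachable-set recurrence precisely $k$ times instead of computing plain reachability.
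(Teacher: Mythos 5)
Your proposal is correct and follows essentially the same route as the paper: both reduce containment to the existence of a length-$\leq k$ (equivalently, exactly-$k$) word accepted by the subsequence automaton of $w$ but rejected by that of $w'$, and both detect such a word via the product of the two automata (the paper phrases this as computing the shortest word in the intersection of $\lang{A_w}$ with the complement automaton for $w'$, which saves the extra factor $k$ in your $O(k\,|w|\,|w'|\,|\Sigma|)$ bound, but both are polynomial).
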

\begin{proof}
We start with our input words $w$ and $w'$. For $w$ we construct $A_w$, the subsequence-automaton \cite{CrochemoreMT03} which accepts all the subsequences of $w$. 

Assume $|w|=n$. Then, $A_w$ is a deterministic finite automaton, which has n+2 states $\{0,...,n,n+1\}$. The initial state of this automaton is $0$, and all states $i$, with $i\in [n]$ are final; the state $n+1$ is an error state. The transition are defined as follows: for all $i\in [n], k\in [n-i]$, and $a\in \Sigma$, we have a transition from state $i$ to $i+k$, labelled with $a$, if and only if $w[i+k]=a$ and $w[i+1..i+k-1]$ does not contain the letter $a$. Moreover, for all $i\in [n]$ and $a\in \Sigma$, we have a transition from state $i$ to state $n+1$, labelled with letter $a$, if and only if $w[i+1..n]$ does not contain $a$. For state $n+1$ we have loop-transitions for all letters $a\in \Sigma$. It is straightforward that $A_w$ accepts exactly the non-empty subsequences of $w$ and can be constructed in $O(n|\Sigma|)$ time.

For the word $w'$, with $|w'|=m$, we construct the automaton $A_{w'}$, as above, and then modify it to obtain the automaton $B_{w'}$ by simply making the state $m+1$ the single final state. Clearly, $B_{w'}$ accepts all strings which are not subsequences of $w'$. It can be constructed in $O(m|\Sigma|)$ time.

Now, if $\gaptuple$ consists of $k-1$ constraints, we can observe that  $\subseqSet{\gaptuple}{w} \subseteq \subseqSet{\gaptuple}{w'}$ if and only if there exists a word of length at most $k$ accepted of length by $A_w$ which is also accepted by $B_{w'}$. This can be checked in $O(nm|\Sigma|)$ time by simply computing the shortest word in the intersection of the language accepted by $A_w$ with the language accepted by $B_{w'}$. Our statement follows.
\end{proof}

It remains an interesting open problem whether a linear time algorithm exists for the problem approached in the proposition above. 

\section{Proof Details Omitted From Section~\ref{sec:NonUniversality}}\label{sec:NonUniversalityAppendix}

\subsection{Proof of Theorem~\ref{upperBoundUnboundedAlphabets}}

\begin{proof}
We first consider the case of length constraints. Let $\gaptuple$ be a $(k-1)$-tuple of length constraints, let $w, w' \in \Sigma^*$. In order to check $\subseqSet{\gaptuple}{w} \subseteq \subseqSet{\gaptuple}{w'}$, it is sufficient to check for every $v \in \Sigma^k$ whether $v \subseq_{\gaptuple} w$ implies $v \nsubseq_{\gaptuple} w'$. This can be done by solving $|\Sigma|^k$ times the matching problem for $p$, $\gaptuple$, and $w$ and $w'$, respectively. Since, by Corollary~\ref{constantPatternsRL}(1), the matching problem with length constraints can be solved in time $O(|w|\nz(\gaptuple))$, the statement of the theorem for problem $\contProb$ follows. Moreover, since $\equiProb$ can be solved by solving two $\contProb$-instances, the statement of the theorem for problem $\equiProb$ follows. Finally, for $\uniProb$, it is sufficient to check for every $v \in \Sigma^k$ whether $v \subseq_{\gaptuple} w$. Hence, analogously as before, the statement of the theorem for problem $\uniProb$ follows.\par
The case with reg-len or just regular constraints can be handled analogously, just by applying the matching upper bounds $O(|w|\states(\gaptuple))$ of Theorem \ref{constantPatternsRegularLength} and Corollary~\ref{constantPatternsRL}(2).
\end{proof}

\subsection{Decision Problems Used for Lower Bounds}

We first define two well-known decision problems. \par
The problem $\SatProb$ is defined as follows: The input is a Boolean formula $F$ in CNF, i.\,e., a set of clauses $F = \{c_1, c_2, \ldots, c_q\}$ over some set of variables $V = \{v_1, v_2, \ldots, v_k\}$, i.\,e., for every $i \in [q]$, we have $c_i \subseteq \{v_1, \neg v_1, \ldots, v_k, \neg v_k\}$. The question is whether $F$ is satisfiable, i.\,e., whether there is an assignment $\pi : V \to \{0, 1\}$ that makes at least one literal of each clause $c_i$ true.\par
The (parameterised) problem $\kISProb$ is defined as follows: The input is an undirected graph $G = (V, E)$ with $V = \{v_1, v_2, \ldots, v_n\}$ and some $k \in [n]$. The question is whether $G$ has a $k$-independent set, i.\,e., a set $A \subseteq V$ with $|A| = k$ and $\{u, u'\} \notin E$ for every $u, u' \in V$ with $u \neq u'$. For convenience, we assume in the following that undirected graphs $G = (V, E)$ are represented by symmetric directed graphs, i.\,e., $E$ is a symmetric binary relation over $V$. \par
For the sake of convenience, in the following we define reductions that will prove the statements of Theorems~\ref{binaryCaseHardnessTheorem},~\ref{HardnessNonUniversalityBounded},~and~\ref{HardnessNonUniversalityUnbounded} only for the non-universality problem $\nuniProb$. We will discuss in Section~\ref{sec:ContEquiAppendix} how the statements of Theorems~\ref{binaryCaseHardnessTheorem},~\ref{HardnessNonUniversalityBounded},~and~\ref{HardnessNonUniversalityUnbounded} also follow for the problems $\ncontProb$ and $\nequiProb$.

\subsection{The Meta Non-Universality Problem}

We now define a \emph{meta non-universality problem} ($\metaNuniProb$ for short) that can be easily used to express other intractable decision problems. An instance of this meta non-universality problem is defined as follows. \par
Let $\Gamma = \{b_1, b_2, \ldots, b_m\}$ be some set of size $m$, and let $q, k \in \mathbb{N}$. An instance of the problem is a $(q \times k)$-matrix the entries $W_{i, j}$ of which are subsets of $\Gamma$. For every $i \in [q]$, we associate with row $i$ of the matrix the language $\lang{W_i} = W_{i, 1} \cdot W_{i, 2} \cdots W_{i, k}$, i.\,e., we simply represent the elements of $W_{i, 1} \times W_{i, 2} \times \ldots \times W_{i, k}$ as length-$k$ strings over $\Gamma$ in the natural way. The question is then to decide whether $\cup_{i \in [q]} \lang{W_i} \neq \Gamma^k$.\par
As an example, let $\Gamma = \{\ta, \tb, \tc, \td, \te\}$, $q = 4$, and $k = 3$. Then the following matrix is a possible instance:
\begin{equation*}
\begin{pmatrix}
\{\ta\} & \{\tb\} & \{\ta, \tc, \td\}\\
\{\tc, \td\} & \{\tb\} & \{\ta\}\\
\{\ta\} & \{\tb, \tc, \td, \te\} & \{\td\}\\
\{\te\} & \{\tb\} & \{\tc, \te\}
\end{pmatrix}
\end{equation*}

We observe that, e.\,g., $W_{3, 2} = \{\tb, \tc, \td, \te\}$ and $W_{4, 1} = \{\te\}$. Moreover, $\lang{W_1} = \{\ta\} \cdot \{\tb\} \cdot \{\ta, \tc, \td\} = \{\ta \tb \ta, \ta \tb \tc, \ta \tb \td\}$ and $\lang{W_4} = \{\te\tb\tc, \te\tb\te\}$. It can be easily seen that for this instance we have $\cup_{i \in [4]} \lang{W_i} \neq \Gamma^3$, e.\,g., $\cup_{i \in [4]} \lang{W_i}$ does not contains strings that start with $\tb$.\par

Next, we will show how the problem $\SatProb$ and the parameterised problem $k$-CLIQUE can be reduced to $\metaNuniProb$, which is comparatively simple. After this, we will show how $\metaNuniProb$ reduces to $\nuniProb$ with length constraints, which requires more work.

\subsection{From $\SatProb$ to $\metaNuniProb$}\label{sec:SatReduction}

Let $F = \{c_1, c_2, \ldots, c_q\}$ be a Boolean formula in CNF on variables $\{v_1, v_2, \ldots, v_k\}$. We define alphabet $\Gamma = \{0, 1\}$ and the $(q \times k)$-matrix with the entries $W_{i, j}$ as follows (note that $q$ and $k$ are already defined as the number of clauses and the number of variables, respectively). For every $i \in [q]$ and $j \in [k]$, we define
\begin{equation*}
W_{i, j} = \begin{cases}
\{0\}& \text{if $v_j \in c_i$},\\
\{1\}& \text{if $\neg v_j \in c_i$},\\
\{0, 1\}& \text{if $\{v_j, \neg v_j\} \cap c_i = \emptyset$}.
\end{cases}
\end{equation*}

It can be verified with moderate effort, that for every $i \in [q]$, $\lang{W_i}$ contains exactly the Boolean assignments that do not satisfy clause $c_i$ (where Boolean assignments are represented as length-$k$ strings over $\{0, 1\}$ in the obvious way). This means that $\cup_{i \in [q]} \lang{W_i}$ is the set of all non-satisfying assignments. Thus, $\cup_{i \in [q]} \lang{W_i} \neq \{0, 1\}^k$ if and only if $F$ is satisfiable, i.\,e., the constructed $\metaNuniProb$ instance is positive if and only if $F$ is satisfiable. 

As an example, consider the clauses $c_1 = \{v_1, \neg v_2, v_3\}$, $c_2 = \{\neg v_1, \neg v_2, v_5\}$, $c_3 = \{v_3, v_4, v_5\}$ over the variables $\{v_1, v_2, \ldots, v_6\}$, which yields the following $\metaNuniProb$ instance:
\begin{equation*}
\begin{pmatrix}
\{0\} & \{1\} & \{0\} & \{0, 1\} & \{0, 1\} & \{0, 1\} &\\
\{1\} & \{1\} & \{0, 1\} & \{0, 1\} & \{0, 1\} & \{0\} &\\
\{0, 1\} & \{0, 1\} & \{0\} & \{0\} & \{0\} & \{0, 1\} &
\end{pmatrix}
\end{equation*}

In this case, $100010 \notin \cup_{i \in [3]} \lang{W_i}$, which means that $100010$ is a satisfying assignment.

\subsection{From $\kISProb$ to $\metaNuniProb$}\label{sec:kCliqueReduction}

Let $G = (V, E)$ be an undirected graph with $V = \{v_1, v_2, \ldots, v_n\}$ and $E = \{e_1, e_2, \ldots, e_m\}$, and let $k \in [n]$. For technical reasons, we assume that every $v_i \in V$ has a loop, i.\,e., $E$ is reflexive (note that this modification does not change whether a given set is an independent set or not); in particular, these loops are explicitly included in the set $E$.
We define $\Gamma = V$.\par
We generally interpret strings $w \in \Gamma^k$ as sets $V(w) = \{w[j] \mid j \in [k]\}$ of at most $k$ vertices. For every $r, s \in [k]$ with $r \neq s$, we say that \emph{$w$ contains edge $e \in E$ at positions $r, s \in [k]$} if $e = (w[r], w[s])$.
Every string $w \in \Gamma^k$ with $|V(w)| < k$ satisfies $w[r] = w[s]$ for some $r, s \in [k]$ with $r \neq s$, and, since every vertex has a loop, this means that $w$ necessarily contains an edge. 
Hence, for every $w \in \Gamma^k$, we have that $w$ contains at least one edge if and only if $V(w)$ is not a $k$-independent set (i.\,e., it is either not an independent set or it is a set of cardinality strictly less than $k$).\par
We fix some bijection $\nu : \{(i, r, s) \in [m] \times [k] \times [k] \mid r \neq s\} \to [m k(k-1)]$. For every $i \in [m]$ with $e_i = (u, v)$, and every $r, s, j \in [k]$ with $r \neq s$, we define sets 
\begin{equation*}
W_{\nu(i, r, s), j} = \begin{cases}
\{u\}& \text{if $j = r$},\\
\{v\}& \text{if $j = s$},\\
V& \text{else}.
\end{cases}
\end{equation*}
This concludes the definition of the reduction. Note that here the number of columns of the constructed matrix corresponds to the number $k$, while the number of rows is $q := m k(k-1)$. Let us explain this reduction intuitively.\par
For a fixed $e_i \in E$ and fixed positions $r, s \in [k]$ with $r \neq s$, the set 

\begin{equation*}
\lang{W_{\nu(i, r, s)}} = W_{\nu(i, r, s), 1} W_{\nu(i, r, s), 2} \ldots W_{\nu(i, r, s), k} 
\end{equation*}

contains exactly the strings $w \in \Gamma^k$ that contain the edge $e_i$ at positions $r$ and $s$. This means that $\lang{W_{\nu(i, r, s)}}$ represents exactly the sets of cardinality at most $k$ that contain edge $e_i$ (and are therefore not $k$-independent sets). Note that if $e_i$ is a loop, then $V(w)$ might be an independent set, but one of cardinality strictly less than $k$. \par
As an example, assume that $G$ contains an edge $e_9 = (v_3, v_7)$ and that $k = 3$, then this edge is represented in the matrix by the following rows:
\begin{align*}
&\text{row } \nu(9, 1, 2):& &\{v_3\}& &\{v_7\}& &V&\\
&\text{row } \nu(9, 1, 3):& &\{v_3\}& &V& &\{v_7\}&\\
&\text{row } \nu(9, 2, 1):& &\{v_7\}& &\{v_3\}& &V&\\
&\text{row } \nu(9, 2, 3):& &V& &\{v_3\}& &\{v_7\}&\\
&\text{row } \nu(9, 3, 1):& &\{v_7\}& &V&  &\{v_3\}&\\
&\text{row } \nu(9, 3, 2):& &V& &\{v_7\}& &\{v_3\}&
\end{align*}

For example, $\lang{W_{\nu(9, 3, 1)}} = \{v_7 v_1 v_3, v_7 v_2 v_3, v_7 v_3 v_3, v_7 v_4 v_3, \ldots\}$ represents all vertex sets of cardinality at most $3$ that contain edge $(v_3, v_7)$ at positions $3$ and $1$; observe that $v_7 v_3 v_3$ represents the vertex set $\{v_3, v_7\}$. Moreover, if $e_{11} = (v_7, v_3)$, then rows $\nu(9, 1, 2)$ and $\nu(11, 2, 1)$ are the same. This redundancy could be avoided, but it would unnecessarily complicate the reduction.

\begin{lemma}
$\bigcup_{i \in [m], r, s \in [k], r \neq s} \lang{W_{\nu(i, r, s)}} \neq \Gamma^k$ if and only if $G$ has a $k$-independent set.
\end{lemma}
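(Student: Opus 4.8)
The plan is to pin down exactly which strings lie in the big union, and then read off the equivalence by taking complements in $\Gamma^k$. As already established in the discussion preceding the lemma, for each fixed edge $e_i \in E$ and each pair of positions $r, s \in [k]$ with $r \neq s$, the set $\lang{W_{\nu(i, r, s)}}$ is precisely the set of strings $w \in \Gamma^k$ that contain $e_i$ at positions $r, s$, i.e., with $(w[r], w[s]) = e_i$. Taking the union over all $i \in [m]$ and all position pairs $r \neq s$ therefore yields exactly the set of strings $w \in \Gamma^k$ that contain \emph{some} edge of $G$ at \emph{some} pair of distinct positions. So I would first record this identity as the single structural fact driving the proof.

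Second, I would invoke the observation (also already noted) that a string $w \in \Gamma^k$ contains at least one edge at a pair of distinct positions if and only if $V(w)$ is \emph{not} a $k$-independent set. Combining this with the first step, the union equals $\{w \in \Gamma^k \mid V(w) \text{ is not a } k\text{-independent set}\}$, and hence its complement in $\Gamma^k$ is exactly $\{w \in \Gamma^k \mid V(w) \text{ is a } k\text{-independent set}\}$. The lemma then follows immediately: the union differs from $\Gamma^k$ iff this complement is nonempty iff $G$ admits a $k$-independent set.

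The part that deserves the most care — and where the reflexivity (loop) assumption does its work — is the equivalence used in the second step, so I would spell this out in both directions. For the forward direction, suppose $w \notin \bigcup_{i, r, s} \lang{W_{\nu(i,r,s)}}$; then for all $r \neq s$ we have $(w[r], w[s]) \notin E$. Since every vertex carries a loop, $w[r] = w[s]$ would force $(w[r], w[s]) \in E$; hence all symbols of $w$ are pairwise distinct, so $|V(w)| = k$, and the absence of edges between distinct positions makes $V(w)$ an independent set — thus a genuine $k$-independent set. For the converse, given a $k$-independent set $A$ with $|A| = k$, list its vertices as a string $w$ of $k$ distinct symbols; then $V(w) = A$ and, since $A$ is independent and $w$ has no repeated symbol, no pair $(w[r], w[s])$ with $r \neq s$ is an edge, so $w$ avoids the union.

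The main obstacle, such as it is, is purely the bookkeeping around loops and string length: one must confirm that strings with repeated symbols (which encode vertex sets of cardinality strictly below $k$) are always swept into the union, so that only authentic $k$-element independent sets survive in the complement. Once the loop convention is used correctly, no further difficulty remains, and the rest is a direct translation between the two formulations.
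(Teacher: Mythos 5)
Your proof is correct and follows essentially the same route as the paper's: both identify the union as exactly the strings containing some edge at some pair of distinct positions, use the loop convention to sweep in strings with repeated symbols, and conclude that the complement of the union consists precisely of the strings representing genuine $k$-independent sets. The only difference is presentational (you phrase it via the complement; the paper argues the two implications directly), so nothing further is needed.
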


\begin{proof}
If $G$ has a $k$-independent set $A$, then $A = \{v_{t_1}, v_{t_2}, \ldots, v_{t_k}\}$ with $|A| = k$ such that, for every $\ell, \ell' \in [k]$ with $\ell \neq \ell'$, $(v_{t_{\ell}}, v_{t_{\ell'}}) \notin E$, which means that every $w \in \Gamma^k$ with $V(w) = A$ has no edge at any positions. This means that $w \notin \bigcup_{i \in [m], r, s \in [k], r \neq s} \lang{W_{\nu(i, r, s)}}$. Hence, $w \notin \bigcup_{i \in [m], r, s \in [k], r \neq s} \lang{W_{\nu(i, r, s)}}$ for every $w \in \Gamma^k$ with $V(w) = A$, and therefore $\bigcup_{i \in [m], r, s \in [k], r \neq s} \lang{W_{\nu(i, r, s)}} \neq \Gamma^k$ .\par
Next, assume that $G$ has no $k$-independent set and let $w \in \Gamma^k$ be arbitrarily chosen. If $|V(w)| < k$, then there are $r, s \in [k]$ with $r \neq s$ and $w[r] = w[s] = v_{\ell}$ for some $\ell \in [n]$. This means that $w \in \lang{W_{\nu(i, r, s)}}$ with $e_i = (v_{\ell}, v_{\ell})$. If $|V(w)| = k$, then, since $V(w)$ is not an independent set, there is an edge $e_i = (v_{\ell}, v_{\ell'})$ with $\{v_{\ell}, v_{\ell'}\} \subseteq V(w)$, and therefore there are $r, s \in [k]$ with $r \neq s$ and $(w[r], w[s]) = e_i$, which means that $w \in \lang{W_{\nu(i, r, s)}}$. Consequently, $\Gamma^k \subseteq \bigcup_{i \in [m], r, s \in [k], r \neq s} \lang{W_{\nu(i, r, s)}}$, which means that $\bigcup_{i \in [m], r, s \in [k], r \neq s} \lang{W_{\nu(i, r, s)}} = \Gamma^k$.
\end{proof}

\subsection{Reduction From $\metaNuniProb$ to $\nuniProb$}\label{sec:theReduction}

We no define a reduction from $\metaNuniProb$ to $\nuniProb$ with length constraints. By doing so, we obtain a reduction from $\SatProb$ to $\nuniProb$ with length constraints, and a (parameterised) reduction from $\kISProb$ to $\nuniProb$ with length constraints.\par
Let $\Gamma = \{b_1, b_2, \ldots, b_m\}$, $q, k \in \mathbb{N}$, and, for every $i \in [q], j \in [k]$, let $W_{i, j} \subseteq \Gamma$. We transform this $\metaNuniProb$ instance into an instance of $\nuniProb$ with length constraints (i.\,e., a string and a $(k-1)$-tuple of gap constraints), as follows. We first define the alphabet $\Sigma = \Gamma \cup \{\#\}$ (with $\# \notin \Gamma$). Then we define a $(k-1)$-tuple $\gaptuple = (C_1, C_2, \ldots, C_{k-1})$ of gap constraints with $C_i = (\lowerBoundShort{i}, \upperBoundShort{i}) = (m-1, 3m-1)$ for every $i \in [k-1]$ (recall that $m$ is $\Gamma$'s cardinality). In order to conclude the reduction, we have to construct a string $K(W_1, \ldots, W_q)$ over $\Sigma$, such that $\subseqSet{\gaptuple}{K(W_1, \ldots, W_q)} = \Sigma^k$ if and only if $\cup_{i \in [q]} \lang{W_i} = \Gamma^k$.

For every $i \in [q]$ and $j \in [k]$, let $w_{i, j} \in \Gamma^*$ be some string representation of $W_{i, j}$, i.\,e., $\alphabet{w_{i, j}} = W_{i, j}$ and $|w_{i, j}| = |W_{i, j}| \leq m$. 
For every $i \in [q]$, we define the string
\begin{equation*}
S(W_{i}) = w_{i, 1} (\#)^{m-1} w_{i, 2} (\#)^{m-1} \ldots (\#)^{m-1} w_{i, k}\,.
\end{equation*}

Any $\gaptuple$-subsequence of $S(W_{i})$ may or may not contain occurrences of symbol $\#$. However, those $\gaptuple$-subsequence of $S(W_{i})$ that do not contain occurrences of symbol $\#$ are exactly the strings in $\lang{W_i}$. This is stated by the next lemma.

\begin{lemma}\label{SingleWiLemmaAppendix}
For every $i \in [q]$, $(\subseqSet{\gaptuple}{S(W_{i})} \cap \Gamma^*) = \lang{W_i}$.
\end{lemma}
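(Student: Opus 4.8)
The plan is to prove the two inclusions separately, writing $S = S(W_i)$ and $w_j = w_{i,j}$ for brevity, and calling each $w_j$ a $\Gamma$-\emph{block}; the $\Gamma$-blocks are separated by the \emph{separators} $\#^{m-1}$, each block has length $|w_j| = |W_{i,j}| \le m$, and $\alphabet{w_j} = W_{i,j}$. Throughout I rely on the fact that all constraints are $\gaptuple[j] = (\lowerBoundShort{j}, \upperBoundShort{j}) = (m-1, 3m-1)$. Since $\gaptuple$ is a $(k-1)$-tuple, every $\gaptuple$-subsequence has length $k$.

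For the inclusion $\lang{W_i} \subseteq (\subseqSet{\gaptuple}{S} \cap \Gamma^*)$, I would take any $u = u_1 \cdots u_k \in \lang{W_i}$, so $u_j \in W_{i,j} = \alphabet{w_j}$, and define an embedding $e$ by sending position $j$ to some occurrence of $u_j$ inside the $j$-th $\Gamma$-block (which exists by the alphabet condition); since block $j$ entirely precedes block $j+1$, this $e$ is strictly increasing. The only thing to check is that $e$ satisfies $\gaptuple$: the $j$-th gap consists of a (possibly empty) suffix of $w_j$, the full separator $\#^{m-1}$, and a (possibly empty) prefix of $w_{j+1}$, so its length is at least $m-1$ and at most $(|w_j|-1) + (m-1) + (|w_{j+1}|-1) \le 3(m-1) \le 3m-1$. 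Hence $u$ is a $\#$-free $\gaptuple$-subsequence of $S$.

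The reverse inclusion is the substantive direction. Given any $\#$-free $\gaptuple$-subsequence $u = u_1\cdots u_k$ via an embedding $e$, every $e(j)$ points to a $\Gamma$-symbol and hence lies in some $\Gamma$-block; let $\beta(j)$ denote its index. The key step, and the point where the constants are calibrated, is that two positions lying in the same block $w_b$ are at distance at most $|w_b| - 1 \le m-1$, so the gap strictly between them would have length at most $m - 2 < m - 1$, violating the lower bound $\lowerBoundShort{j} = m-1$. Therefore consecutive images $e(j), e(j+1)$ lie in distinct blocks, and since $e(j) < e(j+1)$ this forces $\beta(1) < \beta(2) < \cdots < \beta(k)$.

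Finally I would close with a counting argument: $\beta$ is a strictly increasing map from $[k]$ into $[k]$, so necessarily $\beta(j) = j$ for every $j$. Thus $u_j = S[e(j)] \in \alphabet{w_j} = W_{i,j}$ for all $j$, i.e. $u \in W_{i,1}\cdots W_{i,k} = \lang{W_i}$. I expect the main obstacle to be conceptual rather than computational: one must notice that it is the \emph{lower} bound $m-1$ (matched against the maximal block length $m$) that forbids two selected symbols from sharing a block, after which the pigeonhole step—not the upper bound—is what forces each block to be hit exactly once and in order; the upper bound $3m-1$ is needed only for feasibility in the easy direction. If some $W_{i,j} = \emptyset$ then $\lang{W_i} = \emptyset$, and the same strictly-increasing/counting argument shows that no $\#$-free $\gaptuple$-subsequence can hit the empty block $j$, so both sides are empty and the identity still holds.
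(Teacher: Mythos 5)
Your proof is correct and follows essentially the same route as the paper's: the lower bound $m-1$ against the maximal block length $m$ forbids two consecutive images from sharing a block, the pigeonhole/counting step forces position $j$ into block $w_{i,j}$, and the upper bound is only needed for feasibility in the easy direction. Your version merely makes the counting step (via the map $\beta$) and the empty-block edge case explicit where the paper leaves them implicit.
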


\begin{proof}
We first prove that $(\subseqSet{\gaptuple}{S(W_{i})} \cap \Gamma^*) \subseteq \lang{W_i}$. Let $p \in (\subseqSet{\gaptuple}{S(W_{i})} \cap \Gamma^*)$. This means that there is an embedding $e$ that satisfies $\gaptuple$ and $p \subseq_e S(W_{i})$.
If, for some $\ell \in [k - 1]$, $\ell$ and $\ell + 1$ are mapped by $e$ to positions of the same factor $w_{i, j}$ for some $j \in [k]$, then $|\gap{S(W_{i})}{e}{\ell}| \leq |w_{i, j}| - 2 \leq m - 2$ and therefore the lower length constraint $\lowerBoundShort{\ell} = m-1$ would be violated. Consequently, since $p \in \Gamma^*$, the embedding $e$ must be such that, for every $j \in [k]$, $j$ is mapped to a position of $w_{i, j}$. This directly implies that $p \in \lang{W_i}$. \par
Next, we prove that $\lang{W_i} \subseteq (\subseqSet{\gaptuple}{S(W_{i})} \cap \Gamma^*)$. Let $u \in \lang{W_i}$. By construction of $S(W_{i})$, for every $j \in [k]$, there is an occurrence of $u[j]$ in $w_{i, j}$. Thus, there is an embedding $e$ that maps each $j \in [k]$ to such an occurrence of $u[j]$ in $w_{i, j}$. Consequently, $u \subseq_{e} S(W_{i})$. It only remains to observe that $e$ satisfies the length constraints. To this end, let $j \in [k - 1]$. This means that $e$ maps $j$ to a position of the prefix $w_{i, j}$ of the factor $w_{i, j} (\#)^{m-1} w_{i, j + 1}$, and $e$ maps $j+1$ to a position of the suffix $w_{i, j + 1}$ of the factor $w_{i, j} (\#)^{m-1} w_{i, j + 1}$. Since $|w_{i, j}| \leq m$ and $|w_{i, j + 1}| \leq m$, we have that $m - 1 \leq |\gap{S(W_{i})}{e}{j}| \leq 3(m-1)$. This means that $e$ satisfies the length constraints. Hence, $u \in (\subseqSet{\gaptuple}{S(W_{i})} \cap \Gamma^*)$.
\end{proof}

The above lemma shows that $\gaptuple$-subsequences over $\Gamma$ of the strings $S(W_{i})$ corresponds to the strings of $\cup_{i \in [q]} \lang{W_i}$. We next define a string $T$ that satisfies that all $\subseqSet{\gaptuple}{T}$ is the set of all length-$k$ strings over $\Sigma$ with at least one occurrence of the symbol $\#$.\par
For every $i \in [k]$, let $T_i = T_{i, 1} T_{i, 2} \ldots T_{i, k}$, where, for every $j \in [k] \setminus \{i\}$, $T_{i, j} = b_1 b_2 \ldots b_m \#^{m}$, and $T_{i, i} = \#^{m}$. We define the string $T$ by
\begin{equation*}
T = T_1 (\#^{3m}) T_2 (\#^{3m}) \ldots (\#^{3m}) T_k\,. 
\end{equation*}

\begin{lemma}\label{sharpUnivStringLemmaAppendix}
$\subseqSet{\gaptuple}{T} = \{w \in \Sigma^k \mid |w|_{\#} \geq 1\}$.
\end{lemma}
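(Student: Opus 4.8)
The plan is to prove the two inclusions $\subseqSet{\gaptuple}{T} \subseteq \{w \in \Sigma^k \mid |w|_{\#} \geq 1\}$ and $\{w \in \Sigma^k \mid |w|_{\#} \geq 1\} \subseteq \subseqSet{\gaptuple}{T}$ separately, exactly in the spirit of the proof of Lemma~\ref{SingleWiLemmaAppendix}. The governing intuition, already stated before the lemma, is that the large blocks $\#^{3m}$ of length $3m$ serve as \emph{separators} that cannot be crossed by any single gap: since every constraint is $C_i = (m-1, 3m-1)$ and the upper bound $\upperBoundShort{i} = 3m-1$ is strictly smaller than $3m$, no gap can ``jump over'' a full separator block. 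I expect to formalise this as a preliminary observation and use it as the workhorse for both inclusions.

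For the forward inclusion, I would take $w \in \subseqSet{\gaptuple}{T}$ with witnessing embedding $e : [k] \to [|T|]$ satisfying $\gaptuple$, and first argue that $e$ must map all $k$ positions into a single block $T_i$. The key step is a gap-length count: the separator $\#^{3m}$ has length $3m$, so if positions $j$ and $j+1$ were mapped to distinct blocks $T_a$ and $T_b$ with $a<b$, the induced gap $\gap{T}{e}{j}$ would contain at least one whole separator and thus have length $\geq 3m > 3m-1 = \upperBoundShort{j}$, violating the upper length constraint. Hence $e(1), \ldots, e(k)$ all lie within one $T_i$. Now within $T_i = T_{i,1} \cdots T_{i,k}$, the same lower-bound argument as in Lemma~\ref{SingleWiLemmaAppendix} (two consecutive positions cannot land in the same length-$m$ sub-block $T_{i,j}$, else the gap is $\leq m-2 < m-1 = \lowerBoundShort{j}$) forces position $j$ into $T_{i,j}$ for every $j \in [k]$; a careful treatment of the separators $\#^m$ between the $T_{i,j}$ is needed to confirm the gap lengths fall in $[m-1, 3m-1]$. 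In particular, position $i$ is mapped into $T_{i,i} = \#^m$, so $w[i] = \#$ and therefore $|w|_{\#} \geq 1$.

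For the reverse inclusion, I would take any $w \in \Sigma^k$ with $|w|_{\#} \geq 1$, pick an index $i$ with $w[i] = \#$, and construct an embedding into $T_i$. For $j \neq i$, position $j$ can be mapped to some occurrence of $w[j]$ in the prefix $b_1 \cdots b_m$ of $T_{i,j}$ if $w[j] \in \Gamma$, or to a $\#$ in $T_{i,j}$ if $w[j] = \#$; for $j = i$, position $i$ is mapped to a $\#$ in $T_{i,i} = \#^m$. The remaining routine verification is that consecutive positions, both lying in adjacent blocks $T_{i,j}$ and $T_{i,j+1}$ separated by $\#^m$, induce a gap whose length lies in $[m-1, 3m-1]$, which follows from $|T_{i,j}| = 2m$ (or $m$ when $j=i$) and the same arithmetic as in Lemma~\ref{SingleWiLemmaAppendix}.

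The main obstacle I anticipate is the bookkeeping in the forward direction: one must verify carefully that \emph{every} gap within a single $T_i$ genuinely satisfies \emph{both} bounds simultaneously, accounting for the three possible lengths of the sub-blocks (the regular $T_{i,j}$ of length $2m$, the special $T_{i,i}$ of length $m$, and their adjacent separators $\#^m$), and that the separators $\#^{3m}$ between distinct $T_i$ really are impassable. The threshold is tight — $\upperBoundShort{i} = 3m-1$ exactly one short of the separator length $3m$, and $\lowerBoundShort{i} = m-1$ exactly one short of the sub-block length $m$ — so the argument hinges on these off-by-one boundaries being correct, and the bulk of the work will be confirming them rather than any conceptual difficulty.
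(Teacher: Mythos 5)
Your reverse inclusion and the first step of your forward inclusion (the $\#^{3m}$ separators confine all $k$ positions to a single block $T_\ell$) follow the paper, but the second step of your forward direction contains a genuine gap. You claim that, once all positions lie in one $T_i$, the lower bound $\lowerBoundShort{j}=m-1$ forces position $j$ into $T_{i,j}$ for every $j$, so that position $i$ lands in $T_{i,i}=\#^m$ and hence $w[i]=\#$. This rigid alignment is not forced: the sub-blocks $T_{i,j}$ with $j\neq i$ have length $2m$, not $m$, and two consecutive positions can legally land in the \emph{same} $T_{i,j}$ (map one to $b_1$ and the next to the first symbol of the $\#^m$ suffix; the gap then has length exactly $m-1$). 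Concretely, for $m=2$, $k=3$, $i=1$, the string $b_1\#b_2$ embeds into $T_1=\#\#\,b_1b_2\#\#\,b_1b_2\#\#$ at positions $3,6,8$ with gaps $2$ and $1$, both in $[1,5]$; here position $1$ lands in $T_{1,2}$, not $T_{1,1}$, and $w[1]\neq\#$. The lemma's conclusion still holds for this $w$ (it contains $\#$ elsewhere), but your route to it breaks down.

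The paper avoids this by arguing by contradiction only for $\#$-free strings: if $w\in\Gamma^k$, then all $k$ positions must land in the $k-1$ maximal $\Gamma$-runs of $T_\ell$ (each of length at most $m$), so by the pigeonhole principle two consecutive positions share a run, forcing a gap of length at most $m-2<\lowerBoundShort{j}$. This counting step --- $k$ positions against $k-1$ runs --- is the missing idea, and you need it (or something equivalent) in place of the forced alignment. A second, smaller omission: in your ``single block'' step you must also account for positions mapped \emph{inside} a separator $\#^{3m}$ rather than straddling one; this case is harmless because such a position already yields $w[j]=\#$, but it has to be addressed, as the paper's parenthetical does.
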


\begin{proof}
We first prove that $\subseqSet{\gaptuple}{T} \subseteq \{w \in \Sigma^k \mid |w|_{\#} \geq 1\}$, i.\,e., every string from $\subseqSet{\gaptuple}{T}$ contains at least one occurrence of symbol $\#$. 
For contradiction, we assume that there is some $p \in \subseqSet{\gaptuple}{T}$ without any occurrence of $\#$, which means that there is an embedding $e$ that satisfies the length constraints $\gaptuple$ such that $p \subseq_e T$. Since $e$ satisfies the upper length bounds $\upperBoundShort{j} = 3m - 1$ for every $j \in [k-1]$, there must be an $\ell \in [k]$ such that all positions $j \in [k]$ are mapped by $e$ to positions of the factor $T_{\ell}$ (note that otherwise either $p[j] = \#$ for some $j \in [k]$, or $|\gap{T}{e}{j}| \geq 3m$ for some $j \in [k - 1]$). Consequently, $p \subseq_{e'} T_{\ell}$ for some embedding $e'$ that satisfies the length constraints $\gaptuple$. By construction, $T_{\ell}$ contains $k - 1$ maximal factors over alphabet $\Gamma$. By assumption, $p \in \Gamma^*$, which, by the pigeonhole principle, means that there is at least one $j \in [k - 1]$ such that both $j$ and $j+1$ are mapped by $e'$ to the same maximal factor over $\Gamma$. Since each such maximal factor over $\Gamma$ has size at most $m$, it follows that $|\gap{T_{\ell}}{e'}{j}| = m-2$, which violates the lower length bound $\lowerBoundShort{j} = m - 1$ and is therefore a contradiction. \par
Next, we prove that $\{w \in \Sigma^k \mid |w|_{\#} \geq 1\} \subseteq \subseqSet{\gaptuple}{T}$. To this end, let $p \in \{w \in \Sigma^k \mid |w|_{\#} \geq 1\}$ be arbitrarily chosen, and assume that $u[r] = \#$ with $r \in [k]$ (since $|u|_{\#} \geq 1$, such an $r$ must exist). We recall that $T_r = T_{r,1} T_{r,2} \ldots T_{r,k}$, where, for every $j \in [k] \setminus \{r\}$, $T_{r, j} = b_1 b_2 \ldots b_m \#^{m}$, and $T_{r, r} = \#^{m}$. 
We will construct an embedding $e : [k] \to [|T_r|]$ that satisfies the length constraints $\gaptuple$ such that $p \subseq_{e} T_r$. Since $T_r$ is a factor of $T$, this implies that $p \in \subseqSet{\gaptuple}{T}$, and therefore $\{w \in \Sigma^k \mid |w|_{\#} \geq 1\} \subseteq \subseqSet{\gaptuple}{T}$. \par
We map $r$ to the first position of $T_{r, r} = \#^m$, and, for every $j \in [k] \setminus \{r\}$, we map $j$ to the $t^{\text{th}}$ position of $T_{r, j}$ if $p[j] = b_t$, and to the $(m + 1)^{\text{st}}$ position of $T_{r, j}$ if $p[j] = \#$. We observe that this embedding $e$ satisfies that every $j$ is mapped to an occurrence of $p[j]$ of $T_{r, j}$, i.\,e., $p \subseq_{e} T_r$. It remains to prove that $e$ satisfies the length constraints $\gaptuple$. \par
Since every $j \in [k] \setminus \{r\}$ is mapped to one of the first $m + 1$ occurrences of the length-$2m$ factor $T_{r, j}$, and position $r$ is mapped to the first possition of the length-$m$ factor $T_{r, r}$, it can be easily seen that $m - 1 \leq |\gap{T_{r}}{e}{j}| \leq 3m - 1$. Hence, $e$ satisfies the length constraints.
\end{proof}

We are now ready to append the gadgets developed above in order to define the complete string $K(W_{1}, \ldots, W_{q})$ as follows:
\begin{equation*}
K(W_{1}, \ldots, W_{q}) = T (\#^{3m}) S(W_1) (\#^{3m}) S(W_2) (\#^{3m}) \ldots (\#^{3m}) S(W_q)\,.
\end{equation*}

Before concluding the proof of correctness of the reduction, we show the following lemma, which states that every $\gaptuple$-sequence of $K(W_{1}, \ldots, W_{q})$ without occurrences of $\#$ must be mapped into one of the factors $S(W_i)$. 

\begin{lemma}\label{limitedToSJLemmaAppendix}
Let $p \in \subseqSet{\gaptuple}{K(W_1, \ldots, W_q)}$ with $p \in \Gamma^k$. Then, for some $i \in [q]$, $p \in \subseqSet{\gaptuple}{S(W_i)}$.
\end{lemma}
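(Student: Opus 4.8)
The plan is to exploit the single detail that each separator $\#^{3m}$ is exactly one symbol longer than the largest admissible gap $\upperBoundShort{j} = 3m-1$: this will prevent any embedding whose induced subsequence is $\#$-free from bridging such a separator, thereby confining the embedding to a single ``big block'' of $K(W_1, \ldots, W_q)$, after which a direct appeal to Lemma~\ref{sharpUnivStringLemmaAppendix} eliminates the one block that would cause trouble. Write $K = K(W_1, \ldots, W_q)$ for brevity, and view $K$ as the concatenation of the blocks $B_0 = T$, $B_1 = S(W_1)$, \ldots, $B_q = S(W_q)$, where consecutive blocks are separated by one occurrence of $\#^{3m}$. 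Fix an embedding $e : [k] \to [|K|]$ that satisfies $\gaptuple$ with $\subsequence{K}{e} = p \in \Gamma^k$; since $p$ contains no $\#$, every image $e(j)$ is mapped to a symbol of $\Gamma$, and because all $\#^{3m}$ separators consist solely of occurrences of $\#$, each $e(j)$ must lie strictly inside one of the blocks $B_t$.

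The key step is to show that all of $e(1), \ldots, e(k)$ lie in the \emph{same} block. I would do this for each consecutive pair $j, j+1$: suppose $e(j)$ lies in $B_s$ and $e(j+1)$ lies in $B_{s'}$ with $s \neq s'$. Monotonicity of $e$ forces $s' > s$, so the separator $\#^{3m}$ immediately following $B_s$ lies entirely strictly between $e(j)$ and $e(j+1)$ and is therefore wholly contained in the induced gap; hence $|\gap{K}{e}{j}| \geq 3m > 3m - 1 = \upperBoundShort{j}$, contradicting that $e$ satisfies $\gaptuple$. Thus consecutive images share a block, and by transitivity there is a single index $t$ with $e(j) \in B_t$ for every $j \in [k]$. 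The restriction of $e$ is then an embedding of $p$ into $B_t$ whose induced gaps are the same factors as before, so it still satisfies $\gaptuple$; consequently $p \in \subseqSet{\gaptuple}{B_t}$.

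It remains to rule out the case $t = 0$, i.e.\ $B_t = T$. Here I invoke Lemma~\ref{sharpUnivStringLemmaAppendix}, which gives $\subseqSet{\gaptuple}{T} = \{w \in \Sigma^k \mid |w|_{\#} \geq 1\}$: since $p \in \Gamma^k$ satisfies $|p|_{\#} = 0$, we cannot have $p \in \subseqSet{\gaptuple}{T}$. Therefore $B_t = S(W_i)$ for some $i \in [q]$, giving $p \in \subseqSet{\gaptuple}{S(W_i)}$, as required. The only point that needs genuine care is the separator-bridging argument of the middle paragraph: one must check that the full length-$3m$ separator sits strictly between the two images and is hence entirely swallowed by the gap, which is precisely where the tuning $\upperBoundShort{j} = 3m - 1$ is used. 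Everything else is bookkeeping about block boundaries and the monotonicity of $e$.
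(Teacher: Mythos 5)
Your proof is correct and follows essentially the same route as the paper's: the upper bound $\upperBoundShort{j}=3m-1$ versus the length-$3m$ separators confines a $\#$-free embedding to a single factor ($T$ or some $S(W_i)$), and Lemma~\ref{sharpUnivStringLemmaAppendix} then excludes $T$ because $|p|_{\#}=0$. Your write-up merely spells out the block decomposition and the separator-bridging step more explicitly than the paper does.
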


\begin{proof}
Let $p \in \subseqSet{\gaptuple}{K(W_1, \ldots, W_q)}$ and let $e$ be an embedding that satisfies the length constraints $\gaptuple$, and $p \subseq_e K(W_1, \ldots, W_q)$. Since $e$ satisfies the upper length bounds $\upperBoundShort{j} = 3m-1$ for every $j \in [k-1]$, all positions $j \in [k]$ are mapped by $e$ to positions of some factor $S(W_i)$ with $i \in [q]$, or all positions $j \in [k]$ are mapped by $e$ to positions of the factor $T$ (note that otherwise either $p[j] = \#$ for some $j \in [k]$, or $|\gap{K(W_{1}, \ldots, W_{q})}{e}{j}| \geq 3m$ for some $j \in [k - 1]$). This means that, for some embedding $e'$ that satisfies $\gaptuple$, $p \subseq_{e'} S(W_i)$ for some $i \in [q]$, or $p \subseq_{e'} T$. By Lemma~\ref{sharpUnivStringLemmaAppendix}, we know that $\subseqSet{\gaptuple}{T} = \{w \in \Sigma^k \mid |w|_{\#} \geq 1\}$; thus, since $|p|_{\#} = 0$, $p \subseq_{e'} T$ is not possible. Hence, $p \subseq_{e'} S(W_i)$ for some $i \in [q]$, which means that $p \in \subseqSet{\gaptuple}{S(W_i)}$.
\end{proof}

Finally, the following lemma concludes the proof of correctness.

\begin{lemma}\label{mainCorrectnessLemmaAppendix}
$\subseqSet{\gaptuple}{K(W_1, \ldots, W_q)} = \Sigma^{k} \iff \cup_{i \in [q]} \lang{W_{i}} = \Gamma^k$.
\end{lemma}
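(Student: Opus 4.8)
The plan is to prove both directions at once by partitioning the universe $\Sigma^k$ according to whether a word contains the separator symbol $\#$, and then showing that the two parts are controlled independently by the gadget $T$ and the gadgets $S(W_i)$, respectively. Since $\gaptuple$ is a $(k-1)$-tuple, every $\gaptuple$-subsequence of any word has length exactly $k$, so $\subseqSet{\gaptuple}{K(W_1, \ldots, W_q)} \subseteq \Sigma^k$ holds automatically and the claim $\subseqSet{\gaptuple}{K(W_1, \ldots, W_q)} = \Sigma^k$ is equivalent to the single inclusion $\Sigma^k \subseteq \subseqSet{\gaptuple}{K(W_1, \ldots, W_q)}$.

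First I would dispose of the words containing at least one $\#$. The key structural remark is that $T$ occurs as a prefix factor of $K(W_1, \ldots, W_q)$; consequently any embedding into $T$ that satisfies $\gaptuple$ is simultaneously an embedding into $K(W_1, \ldots, W_q)$ satisfying $\gaptuple$ (the induced gaps are the very same factors), whence $\subseqSet{\gaptuple}{T} \subseteq \subseqSet{\gaptuple}{K(W_1, \ldots, W_q)}$. By Lemma~\ref{sharpUnivStringLemmaAppendix}, $\subseqSet{\gaptuple}{T} = \{w \in \Sigma^k \mid |w|_{\#} \geq 1\}$, so every length-$k$ word with at least one $\#$ already lies in $\subseqSet{\gaptuple}{K(W_1, \ldots, W_q)}$, regardless of the $W_i$. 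Therefore the universality question collapses to the $\#$-free part: $\subseqSet{\gaptuple}{K(W_1, \ldots, W_q)} = \Sigma^k$ if and only if $\Gamma^k \subseteq \subseqSet{\gaptuple}{K(W_1, \ldots, W_q)}$.

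Next I would characterise which words of $\Gamma^k$ are $\gaptuple$-subsequences of $K(W_1, \ldots, W_q)$. For $p \in \Gamma^k$, Lemma~\ref{limitedToSJLemmaAppendix} forces any satisfying embedding to land entirely inside a single block $S(W_i)$, so $p \in \subseqSet{\gaptuple}{K(W_1, \ldots, W_q)}$ iff $p \in \subseqSet{\gaptuple}{S(W_i)}$ for some $i \in [q]$. Since $p \in \Gamma^*$, Lemma~\ref{SingleWiLemmaAppendix} identifies $p \in \subseqSet{\gaptuple}{S(W_i)}$ (intersected with $\Gamma^*$) with $p \in \lang{W_i}$. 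Combining these, for every $p \in \Gamma^k$ we obtain $p \in \subseqSet{\gaptuple}{K(W_1, \ldots, W_q)}$ iff $p \in \bigcup_{i \in [q]} \lang{W_i}$. Because each $\lang{W_i} \subseteq \Gamma^k$, the inclusion $\Gamma^k \subseteq \subseqSet{\gaptuple}{K(W_1, \ldots, W_q)}$ is then equivalent to $\bigcup_{i \in [q]} \lang{W_i} = \Gamma^k$, which, chained with the previous paragraph, yields the stated biconditional.

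The conceptually demanding steps have in fact already been isolated into the three auxiliary lemmas (that $T$ produces exactly all $\#$-containing words, that $\#$-free subsequences of $K(W_1, \ldots, W_q)$ must be confined to one $S(W_i)$, and that within $S(W_i)$ they are precisely $\lang{W_i}$); what remains here is essentially bookkeeping. The only point requiring genuine care is the monotonicity claim $\subseqSet{\gaptuple}{T} \subseteq \subseqSet{\gaptuple}{K(W_1, \ldots, W_q)}$: I would verify explicitly that passing from a factor to the full word preserves both the embedding and the satisfaction of the length constraints, since this factor-closure property is what makes the $\#$-strings free and thereby cleanly decouples the two halves of the argument.
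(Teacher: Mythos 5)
Your proof is correct and follows essentially the same route as the paper's: both decompose $\Sigma^k$ by the presence of $\#$, use Lemma~\ref{sharpUnivStringLemmaAppendix} to cover the $\#$-containing words via $T$, and use Lemmas~\ref{limitedToSJLemmaAppendix} and~\ref{SingleWiLemmaAppendix} to identify the $\#$-free $\gaptuple$-subsequences of $K(W_1,\ldots,W_q)$ with $\bigcup_{i\in[q]}\lang{W_i}$. The only cosmetic remark is that the factor-closure property you single out for $T$ is equally needed for the direction $\lang{W_i}\subseteq\subseqSet{\gaptuple}{K(W_1,\ldots,W_q)}$, i.e., for each factor $S(W_i)$, exactly as the paper invokes it in its ``$\supseteq$'' step.
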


\begin{proof}
We first show that
\begin{equation*}
\subseqSet{\gaptuple}{K(W_1, \ldots, W_q)} = \bigcup_{i \in [q]} (\subseqSet{\gaptuple}{S(W_{i})} \cap \Gamma^*) \cup \subseqSet{\gaptuple}{T}\,.
\end{equation*}
The ``$\supseteq$''-direction holds, since $T$ and every $S(W_{i})$ for every $i \in [q]$ are factors of $K(W_1, \ldots, W_q)$. Now let $p \in \subseqSet{\gaptuple}{K(W_1, \ldots, W_q)}$  be arbitrarily chosen. If $|p_{\#}| \geq 1$, then, by Lemma~\ref{sharpUnivStringLemmaAppendix}, $p \in \subseqSet{\gaptuple}{T}$. If $|p_{\#}| = 0$, then $p \in \subseqSet{\gaptuple}{K(W_1, \ldots, W_q)}$ and $p \in \Gamma^k$. Thus, Lemma~\ref{limitedToSJLemmaAppendix} implies that $p \in \subseqSet{\gaptuple}{S(W_i)}$ for some $i \in [q]$, which means that $p \in \subseqSet{\gaptuple}{S(W_i)} \cap \Gamma^*$. This shows that the ``$\subseteq$''-direction holds as well.\par

Now we prove the \emph{if} direction of the statement of the lemma and assume that $\cup_{i \in [q]} \lang{W_i} = \Gamma^k$. By Lemma~\ref{SingleWiLemmaAppendix}, we know that 
$(\subseqSet{\gaptuple}{S(W_i)} \cap \Gamma^*) = \lang{W_i}$ for every $i \in [q]$. With Lemma~\ref{sharpUnivStringLemmaAppendix} and the observation from above, this means that 
\begin{align*}
\subseqSet{\gaptuple}{K(W_1, \ldots, W_q)} &= \bigcup_{i \in [q]} (\subseqSet{\gaptuple}{S(W_{i})} \cap \Gamma^*) \cup \subseqSet{\gaptuple}{T}\\
&= \left(\bigcup_{i \in [q]} \lang{W_i}\right) \cup \subseqSet{\gaptuple}{T}\\
&= \Gamma^k \cup \{w \in \Sigma^k \mid |w|_{\#} \geq 1\} = \Sigma^{k}\,.
\end{align*}\par
In order to prove the \emph{only if} direction, we assume that $\subseqSet{\gaptuple}{K(W_1, \ldots, W_q)} = \Sigma^{k}$. With our observation from above, this means that
\begin{equation*}
\Sigma^{k} = \bigcup_{i \in [q]} (\subseqSet{\gaptuple}{S(W_{i})} \cap \Gamma^*) \cup \subseqSet{\gaptuple}{T} 
\end{equation*}
Applying Lemma~\ref{sharpUnivStringLemmaAppendix} yields
\begin{equation*}
\Sigma^{k} = \bigcup_{i \in [q]} (\subseqSet{\gaptuple}{S(W_{i})} \cap \Gamma^*) \cup \{w \in \Sigma^k \mid |w|_{\#} \geq 1\} \,.
\end{equation*}
Since $\bigcup_{i \in [q]} (\subseqSet{\gaptuple}{S(W_{i})} \cap \Gamma^*)$ and $\{w \in \Sigma^k \mid |w|_{\#} \geq 1\}$ are clearly disjoint, we can conclude that
\begin{align*}
\Sigma^{k} \setminus \{w \in \Sigma^k \mid |w|_{\#} \geq 1\} &= \bigcup_{i \in [q]} (\subseqSet{\gaptuple}{S(W_{i})} \cap \Gamma^*) &\iff \\
\Gamma^{k} &= \bigcup_{i \in [q]} (\subseqSet{\gaptuple}{S(W_{i})} \cap \Gamma^*)\,. &
\end{align*}
Finally, Lemma~\ref{SingleWiLemmaAppendix} implies that $\Gamma^k = \bigcup_{i \in [q]} \lang{W_i}$.
\end{proof}

\subsection{Direct Reduction from $\SatProb$ to $\nuniProb$}\label{sec:binaryCaseReduction}

Let $F = \{c_1, c_2, \ldots, c_q\}$ be a Boolean formula in CNF on variables $\{v_1, v_2, \ldots, v_k\}$. Let $\Sigma = \{\ta, \tb\}$. For every $i \in [q]$ and $j \in [k]$, we define
\begin{equation*}
W_{i, j} = \begin{cases}
\{\ta \ta\}& \text{if $v_j \in c_i$},\\
\{\tb \tb\}& \text{if $\neg v_j \in c_i$},\\
\{\ta \ta, \tb \tb\}& \text{if $\{v_j, \neg v_j\} \cap c_i = \emptyset$},
\end{cases}
\end{equation*}

By interpreting $\ta \ta$ and $\tb \tb$ as Boolean values \emph{false} and \emph{true}, we can interpret the words form $L_{\Sigma \Sigma} = \lang{(\ta \ta \vee \tb \tb)^k}$ as assignments for $F$'s variables. Moreover, the set $\lang{W_i} = W_{i, 1} W_{i, 2} \ldots W_{i, k}$ represents all assignments that do not satisfy clause $c_i$, and therefore $\bigcup_{i \in [q]}\lang{W_i}$ is the set of all non-satisfying assignments. In particular, $F$ is satisfying if and only if $\bigcup_{i \in [q]}\lang{W_i} \neq L_{\Sigma \Sigma}$. \par
We define a $2k$-tuple $\gaptuple = (C_1, C'_1, C_2, C'_2, \ldots, C_k)$ of length constraints, where $C_{j} = (0, 0)$ for every $j \in [k]$ and $C'_{j} = (3, 9)$ for every $j \in [k-1]$, i.\,e., every $\gaptuple$-sequence corresponds to $k$ pairs of consecutive symbols with a gap of length at least $3$ and at most $9$ in between.

For every $i \in [q]$ and $j \in [k]$, we define the string 
\begin{equation*}
w_i = w_{i, 1} \tb \ta \tb w_{i, 2} \tb \ta \tb \ldots \tb \ta \tb w_{i, k}\,,
\end{equation*}
where 
\begin{equation*}
w_{i, j} = \begin{cases}
\ta \ta& \text{if $v_j \in c_i$},\\
\ta \tb \tb \ta& \text{if $\neg v_j \in c_i$},\\
\ta \ta \tb \tb \ta& \text{if $\{v_j, \neg v_j\} \cap c_i = \emptyset$}.
\end{cases}
\end{equation*}

\begin{lemma}\label{SingleWiLemmaBinaryCaseAppendix}
$(\subseqSet{\gaptuple}{w_i} \cap L_{\Sigma \Sigma}) = \lang{W_i}$.
\end{lemma}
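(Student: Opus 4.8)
The plan is to prove the two inclusions separately, in the same spirit as Lemma~\ref{SingleWiLemmaAppendix}, but exploiting a simple combinatorial invariant of the construction: every block $w_{i,j}$ begins and ends with $\ta$, every separator equals $\tb\ta\tb$, and consequently the only length-$2$ factors of $w_i$ equal to $\ta\ta$ or $\tb\tb$ occur strictly inside a single block. First I would record this by inspecting all length-$2$ factors: inside a separator one only sees $\tb\ta$ and $\ta\tb$; across any block/separator boundary one sees $\ta\tb$ or $\tb\ta$; and inside the blocks $\ta\ta$ appears only in $w_{i,j}\in\{\ta\ta,\ \ta\ta\tb\tb\ta\}$ and $\tb\tb$ only in $w_{i,j}\in\{\ta\tb\tb\ta,\ \ta\ta\tb\tb\ta\}$, with at most one occurrence of each in any block. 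Writing $p\in L_{\Sigma\Sigma}$ as $p=x_1x_2\cdots x_k$ with $x_j=p[2j-1]p[2j]\in\{\ta\ta,\tb\tb\}$, this pins down where each pair $x_j$ can land.

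For the inclusion $(\subseqSet{\gaptuple}{w_i}\cap L_{\Sigma\Sigma})\subseteq\lang{W_i}$, I would take an embedding $e$ satisfying $\gaptuple$ with $p\subseq_e w_i$. The zero-gap constraints $C_j=(0,0)$ force $e(2j-1)$ and $e(2j)$ to be adjacent, so each $x_j$ is realised as a genuine length-$2$ factor of $w_i$; by the invariant this factor lies inside some block $w_{i,\sigma(j)}$. Next I would argue that $\sigma$ is strictly increasing: two pairs cannot share a block, since the only block containing two admissible pairs is $\ta\ta\tb\tb\ta$, where the $\ta\ta$ (positions $1,2$) and the $\tb\tb$ (positions $3,4$) are separated by a gap of length $0<3$, contradicting $C'_j=(3,9)$. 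A strictly increasing $\sigma:[k]\to[k]$ is the identity, so $x_j$ lies inside $w_{i,j}$. Matching the admissible pairs of each block type against the definition of $W_{i,j}$ (only $\ta\ta$ when $W_{i,j}=\{\ta\ta\}$, only $\tb\tb$ when $W_{i,j}=\{\tb\tb\}$, and either one when $W_{i,j}=\{\ta\ta,\tb\tb\}$) then yields $x_j\in W_{i,j}$ for every $j$, hence $p\in\lang{W_i}$.

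For the reverse inclusion $\lang{W_i}\subseteq(\subseqSet{\gaptuple}{w_i}\cap L_{\Sigma\Sigma})$, I would start from $p=x_1\cdots x_k$ with $x_j\in W_{i,j}$ and explicitly place each pair $x_j$ on an occurrence of $x_j$ inside $w_{i,j}$ (positions $1,2$ of $\ta\ta$; positions $2,3$ of $\ta\tb\tb\ta$; positions $1,2$ or $3,4$ of $\ta\ta\tb\tb\ta$, according to whether $x_j=\ta\ta$ or $x_j=\tb\tb$). This defines an increasing embedding $e$ with $\subsequence{w_i}{e}=p$, and the constraints $C_j=(0,0)$ hold since each pair occupies adjacent positions. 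The only remaining point is $C'_j=(3,9)$, i.e.\ that $3\le|\gap{w_i}{e}{2j}|\le 9$: this gap length equals (number of symbols of $w_{i,j}$ after $x_j$) $+\,3\,+$ (number of symbols of $w_{i,j+1}$ before $x_{j+1}$), and a short case check shows the suffix contribution lies in $\{0,1,3\}$ and the prefix contribution in $\{0,1,2\}$, so the gap length lies in $[3,8]\subseteq[3,9]$.

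The main obstacle, as in the $\metaNuniProb$ reduction, is the forward direction: one must be sure that the alignment of pairs to blocks is \emph{forced}, and this rests entirely on the design of the blocks and of the separator $\tb\ta\tb$. The two facts that make it work — that no valid pair straddles a block boundary or sits inside a separator, and that two valid pairs never fit into one block at gap $\ge 3$ — are exactly what the lower bound $3$ and the block shapes guarantee; the upper bound $9$ is then comfortably large for the reverse construction (the true maximum gap being $8$). I expect the bookkeeping of the suffix/prefix contributions, and the exhaustive boundary-factor check underpinning the invariant, to be the only genuinely laborious parts, both entirely routine.
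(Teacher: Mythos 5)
Your proof is correct and follows essentially the same route as the paper's: the forward direction uses the fact that blocks start and end with $\ta$ while separators start and end with $\tb$ to confine each zero-gap pair to a single block, and the lower bound $3$ of $C'_j$ to forbid two pairs sharing the block $\ta\ta\tb\tb\ta$, forcing pair $j$ into $w_{i,j}$; the reverse direction is the same explicit placement with the gap computation landing in $[3,8]$. Your extra bookkeeping (the strictly increasing map $\sigma$ being the identity, and the enumeration of suffix/prefix contributions) just makes explicit what the paper asserts more tersely.
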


\begin{proof}
We first prove $(\subseqSet{\gaptuple}{w_i} \cap L_{\Sigma \Sigma}) \subseteq \lang{W_i}$. To this end, let $p \in (\subseqSet{\gaptuple}{w_i} \cap L_{\Sigma \Sigma})$. This means that there is an embedding $e$ that satisfies $\gaptuple$ and $p \subseq_e w_i$, and that, for every $j \in [k]$, $p[2j-1]p[2j] \in \{\ta \ta, \tb \tb\}$. Since all factors $w_{i, j}$ start and end with symbol $\ta$ and the separating factor $\tb \ta \tb$ starts and ends with symbol $\tb$, for every $j \in [k]$, $p[2j-1]p[2j]$ is mapped by $e$ completely inside a factor $w_{i, j}$. Moreover, due to the lower bounds $3$ of the length constraints $C'_j$ for every $j \in [k-1]$, it is not possible that both $p[2j-1]p[2j]$ and $p[2(j+1)-1]p[2(j+1)]$ are mapped by $e$ inside the same factor $w_{i, \ell}$. Hence, for every $j \in [k]$, $p[2j-1]p[2j]$ is mapped by $e$ to a factor $\ta \ta$ or $\tb \tb$ inside of $w_{i, j}$. By construction, for every $j \in [k]$, the factor $w_{i, j}$ contains factor $\ta \ta$, but not $\tb \tb$, if $W_{i, j} = \{\ta \ta\}$, it contains factor $\tb \tb$, but not $\ta \ta$, if $W_{i, j} = \{\tb \tb\}$, and it contains factor $\ta \ta$ and factor $\tb \tb$, if $W_{i, j} = \{\ta \ta, \tb \tb\}$. Consequently, $p \in W_1 W_2 \ldots W_k = \lang{W_i}$.\par
Next, we prove that $\lang{W_i} \subseteq (\subseqSet{\gaptuple}{w_i} \cap L_{\Sigma \Sigma})$. To this end, let $p \in \lang{W_i}$, which means that $p[2j-1]p[2j] \in W_{i, j}$ for every $j \in [k]$, which also means that $p \in L_{\Sigma \Sigma}$. Therefore, by construction of $w_{i}$, $p[2j-1]p[2j]$ is a factor of $w_{i, j}$ for every $j \in [k]$. Thus, there is an embedding $e$ that, for every $j \in [k]$, maps each $2j-1$ and $2j$ to consecutive positions in $w_{i, j}$ that correspond to a factor $p[2j-1]p[2j]$, which means that $p \subseq_e w_i$. In particular, $e$ satisfies all length constraints $C_j = (0, 0)$ with $j \in [k]$. Finally, we note that by construction of the factors $w_{i, j}$, we have $3 \leq |\gap{w_i}{e}{2j}| \leq 8$ for every $j \in [k-1]$, which means that $e$ also satisfies the length constraints $C'_j$ with $j \in [k-1]$, and therefore $p \in (\subseqSet{\gaptuple}{w_i} \cap L_{\Sigma \Sigma})$.
\end{proof}

For every $i \in [k]$, let $T_i = T_{i, 1} \tb \ta \tb T_{i, 2} \tb \ta \tb \ldots \tb \ta \tb T_{i, k}$, where, for every $j \in [k] \setminus \{i\}$, $T_{i, j} = \ta \ta \tb \tb \ta$, and $T_{i, i} = \ta \tb \ta$. We define the string $T$ by
\begin{equation*}
T = T_1 \tb (\ta \tb)^5 T_2 \tb (\ta \tb)^5 \ldots \tb (\ta \tb)^5 T_k\,. 
\end{equation*}

We define $\overline{L_{\Sigma \Sigma}} = \{\ta, \tb\}^{2k} \setminus L_{\Sigma \Sigma}$.

\begin{lemma}\label{sharpUnivStringLemmaBinaryCaseAppendix}
$\subseqSet{\gaptuple}{T} = (\overline{L_{\Sigma \Sigma}})$.
\end{lemma}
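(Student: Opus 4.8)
The plan is to prove the two inclusions $\subseqSet{\gaptuple}{T} \subseteq \overline{L_{\Sigma \Sigma}}$ and $\overline{L_{\Sigma \Sigma}} \subseteq \subseqSet{\gaptuple}{T}$ separately, exactly mirroring the role that Lemma~\ref{sharpUnivStringLemmaAppendix} played in the $\metaNuniProb$-based reduction: there the ``witness'' for membership was the presence of a $\#$, whereas here it is the presence of a \emph{mixed} pair $\ta\tb$ or $\tb\ta$ at some even-indexed position of the subsequence. The whole argument rests on a short factor analysis. First I would record the block lengths and the two-letter factors occurring in each piece of $T$: each $T_{\ell,j}$ with $j\neq\ell$ equals $\ta\ta\tb\tb\ta$ (length $5$) and contains all four factors $\ta\ta,\ta\tb,\tb\tb,\tb\ta$; the special block $T_{\ell,\ell}=\ta\tb\ta$ (length $3$) and every internal separator $\tb\ta\tb$ (length $3$) contain only the mixed factors $\ta\tb,\tb\ta$; the large separators $\tb(\ta\tb)^5$ have length $11$; and every block/separator boundary inside $T$ spells a mixed factor ($\ta\tb$ or $\tb\ta$), since blocks end with $\ta$ and separators start with $\tb$ and vice versa. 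Recall also that a $\gaptuple$-subsequence has length $2k$, that the within-pair gaps are governed by $C_j=(0,0)$ (so $p[2j-1]p[2j]$ occupies two adjacent positions) and the between-pair gaps by $C'_j=(3,9)$.

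For the inclusion $\subseteq$, I would argue by contradiction: assume $p\in\subseqSet{\gaptuple}{T}$ with $p\in L_{\Sigma\Sigma}$, witnessed by an embedding $e$. Since $p\in L_{\Sigma\Sigma}$, no pair of $p$ may land on a mixed factor, hence no pair lands inside a large separator nor straddles any boundary, so each pair sits cleanly inside some $T_\ell$. If two consecutive pairs $j,j+1$ were placed in different $T$-blocks, the gap $\gap{T}{e}{2j}$ would contain a whole large separator and thus have length $\geq 11 > \upperBoundShort{2j}=9$, a contradiction; therefore all $k$ pairs lie inside a single $T_\ell$. Now within $T_\ell$ the only places offering the factors $\ta\ta$ or $\tb\tb$ are the $k-1$ content blocks $T_{\ell,j}$ with $j\neq\ell$ (the block $T_{\ell,\ell}$, the internal separators, and all internal boundaries being purely mixed). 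By monotonicity of $e$, two pairs sharing one content block would force two \emph{consecutive} pairs to share it, but a length-$5$ block cannot hold two adjacent-pair occurrences separated by a gap of at least $\lowerBoundShort{2j}=3$ (a direct position count gives an available gap of at most $1$). Hence each content block holds at most one pair, leaving $k$ pairs to be distributed among only $k-1$ blocks — impossible by pigeonhole. This contradiction shows $p\notin L_{\Sigma\Sigma}$, i.e.\ $p\in\overline{L_{\Sigma\Sigma}}$.

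For the inclusion $\supseteq$, take any $p\in\overline{L_{\Sigma\Sigma}}$ and fix an index $\ell$ with $p[2\ell-1]p[2\ell]\in\{\ta\tb,\tb\ta\}$ (one exists precisely because $p\notin L_{\Sigma\Sigma}$). I would embed $p$ entirely into the factor $T_\ell$ of $T$: the mixed pair $\ell$ is placed as a factor of $T_{\ell,\ell}=\ta\tb\ta$ (which contains exactly $\ta\tb$ and $\tb\ta$), and every other pair $j\neq\ell$ — an arbitrary two-letter word — is placed as a factor of $T_{\ell,j}=\ta\ta\tb\tb\ta$ (which contains all four two-letter words). The within-pair constraints $C_j=(0,0)$ hold since pairs are mapped to adjacent positions. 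For the between-pair constraints, a short bound suffices and is independent of where inside each block the pair is placed: the gap $\gap{T_\ell}{e}{2j}$ consists of a suffix of $T_{\ell,j}$ (length at most $3$), the internal separator $\tb\ta\tb$ (length $3$), and a prefix of $T_{\ell,j+1}$ (length at most $3$), giving $3\le\gap{T_\ell}{e}{2j}\le 9$ automatically, so $e$ satisfies $\gaptuple$. As $T_\ell$ is a factor of $T$, this yields $p\in\subseqSet{\gaptuple}{T}$.

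The main obstacle I expect is not any single inclusion but getting the factor bookkeeping exactly right: one must verify that \emph{every} occurrence of $\ta\ta$ or $\tb\tb$ in $T_\ell$ lies strictly inside a content block $T_{\ell,j}$ with $j\neq\ell$ (checking the special block, the internal separators, and all boundaries), and simultaneously that the numeric bounds $(3,9)$ for $C'_j$ are tight enough to both (i) force a single-$T_\ell$ placement via the length-$11$ separators and (ii) forbid two pairs per length-$5$ block via the lower bound $3$, while still being loose enough that the $\supseteq$ construction never overshoots $9$. These constants have no slack to spare, so the proof hinges on the careful, case-by-case position counts sketched above; once they are in place, the pigeonhole step ($k$ pairs, $k-1$ usable blocks) closes the argument cleanly.
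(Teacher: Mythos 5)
Your proposal is correct and follows essentially the same route as the paper's proof: the same boundary-letter and separator-length observations confine all pairs to a single $T_\ell$ and then to the content blocks, and the same $(3,9)$ bounds drive both inclusions. The only (cosmetic) difference is the final step of the forward inclusion, where you count $k$ pairs against $k-1$ usable blocks by pigeonhole, whereas the paper argues that the one-pair-per-block constraint forces pair $j$ into $T_{r,j}$ and then notes $T_{r,r}$ contains no $\ta\ta$ or $\tb\tb$ — two phrasings of the same contradiction.
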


\begin{proof}
We first prove $\subseqSet{\gaptuple}{T} \subseteq (\overline{L_{\Sigma \Sigma}})$. To this end, let $p \in \subseqSet{\gaptuple}{T}$, which means that there is an embedding $e$ that satisfies $\gaptuple$ and $p \subseq_e T$. 
For contradiction, we assume that $p \notin (\overline{L_{\Sigma \Sigma}})$, which means that $p \in L_{\Sigma \Sigma}$ and therefore $p[2j-1]p[2j] \in \{\ta \ta, \tb \tb\}$ for every $j \in [k]$. Next we observe that all factors $T_{i, j}$ start and end with symbol $\ta$, while all separating factors $\tb \ta \tb$ and $\tb (\ta \tb)^5$ start and end with symbol $\tb$. In particular, this means that, for every $j \in [k]$, both $2j-1$ and $2j$ are mapped by $e$ to positions of some $T_{i, j}$ factor. Moreover, due to the upper bound $9$ of length constraints $C'_j$ for every $j \in [k-1]$, it is also not possible that $e$ maps positions of $p$ to different factors $T_{i}$ and $T_{i + 1}$ for some $i \in [k]$, i.\,e., there is some $r \in [k]$ such that all positions of $p$ are mapped to positions of $T_{r}$. As observed above, for every $j \in [k]$, factor $p[2j-1]p[2j]$ must be mapped to positions of some $T_{r, \ell}$, i.\,e., to a factor $\ta \ta$ or $\tb \tb$ in some $T_{r, \ell}$. Due to the lower bounds $3$ of the length constraints $C'_j$ for every $j \in [k-1]$, it is not possible that both $p[2j-1]p[2j]$ and $p[2(j+1)-1]p[2(j+1)]$ are mapped by $e$ to the same factor $T_{r, \ell}$, which means that, for every $j \in [k]$, $p[2j-1]p[2j]$ is mapped to $T_{r, j}$. This is a contradiction, since $T_{r, r}$ does not contain any occurrence of factor $\ta \ta$ or $\tb \tb$.\par
Next, we prove $(\overline{L_{\Sigma \Sigma}}) \subseteq \subseqSet{\gaptuple}{T}$. To this end, let $p \in (\overline{L_{\Sigma \Sigma}})$, which means that there is some $r \in [k]$ such that $p[2r-1]p[2r] \in \{\ta \tb, \tb \ta\}$. Let the embedding $e$ be defined as follows. For every $j \in [k]$ with $j \neq r$, we map $2j-1$ and $2j$ to a factor $p[2j-1]p[2j]$ in $T_{r, j}$ (note that since $T_{r, j} = \ta \ta \tb \tb \ta$ and $p[2j-1]p[2j] \in \{\ta \ta, \ta \tb, \tb \ta, \tb \tb\}$, $T_{r, j}$ must contain the factor $p[2j-1]p[2j]$). Furthermore, we map $2r-1$ and $2r$ to the factor $p[2r-1]p[2r]$ in $T_{r, r}$ (note that this must be possible since $T_{r, r} = \ta \tb \ta$ and $p[2r-1]p[2r] \in \{\ta \tb, \tb \ta\}$). Finally, we note that by construction of the factors $T_{i, j}$, we have $3 \leq |\gap{w_i}{e}{j}| \leq 9$, which means that $e$ satisfies the length constraints and therefore $p \in \subseqSet{\gaptuple}{T}$.
\end{proof}

Finally, we define the string
\begin{equation*}
S = T \tb (\ta \tb)^5 w_1 \tb (\ta \tb)^5 w_2 \tb (\ta \tb)^5\ldots \tb (\ta \tb)^5 w_q\,.
\end{equation*}

\begin{lemma}\label{limitedToSJLemmaBinaryCaseAppendix}
Let $p \in (\subseqSet{\gaptuple}{S} \cap L_{\Sigma\Sigma})$. Then, for some $i \in [q]$, $p \in \subseqSet{\gaptuple}{w_i}$.
\end{lemma}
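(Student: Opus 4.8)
The plan is to prove this lemma in exact analogy with Lemma~\ref{limitedToSJLemmaAppendix}: I want to show that any embedding witnessing $p \subseq_e S$ and satisfying $\gaptuple$ must map \emph{all} of $p$ into a single ``big block'' of $S$ (one of the factors $T, w_1, \ldots, w_q$ that are glued together by the separators $\tb(\ta\tb)^5$), and that this block cannot be $T$. Concretely, I would fix an embedding $e$ with $p \subseq_e S$ satisfying $\gaptuple$. Since $p \in L_{\Sigma\Sigma}$, every pair $p[2j-1]p[2j]$ lies in $\{\ta\ta,\tb\tb\}$, and the zero constraints $C_j=(0,0)$ force $e(2j)=e(2j-1)+1$, so each pair is mapped to two \emph{consecutive} symbols of $S$ spelling $\ta\ta$ or $\tb\tb$.

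The first key step is to argue that each such pair lands strictly inside one big block. This follows from the shape of the separators: $\tb(\ta\tb)^5$ is strictly alternating and hence contains no factor $\ta\ta$ or $\tb\tb$; moreover every big block begins and ends with $\ta$ (all three shapes of $w_{i,j}$ start and end with $\ta$, and $T$ starts with $T_{1,1}=\ta\tb\ta$ and ends with $T_{k,k}=\ta\tb\ta$), while each separator begins and ends with $\tb$. Thus the two boundary length-$2$ factors at any block/separator junction are $\ta\tb$ and $\tb\ta$, again neither $\ta\ta$ nor $\tb\tb$. Consequently no pair of consecutive positions spelling $\ta\ta$ or $\tb\tb$ can touch a separator, and every pair of $p$ is mapped entirely within some big block.

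The second key step is confinement to a \emph{single} block, which is where the upper length bounds do the work. Each even constraint is $C'_j=(3,9)$, so $|\gap{S}{e}{2j}|\le 9$ for every $j\in[k-1]$. If pairs $j$ and $j+1$ were mapped into two different big blocks, then (since $e$ is increasing and blocks occur in the order $T,w_1,\ldots,w_q$) the gap $\gap{S}{e}{2j}$ would contain at least one full separator $\tb(\ta\tb)^5$, giving $|\gap{S}{e}{2j}|\ge 11>9$, a contradiction. Hence all pairs of $p$ are mapped into one and the same big block $B\in\{T,w_1,\ldots,w_q\}$, and restricting $e$ to $B$ yields an embedding of $p$ into $B$ still satisfying $\gaptuple$ (the gaps lie inside $B$ and are unchanged). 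To finish, I would rule out $B=T$: if $p$ embedded into $T$ then $p\in\subseqSet{\gaptuple}{T}$, but Lemma~\ref{sharpUnivStringLemmaBinaryCaseAppendix} gives $\subseqSet{\gaptuple}{T}=\overline{L_{\Sigma\Sigma}}$, contradicting $p\in L_{\Sigma\Sigma}$. Therefore $B=w_i$ for some $i\in[q]$ and $p\in\subseqSet{\gaptuple}{w_i}$, as required. Note that I never need to analyse the internal structure of $T$, since the case $B=T$ is discharged wholesale by Lemma~\ref{sharpUnivStringLemmaBinaryCaseAppendix}.

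The main obstacle is the careful case analysis underlying the two confinement steps. Unlike the general construction in Lemma~\ref{limitedToSJLemmaAppendix}, the binary alphabet offers no dedicated separator symbol such as $\#$ to fence off the blocks cleanly, so the argument must exploit the interplay between the structural hypothesis $p\in L_{\Sigma\Sigma}$ (which forces identical-symbol pairs) and the alternating, $\tb$-framed separators to guarantee that pairs neither straddle nor jump across separators. Verifying that every big block indeed starts and ends with $\ta$ — checking the three shapes of $w_{i,j}$ and of $T_{i,j}$ together with the boundary cases $j\in\{1,k\}$ — is the routine but essential bookkeeping on which the first step rests.
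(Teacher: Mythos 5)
Your proof is correct and follows essentially the same route as the paper's: pairs $p[2j-1]p[2j]\in\{\ta\ta,\tb\tb\}$ cannot touch the $\tb$-framed alternating separators, the upper bound $9$ of the constraints $C'_j$ (against separator length $11$) confines the whole embedding to a single block among $T,w_1,\ldots,w_q$, and the case of $T$ is excluded via Lemma~\ref{sharpUnivStringLemmaBinaryCaseAppendix}. If anything, your write-up spells out in more detail than the paper why no pair can lie inside or straddle a separator, which the paper justifies only by the terse observation about the blocks starting/ending with $\ta$ and the separators with $\tb$.
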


\begin{proof}
We assume that $p \in (\subseqSet{\gaptuple}{S} \cap L_{\Sigma\Sigma})$, i.\,e., there is an embedding $e$ that satisfies $\gaptuple$ and $p \subseq_e S$. All factors $w_i$ and $T$ start and end with occurrences of symbol $\ta$, while the separating factors $\tb (\ta \tb)^5$ in between start and end with occurrences of symbol $\tb$. This means that all positions of $p$ must be mapped by $e$ to a position of some factor $w_i$ or to a position of factor $T$. Due to the upper bounds $9$ of the length constraints $C'_j$ for every $j \in [k-1]$, it is not possible that $e$ maps positions of $p$ to different such factors; thus, all positions of $p$ are mapped to positions of some factor $w_i$ or to factor $T$. However, due to Lemma~\ref{sharpUnivStringLemmaBinaryCaseAppendix}, $\subseqSet{\gaptuple}{T} = (\overline{L_{\Sigma \Sigma}})$, so since $p \in L_{\Sigma \Sigma}$ it is not possible that $e$ maps $p$ to $T$. Hence, $p \in \subseqSet{\gaptuple}{w_i}$.
\end{proof}

\begin{lemma}\label{mainCorrectnessLemmaBinaryAppendix}
$\subseqSet{\gaptuple}{S} = \{\ta, \tb\}^{2k} \iff \cup_{i \in [q]} \lang{W_{i}} = L_{\Sigma \Sigma}$.
\end{lemma}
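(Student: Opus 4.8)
The plan is to mirror the structure of the correctness argument for the general reduction (Lemma~\ref{mainCorrectnessLemmaAppendix}), reducing the biconditional to a single set-theoretic decomposition of $\subseqSet{\gaptuple}{S}$ followed by elementary set algebra. Concretely, I would first establish the identity
\[
\subseqSet{\gaptuple}{S} = \Bigl(\bigcup_{i \in [q]} (\subseqSet{\gaptuple}{w_i} \cap L_{\Sigma\Sigma})\Bigr) \cup \subseqSet{\gaptuple}{T}. \qquad (\ast)
\]
Here I would use that every $\gaptuple$-subsequence has length exactly $2k$, and that $L_{\Sigma\Sigma}$ and $\overline{L_{\Sigma\Sigma}}$ partition $\{\ta,\tb\}^{2k}$.

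The $\supseteq$-inclusion of $(\ast)$ is immediate, since $T$ and each $w_i$ occur as factors of $S$, so every $\gaptuple$-subsequence of $T$ or of $w_i$ is \emph{a fortiori} a $\gaptuple$-subsequence of $S$. For the $\subseteq$-inclusion I would take an arbitrary $p \in \subseqSet{\gaptuple}{S}$ and split on whether $p \in L_{\Sigma\Sigma}$ or $p \in \overline{L_{\Sigma\Sigma}}$. If $p \in \overline{L_{\Sigma\Sigma}}$, then Lemma~\ref{sharpUnivStringLemmaBinaryCaseAppendix} gives $p \in \subseqSet{\gaptuple}{T}$ directly. If $p \in L_{\Sigma\Sigma}$, then Lemma~\ref{limitedToSJLemmaBinaryCaseAppendix} confines the embedding to a single block, yielding $p \in \subseqSet{\gaptuple}{w_i}$ for some $i$, hence $p \in \subseqSet{\gaptuple}{w_i} \cap L_{\Sigma\Sigma}$. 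This is where essentially all the work sits: the confinement Lemma~\ref{limitedToSJLemmaBinaryCaseAppendix} exploits that the separators $\tb(\ta\tb)^5$ begin and end with $\tb$ while the blocks $w_i$ and $T$ begin and end with $\ta$, so that together with the upper length bound $9$ no embedding can straddle two consecutive blocks. I would regard verifying $(\ast)$ as the main obstacle, but since the three auxiliary lemmas are already available it reduces to the case distinction just described.

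With $(\ast)$ in hand, I would substitute Lemma~\ref{SingleWiLemmaBinaryCaseAppendix} (giving $\subseqSet{\gaptuple}{w_i} \cap L_{\Sigma\Sigma} = \lang{W_i}$) and Lemma~\ref{sharpUnivStringLemmaBinaryCaseAppendix} (giving $\subseqSet{\gaptuple}{T} = \overline{L_{\Sigma\Sigma}}$) to obtain
\[
\subseqSet{\gaptuple}{S} = \Bigl(\bigcup_{i\in[q]} \lang{W_i}\Bigr) \cup \overline{L_{\Sigma\Sigma}}.
\]
The biconditional is then pure set algebra. For the \emph{if} direction, if $\bigcup_{i} \lang{W_i} = L_{\Sigma\Sigma}$ then the right-hand side equals $L_{\Sigma\Sigma} \cup \overline{L_{\Sigma\Sigma}} = \{\ta,\tb\}^{2k}$. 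For the \emph{only if} direction, I would first note that each $W_{i,j} \subseteq \{\ta\ta,\tb\tb\}$, whence $\bigcup_{i} \lang{W_i} \subseteq L_{\Sigma\Sigma}$, so $\bigcup_{i} \lang{W_i}$ and $\overline{L_{\Sigma\Sigma}}$ are disjoint. Assuming $\subseqSet{\gaptuple}{S} = \{\ta,\tb\}^{2k}$ and removing the disjoint part $\overline{L_{\Sigma\Sigma}}$ from both sides yields $L_{\Sigma\Sigma} \subseteq \bigcup_{i} \lang{W_i}$, which combined with the reverse inclusion gives $\bigcup_{i} \lang{W_i} = L_{\Sigma\Sigma}$, completing the argument.
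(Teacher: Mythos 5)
Your proposal is correct and follows essentially the same route as the paper: the same decomposition of $\subseqSet{\gaptuple}{S}$ into $\bigcup_{i}(\subseqSet{\gaptuple}{w_i}\cap L_{\Sigma\Sigma})\cup\subseqSet{\gaptuple}{T}$ proved by the same case split on $p\in L_{\Sigma\Sigma}$ versus $p\in\overline{L_{\Sigma\Sigma}}$, followed by substitution of Lemmas~\ref{SingleWiLemmaBinaryCaseAppendix} and~\ref{sharpUnivStringLemmaBinaryCaseAppendix} and the same disjointness-based set algebra. The only (cosmetic) difference is that you substitute both auxiliary lemmas before splitting into the two directions of the biconditional, and you make the disjointness of $\bigcup_i\lang{W_i}$ and $\overline{L_{\Sigma\Sigma}}$ explicit via $\lang{W_i}\subseteq L_{\Sigma\Sigma}$, which the paper leaves as ``clearly disjoint.''
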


\begin{proof}
We first show that
\begin{equation*}
\subseqSet{\gaptuple}{S} = \bigcup_{i \in [q]} (\subseqSet{\gaptuple}{w_i} \cap L_{\Sigma \Sigma}) \cup \subseqSet{\gaptuple}{T}\,.
\end{equation*}
The ``$\supseteq$''-direction holds, since $T$ and every $w_i$ for every $i \in [q]$ are factors of $S$. Now let $p \in \subseqSet{\gaptuple}{S}$  be arbitrarily chosen. If $p \notin L_{\Sigma \Sigma}$, then, by Lemma~\ref{sharpUnivStringLemmaBinaryCaseAppendix}, $p \in \subseqSet{\gaptuple}{T}$. If $p \in L_{\Sigma \Sigma}$, then Lemma~\ref{limitedToSJLemmaBinaryCaseAppendix} implies that $p \in \subseqSet{\gaptuple}{w_i}$ for some $i \in [q]$, which means that $p \in \subseqSet{\gaptuple}{w_i} \cap L_{\Sigma \Sigma}$. This shows that the ``$\subseteq$''-direction holds as well.\par 
Now we prove the \emph{if} direction of the statement of the lemma and assume that $\cup_{i \in [q]} \lang{W_i} = L_{\Sigma\Sigma}$. By Lemma~\ref{SingleWiLemmaBinaryCaseAppendix}, we know that 
$(\subseqSet{\gaptuple}{w_{i}} \cap L_{\Sigma \Sigma}) = \lang{W_i}$ for every $i \in [q]$. With Lemma~\ref{sharpUnivStringLemmaBinaryCaseAppendix} and the observation from above, this means that 
\begin{align*}
\subseqSet{\gaptuple}{S} &= \bigcup_{i \in [q]} (\subseqSet{\gaptuple}{w_i} \cap L_{\Sigma \Sigma}) \cup \subseqSet{\gaptuple}{T}\\
&= \left(\bigcup_{i \in [q]} \lang{W_i}\right) \cup \subseqSet{\gaptuple}{T}\\
&= L_{\Sigma \Sigma} \cup (\overline{L_{\Sigma \Sigma}}) = \{\ta, \tb\}^{2k}\,.
\end{align*}\par
In order to prove the \emph{only if} direction, we assume that $\subseqSet{\gaptuple}{S} = \{\ta, \tb\}^{2k}$. With our observation from above, this means that
\begin{equation*}
\{\ta, \tb\}^{2k} = \bigcup_{i \in [q]} (\subseqSet{\gaptuple}{w_i} \cap L_{\Sigma \Sigma}) \cup \subseqSet{\gaptuple}{T} 
\end{equation*}
Applying Lemma~\ref{sharpUnivStringLemmaBinaryCaseAppendix} and the fact that $\bigcup_{i \in [q]} (\subseqSet{\gaptuple}{w_i} \cap L_{\Sigma \Sigma})$ and $\overline{L_{\Sigma \Sigma}}$ are clearly disjoint yields
\begin{align*}
\{\ta, \tb\}^{2k} &= \bigcup_{i \in [q]} (\subseqSet{\gaptuple}{w_i} \cap L_{\Sigma \Sigma}) \cup (\overline{L_{\Sigma \Sigma}}) & \iff\\
L_{\Sigma \Sigma} &= \bigcup_{i \in [q]} (\subseqSet{\gaptuple}{w_i} \cap L_{\Sigma \Sigma})\,. &
\end{align*}
Finally, Lemma~\ref{SingleWiLemmaBinaryCaseAppendix} implies that $L_{\Sigma \Sigma} = \bigcup_{i \in [q]} \lang{W_i}$.
\end{proof}

\subsection{Proof of Theorem~\ref{HardnessNonUniversalityBounded}}

\begin{theorem}\label{HardnessNonUniversalityBoundedAppendix}
For every fixed alphabet $\Sigma$ with $|\Sigma| \geq 3$, $\nuniProb_{\Sigma}$ with length constraints is $\npclass$-complete, even if all length constraints are $(1, 5)$. Moreover,
\begin{itemize}
\item it cannot be solved in subexponential time $2^{\smallO(k)} \poly(|w|, k))$ (unless ETH fails), 
\item it cannot be solved in time $\bigO(2^{k(1-\epsilon)} \poly(|w|, k))$ (unless SETH fails).
\end{itemize}
\end{theorem}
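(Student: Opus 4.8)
The plan is to prove membership in $\npclass$ and hardness separately, with hardness established by a two-stage, parameter-preserving reduction $\SatProb \to \metaNuniProb \to \nuniProb$. For membership, I would observe that a positive $\nuniProb$ instance admits a short certificate: any string $v \in \Sigma^k$ witnessing $\subseqSet{\gaptuple}{w} \neq \Sigma^k$, i.e.\ a $v$ with $v \nsubseq_{\gaptuple} w$. Since $k \leq |w|$ this certificate has polynomial size, and it can be verified in polynomial time by running the matching algorithm of Corollary~\ref{constantPatternsRL}. Hence $\nuniProb_\Sigma \in \npclass$, and it only remains to prove hardness together with the claimed refinements.

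The first stage is the simple reduction from $\SatProb$ to the meta problem $\metaNuniProb$ (Section~\ref{sec:SatReduction}): given a CNF formula with clauses $c_1,\dots,c_q$ over variables $v_1,\dots,v_k$, I would build the $q\times k$ matrix with entries $W_{i,j}=\{0\}$ if $v_j\in c_i$, $\{1\}$ if $\neg v_j\in c_i$, and $\{0,1\}$ otherwise; then $\lang{W_i}$ is exactly the set of assignments falsifying $c_i$, so $\bigcup_i \lang{W_i}\neq\{0,1\}^k$ iff the formula is satisfiable. The point to emphasise is that the number of matrix columns $k$ equals the number of variables, which is what will later yield the $\ETH$ and $\SETH$ exponents.

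The main obstacle, and the technical heart of the proof, is the second stage: the reduction from $\metaNuniProb$ to $\nuniProb$ with length constraints (Section~\ref{sec:theReduction}). Here I would set $\Sigma = \Gamma\cup\{\#\}$ and use the uniform length constraints $\gaptuple=(C_1,\dots,C_{k-1})$ with $C_i=(\lowerBoundShort{i},\upperBoundShort{i})=(m-1,3m-1)$, where $m=|\Gamma|$. Each row $W_i$ is encoded by $S(W_i)=w_{i,1}(\#)^{m-1}w_{i,2}\cdots(\#)^{m-1}w_{i,k}$ with $\alphabet{w_{i,j}}=W_{i,j}$; a separate block $T$ assembled from factors $T_{i,j}$ is designed to contribute all length-$k$ strings containing at least one $\#$; and everything is concatenated into $K(W_1,\dots,W_q)=T(\#^{3m})S(W_1)(\#^{3m})\cdots(\#^{3m})S(W_q)$. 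The delicate work is proving the three correctness lemmas, all of which hinge on how the bounds $(m-1,3m-1)$ interact with the block lengths: the lower bound $m-1$ forces two consecutive embedded positions into two distinct $\Gamma$-factors (inside one factor the induced gap is at most $m-2$), while the upper bound $3m-1$ forbids jumping over a $\#^{3m}$ separator, so a $\Gamma^k$-subsequence must stay within a single $S(W_i)$ and hit $w_{i,1},\dots,w_{i,k}$ in order. This yields $(\subseqSet{\gaptuple}{S(W_i)}\cap\Gamma^*)=\lang{W_i}$ (Lemma~\ref{SingleWiLemmaAppendix}), $\subseqSet{\gaptuple}{T}=\{w\in\Sigma^k \mid |w|_{\#}\geq 1\}$ (Lemma~\ref{sharpUnivStringLemmaAppendix}), and the confinement statement (Lemma~\ref{limitedToSJLemmaAppendix}); combining them gives $\subseqSet{\gaptuple}{K}=\Sigma^k$ iff $\bigcup_i\lang{W_i}=\Gamma^k$ (Lemma~\ref{mainCorrectnessLemmaAppendix}).

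Finally I would read off all the stated refinements. Composing the two stages with $\Gamma=\{0,1\}$ (so $m=2$) produces $\Sigma=\{0,1,\#\}$ of size exactly $3$, every constraint equal to $(m-1,3m-1)=(1,5)$, and a string $K$ of length $\bigO(k(k+q))$, i.e.\ polynomial in the formula size; a larger fixed alphabet is accommodated by padding $\Sigma$ with unused symbols. Since the subsequence length $k$ equals the number of variables $n$ throughout, an algorithm for $\nuniProb_\Sigma$ running in $2^{\smallO(k)}\poly(|w|,k)$ would solve $3$-$\SatProb$ in $2^{\smallO(n)}\poly(n+m)$, contradicting $\ETH$; and one running in $\bigO(2^{k(1-\epsilon)}\poly(|w|,k))$ would solve $k'$-$\SatProb$ in $\bigO(2^{n(1-\epsilon)}\poly(n))$ for every fixed clause width $k'$ (for which the number of clauses, hence $|w|$, is $\poly(n)$), contradicting $\SETH$. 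The only remaining point needing care is confirming that both reduction stages run in polynomial time and introduce no blow-up in the variable count, which the explicit constructions make routine.
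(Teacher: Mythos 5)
Your proposal is correct and follows essentially the same route as the paper: NP membership via a non-member certificate checked with the matching algorithm of Corollary~\ref{constantPatternsRL}, and hardness via the composition $\SatProb \to \metaNuniProb \to \nuniProb$ with the $(m-1,3m-1)=(1,5)$ constraints over $\Sigma=\{0,1,\#\}$, preserving $k$ as the number of variables so that the ETH and SETH exponents transfer. The only additions beyond the paper's write-up are minor but sound clarifications (padding for larger fixed alphabets, and the explicit $\poly(n)$ bound on the clause count for the SETH step).
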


\begin{proof}
We first show that $\nuniProb$ with length constraints is in $\npclass$. Let $w \in \Sigma^*$, let $k \in \mathbb{N}$ and let $\gaptuple$ be some tuple of length constraints of size $k-1$. In order to check whether $\subseqSet{\gaptuple}{w} \neq \Sigma^k$, we guess a string $p \in \Sigma^k$ in polynomial time and then check whether $p \subseq_{\gaptuple} w$. According to Corollary~\ref{constantPatternsRL}, this can be done in polynomial time. \par
The $\npclass$-hardness and the conditional lower bounds follow from the reduction obtained by plugging together the reductions described in Sections~\ref{sec:SatReduction}~and~\ref{sec:theReduction}. Let $F = \{c_1, c_2, \ldots, c_q\}$ be an instance of SAT over some set of variables $V = \{v_1, v_2, \ldots, v_k\}$. We can then use the reduction from Section~\ref{sec:SatReduction} in order to reduce this SAT instance into a $\metaNuniProb$ instance $(W_{i, j})_{1 \leq i \leq q, 1 \leq j \leq k}$ over a binary alphabet $\{0, 1\}$. After that, we use the reduction of Section~\ref{sec:theReduction} in order to reduce the $\metaNuniProb$ instance $(W_{i, j})_{1 \leq i \leq q, 1 \leq j \leq k}$ over $\{0, 1\}$ into a $\nuniProb_{\Sigma}$ instance $\gaptuple = (C_1, C_2, \ldots, C_{k-1})$ and $w$ over alphabet $\Sigma = \{0, 1, \#\}$. Moreover, $C_i = (\lowerBoundShort{j}, \upperBoundShort{j}) = (m-1, 3m-1)$ for every $j \in [k-1]$, where $m$ is the alphabet size of the $\metaNuniProb$ instance, which is $|\{0, 1\}| = 2$. This means that all length constraints are $(m-1, 3m-1) = (1, 5)$. This proves that $\nuniProb_{\Sigma}$ with length constraints is $\npclass$-complete, even if $|\Sigma| = 3$ and all length constraints are $(1, 5)$. We can furthermore note that $|w| = \poly(k, m, q) = \poly(k + q)$. \par
In order to prove the ETH bound, assume that $\nuniProb_{\Sigma}$ with $|\Sigma| = 3$ and only length constraints $(1, 5)$ can be solved in time $2^{\smallO(k)} \poly(|w|, k))$. Let $F = \{c_1, c_2, \ldots, c_q\}$ be an instance of 3-CNF-SAT over some set of variables $V = \{v_1, v_2, \ldots, v_{k'}\}$. If we use the above reduction on $F$, then we get a $\nuniProb_{\Sigma}$ instance $(\gaptuple, w)$ with $|\gaptuple| = {k'}-1$ and $|w| = \poly({k'} + q)$. By assumption, this instance can be solved in time $2^{\smallO(k')} \poly(|w|, k'))$, which means that 3-CNF-SAT can be solved in time $2^{\smallO(k')} \poly(k' + q))$. Hence, ETH fails.\par
In order to prove the SETH bound, assume that $\nuniProb_{\Sigma}$ with $|\Sigma| = 3$ and only length constraints $(1, 5)$ can be solved in time$\bigO(2^{k(1-\epsilon)} \poly(|w|, k))$ for some $\epsilon > 0$. Let $F = \{c_1, c_2, \ldots, c_q\}$ be an instance of CNF-SAT over some set of variables $V = \{v_1, v_2, \ldots, v_{k'}\}$. If we use the above reduction on $F$, then we get a $\nuniProb_{\Sigma}$ instance $(\gaptuple, w)$ with $|\gaptuple| = k'-1$ and $|w| = \poly(k' + q)$. By assumption, this instance can be solved in time $\bigO(2^{k'(1-\epsilon)} \poly(|w|, k'))$ for some $\epsilon > 0$, which means that CNF-SAT can be solved in time $\bigO(2^{k'(1-\epsilon)} \poly(k' + q))$. Hence, SETH fails.
\end{proof}

\subsection{Proof of Theorem~\ref{binaryCaseHardnessTheorem}}

\begin{theorem}\label{binaryCaseHardnessTheoremAppendix}
For every fixed alphabet $\Sigma$ with $|\Sigma| = 2$, $\nuniProb_{\Sigma}$ with length constraints is $\npclass$-complete, even if each length constraint is $(0, 0)$ or $(3, 9)$. 
\end{theorem}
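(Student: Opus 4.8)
The plan is to prove the two halves of $\npclass$-completeness separately, reusing the entire gadget construction and the four lemmas developed in Section~\ref{sec:binaryCaseReduction}. First I would settle membership in $\npclass$ exactly as in the proof of Theorem~\ref{HardnessNonUniversalityBounded}: a ``yes''-instance of $\nuniProb_{\Sigma}$ asks for the existence of some $p \in \Sigma^{|\gaptuple|+1}$ with $p \notin \subseqSet{\gaptuple}{w}$, so the nondeterministic machine guesses such a candidate $p$ and verifies $p \nsubseq_{\gaptuple} w$. By Corollary~\ref{constantPatternsRL}(1) this matching test runs in polynomial time, and the argument is insensitive to the alphabet size, so it applies in particular for $|\Sigma| = 2$.

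For $\npclass$-hardness I would reduce from $\SatProb$ via the construction already laid out in Section~\ref{sec:binaryCaseReduction}. Given a CNF formula $F = \{c_1, \ldots, c_q\}$ over variables $\{v_1, \ldots, v_k\}$, I set $\Sigma = \{\ta, \tb\}$ and take the tuple $\gaptuple = (C_1, C'_1, C_2, C'_2, \ldots, C_k)$ with $C_j = (0,0)$ and $C'_j = (3,9)$, the clause strings $w_i$, the ``blocking'' string $T$, and finally $S = T\,\tb(\ta\tb)^5 w_1 \tb(\ta\tb)^5 \cdots \tb(\ta\tb)^5 w_q$. This is plainly polynomial-time computable, every length constraint is either $(0,0)$ or $(3,9)$ as demanded by the statement, and the induced subsequences have length $2k$ over the binary alphabet $\Sigma$.

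Correctness then follows by chaining the lemmas. Reading $\ta\ta$ and $\tb\tb$ as the Boolean values \emph{false} and \emph{true}, the set $L_{\Sigma\Sigma}$ encodes exactly the assignments of $F$, and the rows $W_{i,j}$ are arranged so that $\bigcup_{i \in [q]} \lang{W_i}$ is precisely the set of assignments that falsify some clause; hence $\bigcup_{i} \lang{W_i} \neq L_{\Sigma\Sigma}$ iff $F$ is satisfiable. Lemma~\ref{mainCorrectnessLemmaBinaryAppendix} states $\subseqSet{\gaptuple}{S} = \{\ta,\tb\}^{2k}$ iff $\bigcup_{i} \lang{W_i} = L_{\Sigma\Sigma}$, so passing to complements gives that $(S, \gaptuple)$ is a positive $\nuniProb_{\Sigma}$ instance iff $F$ is satisfiable. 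Together with membership in $\npclass$, this yields the theorem.

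The conceptual obstacle --- discharged by the lemmas but genuinely the heart of the reduction --- is the loss of a dedicated separator symbol. In the $|\Sigma| \geq 3$ reduction the symbol $\#$ both separated the blocks and served as the ``witness of non-universality''; over a binary alphabet neither role is available for free. This is what forces the two-symbols-per-variable encoding (and hence the doubled tuple, which is exactly why this reduction yields only $\npclass$-hardness and not the SETH bound of Theorem~\ref{HardnessNonUniversalityBounded}). The delicate part I would have to check is Lemma~\ref{sharpUnivStringLemmaBinaryCaseAppendix}: one must argue, purely by length counting against the $(0,0)$ and $(3,9)$ constraints, that the $\tb\ta\tb$ separators and the $\tb(\ta\tb)^5$ fillers force every $\gaptuple$-subsequence of $T$ to stay inside a single block $T_r$ and to read its ``blocked'' middle factor $T_{r,r} = \ta\tb\ta$ at position $r$, so that $\subseqSet{\gaptuple}{T}$ equals exactly $\overline{L_{\Sigma\Sigma}}$; the companion confinement argument of Lemma~\ref{limitedToSJLemmaBinaryCaseAppendix}, which ensures that an $L_{\Sigma\Sigma}$-subsequence cannot straddle two gadgets across a filler, is where the remaining care goes.
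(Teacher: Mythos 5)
Your proposal is correct and follows essentially the same route as the paper: the paper's proof of this theorem simply invokes the reduction of Section~\ref{sec:binaryCaseReduction}, whose gadgets, constraints $(0,0)$ and $(3,9)$, and Lemmas~\ref{SingleWiLemmaBinaryCaseAppendix}--\ref{mainCorrectnessLemmaBinaryAppendix} you chain together exactly as intended, while your $\npclass$-membership argument via Corollary~\ref{constantPatternsRL}(1) matches the one the paper gives for the $|\Sigma|\geq 3$ case. Nothing essential is missing.
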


\begin{proof}
This follows directly from the reduction described in Sections~\ref{sec:binaryCaseReduction}. 
\end{proof}

\subsection{Proof of Theorem~\ref{HardnessNonUniversalityUnbounded}}

\begin{theorem}\label{HardnessNonUniversalityUnboundedAppendix}
The problem $\nuniProb$ with length constraints cannot be solved in running time $\bigO(f(k) \poly(|w|, k))$ for any computable function $f$ (unless $\fptclass = \wclass[1]$). 
\end{theorem}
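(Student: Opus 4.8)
The plan is to obtain Theorem~\ref{HardnessNonUniversalityUnboundedAppendix} essentially for free by composing the two reductions already developed in Sections~\ref{sec:kCliqueReduction} and~\ref{sec:theReduction}, and then checking that the composition is a genuine parameterised reduction, i.e.\ one whose construction runs in fpt time and whose target parameter is bounded by a computable function of the source parameter. Since $\kISProb$, parameterised by the size $k$ of the sought independent set, is $\wclass[1]$-complete, this will show that $\nuniProb$ with length constraints, parameterised by the subsequence length $k$, is $\wclass[1]$-hard, from which the stated lower bound follows at once.

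Concretely, I would start from an instance $(G,k)$ of $\kISProb$ with $G=(V,E)$, $|V|=n$, $|E|=m$ (recall we assume every vertex carries a loop, as in Section~\ref{sec:kCliqueReduction}). Applying the reduction of Section~\ref{sec:kCliqueReduction} produces a $\metaNuniProb$ instance over the alphabet $\Gamma=V$ (so $|\Gamma|=n$) given by a matrix with exactly $k$ columns and $q=mk(k-1)$ rows; by the correctness lemma of that section, $\bigcup_{i,r,s}\lang{W_{\nu(i,r,s)}}\neq\Gamma^k$ iff $G$ has a $k$-independent set. Feeding this matrix into the reduction of Section~\ref{sec:theReduction} yields the alphabet $\Sigma=\Gamma\cup\{\#\}$, a $(k-1)$-tuple $\gaptuple$ of length constraints with $C_i=(\lowerBoundShort{i},\upperBoundShort{i})=(n-1,3n-1)$ for all $i$, and the string $K(W_1,\ldots,W_q)$; by Lemma~\ref{mainCorrectnessLemmaAppendix}, $\subseqSet{\gaptuple}{K(W_1,\ldots,W_q)}=\Sigma^{k}$ iff $\bigcup_i\lang{W_i}=\Gamma^k$. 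Chaining the two equivalences gives that $\subseqSet{\gaptuple}{K(W_1,\ldots,W_q)}\neq\Sigma^{k}$ iff $G$ has a $k$-independent set, so $(G,k)$ is a yes-instance of $\kISProb$ exactly when $(K(W_1,\ldots,W_q),\gaptuple)$ is a yes-instance of $\nuniProb$.

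Two bookkeeping checks then turn this into a parameterised reduction. First, the parameter is preserved \emph{exactly}: in both reductions the number of matrix columns, the number of gap constraints plus one, and the subsequence length all coincide with the independent-set size $k$, so the target parameter is $k$ itself (not merely some $f(k)$). Second, the construction is polynomial in $n$: the matrix has $q=mk(k-1)\le n^4$ rows over the $n$-letter alphabet $\Gamma$, each $S(W_i)$ and each block of $T$ has length $\bigO(kn)$, whence $|K(W_1,\ldots,W_q)|=\bigO((q+k)\,kn)=\poly(n)$ while $\gaptuple$ consists of $k-1$ pairs with entries bounded by $3n$. Thus we have a polynomial-time, parameter-preserving reduction from the $\wclass[1]$-complete $\kISProb$ to $\nuniProb$ parameterised by $k$, establishing $\wclass[1]$-hardness of the latter. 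Consequently, any algorithm solving $\nuniProb$ with length constraints in time $\bigO(f(k)\poly(|w|,k))$ for a computable $f$ would solve $\kISProb$ in fpt time, forcing $\fptclass=\wclass[1]$.

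Since the genuinely hard work — the correctness of both reductions — is already discharged in Sections~\ref{sec:kCliqueReduction} and~\ref{sec:theReduction}, the only real point to get right is the parameter bookkeeping, and this is also where the necessity of the hypotheses becomes visible. The crucial feature is that $k$ must survive as the subsequence length while the alphabet $\Sigma=V\cup\{\#\}$ and the length bounds $(n-1,3n-1)$ are permitted to grow with $n$; if either the alphabet or the upper length bounds were held constant, the brute-force enumeration of Theorem~\ref{upperBoundUnboundedAlphabets} (respectively, the enumeration of at most $\ell^{k-1}$ gap-size tuples) would already be an fpt-algorithm in $k$, exactly as noted in the remark following the theorem statement. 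Hence the result is inherently confined to the unbounded-alphabet, genuine-length-constraint regime.
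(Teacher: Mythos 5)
Your proposal is correct and follows essentially the same route as the paper's own proof: compose the reduction from $\kISProb$ to $\metaNuniProb$ with the one from $\metaNuniProb$ to $\nuniProb$, observe that the parameter $k$ is preserved exactly and the construction is polynomial, and conclude $\wclass[1]$-hardness. Your explicit size bookkeeping (and the correct row count $q=mk(k-1)$) is if anything slightly more careful than the paper's one-paragraph argument.
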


\begin{proof}
We observe that the reduction that results from plugging together the reductions described in Sections~\ref{sec:kCliqueReduction}~and~\ref{sec:theReduction}) is a parameterised reduction from $\kISProb$ to $\nuniProb$ with length constraints parameterised by $k$. Indeed, the reduction from Section~\ref{sec:kCliqueReduction} transforms an instance $(G = (V, E), k)$ of $\kISProb$ into an $\metaNuniProb$ instance over an alphabet $\Gamma$ of cardinality $|V|$ and with a matrix of dimensions $|V|k(k-1)$ and $k$. Then, the reduction from Section~\ref{sec:theReduction} transforms this $\metaNuniProb$ instance into an instance of $\nuniProb$ with length constraints, where parameter $k$ corresponds to parameter $k$ of the original $\kISProb$ instance. Consequently, the reduction is a parameterised reduction from $\kISProb$ to $\nuniProb$ with length constraints parameterised by $k$. Therefore, $\nuniProb$ with length constraints parameterised by $k$ is $\wclass[1]$, which yields the statement of the theorem.
\end{proof}

\subsection{Containment and Equivalence}\label{sec:ContEquiAppendix}

We now show that the statements of Theorems~\ref{binaryCaseHardnessTheorem},~\ref{HardnessNonUniversalityBounded},~and~\ref{HardnessNonUniversalityUnbounded} also hold for the problems $\ncontProb$ and $\nequiProb$.\par
We recall the definition of the \emph{con-containment} ($\ncontProb$) and the \emph{non-equivalence problem for gap constrained subsequences} ($\nequiProb$): Given a $(k-1)$-tuple of gap constraints $\gaptuple$ and strings $w, w'$, decide $\subseqSet{\gaptuple}{w} \not \subseteq \subseqSet{\gaptuple}{w'}$, or $\subseqSet{\gaptuple}{w} \neq \subseqSet{\gaptuple}{w'}$, respectively.\par
We now extend the reduction from Section~\ref{sec:theReduction} to a reduction from $\metaNuniProb$ to $\nequiProb$. Let $\Gamma = \{b_1, b_2, \ldots, b_m\}$, $q, k \in \mathbb{N}$, and, for every $i \in [q], j \in [k]$, let $W_{i, j} \subseteq \Gamma$. Moreover, let $\gaptuple$ be the tuple of length constraints and let $K(W_1, \ldots, W_q)$ be the string over $\Sigma = \Gamma \cup \{\#\}$ constructed by the reduction from Section~\ref{sec:theReduction}. \par
We now define a string that is universal for $\gaptuple$. For every $i \in [k]$, let $T' = T'_{1} T'_{2} \ldots T'_{k}$, where, for every $j \in [k]$, $T'_{j} = b_1 b_2 \ldots b_m \#^{m}$.

\begin{lemma}\label{UnivStringLemmaAppendix}
$\subseqSet{\gaptuple}{T'} = \Sigma^k$.
\end{lemma}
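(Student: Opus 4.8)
The plan is to prove both inclusions, with all the work in one of them. The inclusion $\subseqSet{\gaptuple}{T'} \subseteq \Sigma^k$ is immediate: since $\gaptuple$ consists of $k-1$ constraints, every $\gaptuple$-subsequence of $T'$ has length exactly $k$ and thus lies in $\Sigma^k$. For the reverse inclusion $\Sigma^k \subseteq \subseqSet{\gaptuple}{T'}$ I would take an arbitrary $p = p[1] \cdots p[k] \in \Sigma^k$ and exhibit one explicit embedding $e$ that realizes $p$ and satisfies $\gaptuple$.

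First I would fix coordinates. Since the blocks $T'_j = b_1 b_2 \cdots b_m \#^m$ are concatenated without separators, block $T'_j$ occupies positions $(j-1)2m+1, \ldots, j\cdot 2m$ of $T'$; within this block $b_t$ occurs uniquely at offset $t \in [m]$, while $\#$ occurs at every offset in $\{m+1, \ldots, 2m\}$. The key design choice is to route position $j$ into its own block $T'_j$: if $p[j] = b_t$ I set $e(j)$ to the unique occurrence of $b_t$, i.e. to offset $o_j := t$; if $p[j] = \#$ I set $e(j)$ to the \emph{first} occurrence of $\#$ in the block, i.e. to offset $o_j := m+1$. By construction $p \subseq_e T'$, so only the length constraints remain to be verified.

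Writing $e(j) = (j-1)2m + o_j$, the $j$-th gap has length $\len{\gap{T'}{e}{j}} = 2m + o_{j+1} - o_j - 1$, so the requirement $\lowerBoundShort{j} = m-1 \leq \len{\gap{T'}{e}{j}} \leq 3m-1 = \upperBoundShort{j}$ is equivalent to $|o_{j+1} - o_j| \leq m$. Under the above assignment every $o_j$ lies in $\{1, \ldots, m+1\}$, and I would finish with a short case distinction on the symbol types at positions $j$ and $j+1$: if both are in $\Gamma$ then $o_j, o_{j+1} \in [m]$ so $|o_{j+1}-o_j| \leq m-1$; if both are $\#$ then $o_j = o_{j+1} = m+1$; and in the two mixed cases one offset is $m+1$ and the other lies in $[m]$, giving a difference in $[1,m]$ or $[-m,-1]$. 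In all cases $|o_{j+1}-o_j| \leq m$, so $e$ satisfies $\gaptuple$ and $p \in \subseqSet{\gaptuple}{T'}$, which yields $\Sigma^k \subseteq \subseqSet{\gaptuple}{T'}$ and hence equality.

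The argument is essentially routine, and the only point requiring genuine care — the "main obstacle", such as it is — is keeping the consecutive offsets within distance $m$ across the two fundamentally different kinds of occurrences: $\Gamma$-symbols are pinned to offsets in $[m]$, whereas $\#$ is deliberately pinned to the single offset $m+1$. Committing to the \emph{first} $\#$ in each block (rather than an arbitrary $\#$-occurrence) is exactly what bounds the differences, and it is this interplay that forces the block length $2m$ and the constants $(m-1, 3m-1)$ of $\gaptuple$ to line up precisely.
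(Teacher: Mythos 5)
Your proposal is correct and follows essentially the same route as the paper: the paper's proof uses exactly the same embedding (position $j$ goes to the occurrence of $b_t$ in $T'_j$, or to the $(m+1)^{\text{st}}$ position of $T'_j$ when $p[j]=\#$) and then observes that all images lie among the first $m+1$ positions of their length-$2m$ blocks, which forces the gaps into $[m-1,3m-1]$. Your explicit offset arithmetic reducing the constraint to $|o_{j+1}-o_j|\leq m$ just spells out the step the paper labels as "easily seen".
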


\begin{proof}
Since $T'$ is a string over $\Sigma$, we have $\subseqSet{\gaptuple}{T'} \subseteq \Sigma^k$. Now let $p \in \Sigma^k$ be arbitrarily chosen, and let $e$ be the embedding that is defined as follows. For every $j \in [k]$, we map $j$ to the $t^{\text{th}}$ position of $T'_{j}$ if $p[j] = b_t$, and to the $(m + 1)^{\text{st}}$ position of $T'_{j}$ if $p[j] = \#$. We observe that this embedding $e$ satisfies that every $j$ is mapped to an occurrence of $\pi[j]$ of $T'_{j}$. Moreover, since every $j \in [k]$ is mapped to one of the first $m + 1$ occurrences of the length-$2m$ factor $T'_{j}$, it can be easily seen that $m - 1 \leq |\gap{T'}{e}{j}| \leq 3m - 1$. Hence, $e$ satisfies the length constraints, and therefore $p \in \subseqSet{\gaptuple}{T'}$. This means that $\Sigma^k \subseteq \subseqSet{\gaptuple}{T'}$.
\end{proof}

\begin{lemma}\label{mainCorrectnessLemmaEquiAppendix}
$\subseqSet{\gaptuple}{K(W_1, \ldots, W_q)} = \subseqSet{\gaptuple}{T'} \iff \cup_{i \in [q]} \lang{W_{i}} = \Gamma^k$.
\end{lemma}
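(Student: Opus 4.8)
The plan is to obtain this statement essentially for free by chaining the two preceding lemmas, since all the substantive work has already been carried out in the correctness proof of the non-universality reduction (Lemma~\ref{mainCorrectnessLemmaAppendix}) and in establishing that $T'$ is universal (Lemma~\ref{UnivStringLemmaAppendix}). The role of the present lemma is merely to convert a non-universality instance into a non-equivalence instance, by using the universal string $T'$ as the second word $w'$; the equivalence $\subseqSet{\gaptuple}{K(W_1, \ldots, W_q)} = \subseqSet{\gaptuple}{T'}$ then faithfully encodes universality of $\subseqSet{\gaptuple}{K(W_1, \ldots, W_q)}$.

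Concretely, I would first record the trivial observation that, since $K(W_1, \ldots, W_q)$ is a string over $\Sigma$ and $\gaptuple$ has length $k-1$, every $\gaptuple$-subsequence of it is a length-$k$ string over $\Sigma$; hence $\subseqSet{\gaptuple}{K(W_1, \ldots, W_q)} \subseteq \Sigma^k$. Next I would invoke Lemma~\ref{UnivStringLemmaAppendix}, which gives $\subseqSet{\gaptuple}{T'} = \Sigma^k$. Combining these two facts, the identity $\subseqSet{\gaptuple}{K(W_1, \ldots, W_q)} = \subseqSet{\gaptuple}{T'}$ holds if and only if $\subseqSet{\gaptuple}{K(W_1, \ldots, W_q)} = \Sigma^k$ (an inclusion into $\Sigma^k$ that coincides with $\Sigma^k$ exactly when it is an equality).

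Finally, I would apply Lemma~\ref{mainCorrectnessLemmaAppendix}, which states precisely that $\subseqSet{\gaptuple}{K(W_1, \ldots, W_q)} = \Sigma^{k}$ if and only if $\cup_{i \in [q]} \lang{W_{i}} = \Gamma^k$. Transitivity of the two biconditionals then yields the claimed equivalence. There is no genuine obstacle here: the only point worth stating carefully is the reduction of ``equality of two subsequence sets'' to ``universality of one subsequence set,'' which is exactly what the universality of $T'$ buys us. The practical upshot, which I would flag in a concluding sentence, is that the pair $\bigl(K(W_1, \ldots, W_q),\, T'\bigr)$ together with $\gaptuple$ is a valid $\nequiProb$ instance realizing the same correctness guarantee as the $\nuniProb$ reduction, and the analogous containment statement $\subseqSet{\gaptuple}{T'} \subseteq \subseqSet{\gaptuple}{K(W_1, \ldots, W_q)} \iff \cup_{i \in [q]} \lang{W_{i}} = \Gamma^k$ follows by the same argument, so that all three lower bounds transfer to $\ncontProb$ and $\nequiProb$.
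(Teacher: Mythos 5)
Your proposal is correct and matches the paper's own proof, which is exactly the two-step chain you describe: by Lemma~\ref{UnivStringLemmaAppendix} the equality $\subseqSet{\gaptuple}{K(W_1, \ldots, W_q)} = \subseqSet{\gaptuple}{T'}$ is equivalent to $\subseqSet{\gaptuple}{K(W_1, \ldots, W_q)} = \Sigma^k$, and Lemma~\ref{mainCorrectnessLemmaAppendix} converts that to $\cup_{i \in [q]} \lang{W_{i}} = \Gamma^k$. Your auxiliary remark that $\subseqSet{\gaptuple}{K(W_1, \ldots, W_q)} \subseteq \Sigma^k$ is harmless but not needed, since the first step is a direct substitution of equal sets.
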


\begin{proof}
\begin{align*}
&\subseqSet{\gaptuple}{K(W_1, \ldots, W_q)} = \subseqSet{\gaptuple}{T'}& &\overset{\text{Lem.~\ref{UnivStringLemmaAppendix}}}{\iff}&\\
&\subseqSet{\gaptuple}{K(W_1, \ldots, W_q)} = \Sigma^k& &\overset{\text{Lem.~\ref{mainCorrectnessLemmaAppendix}}}{\iff}&\\
&\cup_{i \in [q]} \lang{W_{i}} = \Gamma^k\,.& &&
\end{align*}
\end{proof}

\begin{theorem}\label{HardnessNonUniversalityBoundedAppendixEqui}
For every fixed alphabet $\Sigma$ with $|\Sigma| \geq 3$, $\nequiProb_{\Sigma}$ and $\ncontProb_{\Sigma}$ with length constraints are $\npclass$-complete, even if all length constraints are $(1, 5)$. Moreover,
\begin{itemize}
\item it cannot be solved in subexponential time $2^{\smallO(k)} \poly(|w|, |w'|, k))$ (unless ETH fails), 
\item it cannot be solved in time $\bigO(2^{k(1-\epsilon)} \poly(|w|, |w'|, k))$ (unless SETH fails).
\end{itemize}
\end{theorem}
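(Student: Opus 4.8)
The plan is to obtain $\npclass$-completeness and the conditional lower bounds for $\nequiProb_{\Sigma}$ and $\ncontProb_{\Sigma}$ by reusing the machinery already built for $\nuniProb$, so that no new combinatorics is needed. For membership in $\npclass$, I would note that a witness for $\subseqSet{\gaptuple}{w} \neq \subseqSet{\gaptuple}{w'}$ is a single string $p \in \Sigma^k$ lying in exactly one of the two sets, and a witness for $\subseqSet{\gaptuple}{w} \not\subseteq \subseqSet{\gaptuple}{w'}$ is a $p$ with $p \subseq_{\gaptuple} w$ and $p \nsubseq_{\gaptuple} w'$. In both cases the guessed $p$ has polynomial length, and the membership tests $p \subseq_{\gaptuple} w$ and $p \subseq_{\gaptuple} w'$ run in polynomial time by Corollary~\ref{constantPatternsRL}(1); hence both problems lie in $\npclass$.

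For hardness I would compose the reduction of Section~\ref{sec:SatReduction} (SAT to $\metaNuniProb$) with that of Section~\ref{sec:theReduction} ($\metaNuniProb$ to $\nuniProb$), which from a formula $F$ on $k$ variables produces the tuple $\gaptuple$ (all constraints $(1,5)$), the string $K(W_1,\ldots,W_q)$, and the universal string $T'$ over $\Sigma = \{0,1,\#\}$. The crucial point is that the two target problems are not symmetric, so I would instantiate the two input strings differently in each case. For $\nequiProb$ I set $w = K(W_1,\ldots,W_q)$ and $w' = T'$; then Lemma~\ref{mainCorrectnessLemmaEquiAppendix} gives $\subseqSet{\gaptuple}{w} \neq \subseqSet{\gaptuple}{w'}$ iff $\cup_{i \in [q]} \lang{W_i} \neq \Gamma^k$, i.e.\ iff $F$ is satisfiable. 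For $\ncontProb$ I instead set $w = T'$ and $w' = K(W_1,\ldots,W_q)$; since $\subseqSet{\gaptuple}{T'} = \Sigma^k$ by Lemma~\ref{UnivStringLemmaAppendix} while $\subseqSet{\gaptuple}{K(W_1,\ldots,W_q)} \subseteq \Sigma^k$ trivially, non-containment $\Sigma^k \not\subseteq \subseqSet{\gaptuple}{w'}$ is equivalent to $\subseqSet{\gaptuple}{w'} \neq \Sigma^k$, which by the universality correctness Lemma~\ref{mainCorrectnessLemmaAppendix} holds iff $\cup_{i \in [q]} \lang{W_i} \neq \Gamma^k$, again iff $F$ is satisfiable. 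Both are therefore parameter-preserving many-one reductions from SAT; the case of larger fixed alphabets $|\Sigma| > 3$ is handled exactly as for $\nuniProb$ in Theorem~\ref{HardnessNonUniversalityBoundedAppendix}.

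For the ETH and SETH bounds I would argue as in the proof of Theorem~\ref{HardnessNonUniversalityBoundedAppendix}: the composed reduction maps a formula with $k$ variables and $q$ clauses to an instance whose subsequence length is precisely $k$ and whose strings $w, w'$ both have length $\poly(k + q)$ (both $K(\ldots)$ and $T'$ are of this size). Consequently, an algorithm solving $\nequiProb_{\Sigma}$ or $\ncontProb_{\Sigma}$ in time $2^{\smallO(k)} \poly(|w|, |w'|, k)$ would solve 3-CNF-SAT in $2^{\smallO(k)} \poly(k + q)$, contradicting ETH, while one running in $\bigO(2^{k(1-\epsilon)} \poly(|w|, |w'|, k))$ would solve CNF-SAT in $\bigO(2^{k(1-\epsilon)} \poly(k + q))$, contradicting SETH.

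The only delicate points — and the closest thing to an obstacle — are to choose which of $w, w'$ is the universal string $T'$ in each of the two problems (getting the direction of containment wrong would reverse the reduction and break correctness), and to verify that the composition does not inflate the parameter, so that the subsequence length of the produced instance equals exactly the variable count of $F$. Once these are checked, the transfer of all three conclusions ($\npclass$-completeness, the ETH bound, and the SETH bound) is immediate from the cited lemmas.
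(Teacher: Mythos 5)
Your proposal is correct and follows essentially the same route as the paper: it composes the $\SatProb \to \metaNuniProb \to \nuniProb$ reduction with the universal string $T'$ and invokes Lemmas~\ref{UnivStringLemmaAppendix}, \ref{mainCorrectnessLemmaAppendix}, and \ref{mainCorrectnessLemmaEquiAppendix}, preserving $k$ as the number of Boolean variables so the ETH/SETH bounds transfer. The only difference is that you spell out explicitly which of $w, w'$ plays the role of $T'$ for $\ncontProb$ versus $\nequiProb$, a detail the paper leaves implicit under ``follows analogously.''
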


\begin{proof}
This result follows analogously as in the proof of Theorem~\ref{HardnessNonUniversalityBoundedAppendix}, i.\,e., by the reduction obtained by plugging together the reductions described in Sections~\ref{sec:SatReduction} and the extension of the reduction of Section~\ref{sec:theReduction} described above. That this reduction satisfies that $|\Sigma| \geq 3$ and that all length constraints are $(1, 5)$ follows in exactly the same way as in the proof of Theorem~\ref{HardnessNonUniversalityBoundedAppendix}.\par
The conditional lower bounds also follow analogously as in the proof of Theorem~\ref{HardnessNonUniversalityBoundedAppendix}. We only have to observe that the universal string $T'$ is polynomial in $k$, and that the adapted reduction still produces a $\nequiProb_{\Sigma}$ with $|\gaptuple| = k-1$, where $k$ is the number of Boolean variables.
\end{proof}

\begin{theorem}\label{HardnessNonUniversalityUnboundedAppendixEqui}
The problems $\nequiProb$ and $\ncontProb$ with length constraints cannot be solved in running time $\bigO(f(k) \poly(|w|, k))$ for any computable function $f$ (unless $\fptclass = \wclass[1]$). 
\end{theorem}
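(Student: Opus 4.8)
The plan is to reuse the machinery already built for the non-universality problem rather than construct anything new. Theorem~\ref{HardnessNonUniversalityUnbounded} establishes $\wclass[1]$-hardness of $\nuniProb$ with length constraints parameterised by $k$, via the parameterised reduction from $\kISProb$ obtained by composing the reduction of Section~\ref{sec:kCliqueReduction} (from $\kISProb$ to $\metaNuniProb$) with the reduction of Section~\ref{sec:theReduction} (from $\metaNuniProb$ to $\nuniProb$). The crucial feature I would exploit is that this composition sends an instance $(G, k)$ of $\kISProb$ to an instance $(\gaptuple, K(W_1, \ldots, W_q))$ of $\nuniProb$ whose gap-tuple $\gaptuple$ has length $k-1$, so the $\nuniProb$-parameter $k$ equals the $\kISProb$-parameter $k$. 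First I would record that the whole construction runs in polynomial time (hence FPT time) and that $K(W_1, \ldots, W_q)$ has size polynomial in $k$ and $m = |V|$.

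To handle non-equivalence, I would invoke the extension already given in Section~\ref{sec:ContEquiAppendix}, which appends a string $T'$ that is universal for $\gaptuple$, i.e., $\subseqSet{\gaptuple}{T'} = \Sigma^k$ (Lemma~\ref{UnivStringLemmaAppendix}). By Lemma~\ref{mainCorrectnessLemmaEquiAppendix}, $\subseqSet{\gaptuple}{K(W_1, \ldots, W_q)} = \subseqSet{\gaptuple}{T'}$ holds if and only if $\bigcup_{i \in [q]} \lang{W_i} = \Gamma^k$, which, by the $\kISProb$-reduction, happens exactly when $G$ has \emph{no} $k$-independent set. Taking $w = K(W_1, \ldots, W_q)$ and $w' = T'$ therefore yields a positive $\nequiProb$ instance precisely when $G$ has a $k$-independent set. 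Since $|T'|$ is polynomial in $k$ and $m$ and the parameter $k$ is unchanged, the composed map is a parameterised reduction from $\kISProb$ to $\nequiProb$ parameterised by $k$.

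For non-containment I would use the same universal string but swap the roles of the two words, setting $w = T'$ and $w' = K(W_1, \ldots, W_q)$. Because $K(W_1, \ldots, W_q)$ is a string over $\Sigma$, the inclusion $\subseqSet{\gaptuple}{K(W_1, \ldots, W_q)} \subseteq \Sigma^k = \subseqSet{\gaptuple}{T'}$ always holds, so $\subseqSet{\gaptuple}{T'} \not\subseteq \subseqSet{\gaptuple}{K(W_1, \ldots, W_q)}$ is equivalent to $\subseqSet{\gaptuple}{K(W_1, \ldots, W_q)} \neq \Sigma^k$, which by Lemma~\ref{mainCorrectnessLemmaAppendix} is equivalent to $\bigcup_{i \in [q]} \lang{W_i} \neq \Gamma^k$, and hence to $G$ having a $k$-independent set. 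This gives a parameterised reduction from $\kISProb$ to $\ncontProb$ parameterised by $k$ as well.

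Since $\kISProb$ parameterised by solution size $k$ is $\wclass[1]$-hard, the two reductions above show that $\nequiProb$ and $\ncontProb$ with length constraints are $\wclass[1]$-hard parameterised by $k$, so neither admits an algorithm running in $\bigO(f(k) \poly(|w|, k))$ time unless $\fptclass = \wclass[1]$. I do not expect a real obstacle here: all the combinatorial content was already discharged in the non-universality analysis and in Lemmas~\ref{UnivStringLemmaAppendix} and~\ref{mainCorrectnessLemmaEquiAppendix}. The only point requiring a moment's care is checking that appending the auxiliary universal string $T'$ keeps the instance size polynomially bounded and leaves the parameter $k$ untouched, so that the parameterised character of the reduction survives the extension to equivalence and containment.
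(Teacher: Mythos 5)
Your proposal is correct and follows essentially the same route as the paper: composing the $\kISProb$-to-$\metaNuniProb$ reduction with the $\metaNuniProb$-to-$\nuniProb$ construction and then using the universal string $T'$ (Lemmas~\ref{UnivStringLemmaAppendix} and~\ref{mainCorrectnessLemmaEquiAppendix}) to turn it into a parameterised reduction to $\nequiProb$ and $\ncontProb$ with parameter $k$ preserved. Your explicit swapping of $w$ and $w'$ for the containment case is a slightly more detailed spelling-out of what the paper leaves implicit, but the argument is the same.
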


\begin{proof}
Analogously to the proof of Theorem~\ref{HardnessNonUniversalityUnboundedAppendix}, we observe that plugging together the reduction described in Sections~\ref{sec:kCliqueReduction} and the extension of the reduction from Section~\ref{sec:theReduction}) described above yields a parameterised reduction from $\kISProb$ to $\nequiProb$ with length constraints parameterised by $k$. 
\end{proof}

Next, we extend in a similar way the reduction from Section~\ref{sec:binaryCaseReduction} to a reduction from $\SatProb$ to $\nequiProb$. Let $F = \{c_1, c_2, \ldots, c_q\}$ be a Boolean formula in CNF on variables $\{v_1, v_2, \ldots, v_k\}$. Moreover, let $\gaptuple$ be the tuple of length constraints and let $S$ be the string over $\Sigma = \{\ta, \tb\}$ constructed by the reduction from Section~\ref{sec:binaryCaseReduction}. We now prove that the string $T' = ((\ta \ta \tb \tb \ta) \tb^3)^k$ is universal for $\gaptuple$. 

\begin{lemma}\label{UnivStringLemmaBinaryAppendix}
$\subseqSet{p}{T'} = \{\ta, \tb\}^{2k}$.
\end{lemma}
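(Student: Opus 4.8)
The plan is to establish this universality statement in exact parallel to Lemma~\ref{UnivStringLemmaAppendix}, the difference being that here each logical symbol is encoded by a \emph{pair} of consecutive positions, so the target set is $\{\ta,\tb\}^{2k}$ and $\gaptuple$ is the gap-constraint tuple from Section~\ref{sec:binaryCaseReduction} that alternates zero-gap constraints $C_j=(0,0)$ with separation constraints $C'_j=(3,9)$, its $\gaptuple$-subsequences having length $2k$. The inclusion $\subseqSet{\gaptuple}{T'} \subseteq \{\ta,\tb\}^{2k}$ is immediate, since $T'$ is a word over $\{\ta,\tb\}$ and every $\gaptuple$-subsequence has length $2k$. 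All the work lies in the reverse inclusion.

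For the reverse inclusion I would fix an arbitrary $p \in \{\ta,\tb\}^{2k}$ and build an explicit embedding $e$ satisfying $\gaptuple$. The key structural observation is that the length-$8$ block $\ta\ta\tb\tb\ta\tb^3$, of which $T'$ is the $k$-fold concatenation, already contains \emph{all four} length-$2$ binary factors inside its prefix $\ta\ta\tb\tb\ta$: namely $\ta\ta$, $\ta\tb$, $\tb\tb$, $\tb\ta$ occur starting at offsets $1,2,3,4$ respectively. Hence for each $j \in [k]$ I can choose the offset $a_j \in \{1,2,3,4\}$ at which the factor $p[2j-1]p[2j]$ occurs inside the $j$-th block, and map $2j-1$ and $2j$ to the two consecutive positions $8(j-1)+a_j$ and $8(j-1)+a_j+1$.

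It then remains to verify the gap constraints, which is the one place where the arithmetic must be pinned down, though it is routine. The constraints $C_j=(0,0)$ hold because each pair is sent to two consecutive positions, so its induced gap is empty. For $C'_j=(3,9)$ the induced gap between position $2j$ and position $2j+1$ has length $(8j+a_{j+1})-(8(j-1)+a_j+1)-1 = a_{j+1}-a_j+6$, and since $a_j,a_{j+1}\in\{1,2,3,4\}$ this quantity ranges over exactly $\{3,\dots,9\}$, so the separation constraint is met for every choice of $p$. This gives $p \subseq_e T'$ with $e$ satisfying $\gaptuple$, hence $\{\ta,\tb\}^{2k}\subseteq\subseqSet{\gaptuple}{T'}$ and the claimed equality.

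The only genuine obstacle is this last computation: one must check that the block length ($8$) and the bounds $(3,9)$ are matched so that the difference $a_{j+1}-a_j+6$ cannot escape $[3,9]$ for any admissible offsets. With block length $8$ and offsets confined to $\{1,2,3,4\}$ this is forced, so no case analysis on the individual symbols of $p$ is needed; the constants in the reduction of Section~\ref{sec:binaryCaseReduction} were evidently chosen precisely to make this automatic, which is what makes $T'$ universal.
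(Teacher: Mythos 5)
Your proof is correct and takes essentially the same route as the paper: embed each pair $p[2j-1]p[2j]$ as a factor of the $j$-th occurrence of $\ta\ta\tb\tb\ta$ and check that the induced gaps land in $[3,9]$. You additionally spell out the offset arithmetic ($a_{j+1}-a_j+6$ with $a_j\in\{1,2,3,4\}$) that the paper states only as an observation.
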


\begin{proof}
We first note that $\subseqSet{p}{T'} \subseteq \{\ta, \tb\}^{2k}$ obviously holds. In order to show the other direction, let $p \in \{\ta, \tb\}^{2k}$. We define an embedding $e$ as follows. For every $j \in [k]$, $e$ maps $2j-1$ and $2j$ to the factor $p[2j-1]p[2j]$ in the $j^{\text{th}}$ occurrence of factor $\ta \ta \tb \tb \ta$. We observe that $3 \leq \gap{T'}{e}{j} \leq 9$, which means that $e$ satisfies the length constraints, and therefore $p \in \subseqSet{p}{T'}$. 
\end{proof}

\begin{lemma}\label{mainCorrectnessLemmaBinaryAppendixEqui}
$\subseqSet{\gaptuple}{S} = \subseqSet{\gaptuple}{T'} \iff \cup_{i \in [q]} \lang{W_{i}} = L_{\Sigma \Sigma}$.
\end{lemma}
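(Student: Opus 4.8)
The plan is to obtain this equivalence as an immediate corollary of two facts already in hand, mirroring exactly the argument used for Lemma~\ref{mainCorrectnessLemmaEquiAppendix} in the $\metaNuniProb$ setting. The first fact is that $T'$ is \emph{universal} for $\gaptuple$, i.e.\ $\subseqSet{\gaptuple}{T'} = \{\ta,\tb\}^{2k}$, which is precisely Lemma~\ref{UnivStringLemmaBinaryAppendix}. The second is the correctness of the non-universality reduction, Lemma~\ref{mainCorrectnessLemmaBinaryAppendix}, which states $\subseqSet{\gaptuple}{S} = \{\ta,\tb\}^{2k} \iff \cup_{i \in [q]} \lang{W_{i}} = L_{\Sigma \Sigma}$. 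Since both $S$ and $T'$ are strings over the same alphabet $\{\ta,\tb\}$ and the same tuple $\gaptuple = (C_1, C'_1, \ldots, C_k)$ governs the embeddings on both sides, these two facts chain together directly.

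Concretely, I would present the proof as a three-line chain of equivalences. Starting from the condition $\subseqSet{\gaptuple}{S} = \subseqSet{\gaptuple}{T'}$, I substitute $\subseqSet{\gaptuple}{T'} = \{\ta,\tb\}^{2k}$ using Lemma~\ref{UnivStringLemmaBinaryAppendix}, which reduces the condition to $\subseqSet{\gaptuple}{S} = \{\ta,\tb\}^{2k}$; then I apply Lemma~\ref{mainCorrectnessLemmaBinaryAppendix} to rewrite this last equality as $\cup_{i \in [q]} \lang{W_{i}} = L_{\Sigma \Sigma}$. Displayed as an \emph{align*} block with the invoked lemma annotated over each $\iff$ (exactly in the style of the proof of Lemma~\ref{mainCorrectnessLemmaEquiAppendix}), the argument is complete.

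I do not expect a genuine obstacle here, since all of the combinatorial work has already been carried out: first in proving the universality of $T'$ (Lemma~\ref{UnivStringLemmaBinaryAppendix}), and second in the long correctness argument for $S$ (Lemmas~\ref{SingleWiLemmaBinaryCaseAppendix}, \ref{sharpUnivStringLemmaBinaryCaseAppendix}, and \ref{limitedToSJLemmaBinaryCaseAppendix}, culminating in Lemma~\ref{mainCorrectnessLemmaBinaryAppendix}). The only points worth verifying are bookkeeping: that $T'$ is defined with respect to the \emph{same} gap-constraint tuple $\gaptuple$ used for $S$, and that the length-$2k$ subsequences considered in both lemmas live in the same universe $\{\ta,\tb\}^{2k}$, so that the two set-equalities $\subseqSet{\gaptuple}{S} = \{\ta,\tb\}^{2k}$ and $\subseqSet{\gaptuple}{T'} = \{\ta,\tb\}^{2k}$ can be combined by transitivity. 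Once this is confirmed, the equivalence is immediate, and it is this lemma that furnishes the $\npclass$-hardness of $\nequiProb_{\Sigma}$ for $|\Sigma| = 2$; the corresponding result for $\ncontProb_{\Sigma}$ then follows by the standard observation that the universal instance $T'$ reduces equivalence to a one-directional containment test.
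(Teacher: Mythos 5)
Your proposal is correct and is essentially identical to the paper's own proof, which is exactly the two-step chain of equivalences you describe: substitute $\subseqSet{\gaptuple}{T'} = \{\ta,\tb\}^{2k}$ via Lemma~\ref{UnivStringLemmaBinaryAppendix}, then apply Lemma~\ref{mainCorrectnessLemmaBinaryAppendix}, displayed as an \emph{align*} block with the lemmas annotated over the $\iff$ symbols. No further commentary is needed.
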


\begin{proof}
\begin{align*}
&\subseqSet{\gaptuple}{S} = \subseqSet{\gaptuple}{T'}& &\overset{\text{Lem.~\ref{UnivStringLemmaBinaryAppendix}}}{\iff}&\\
&\subseqSet{\gaptuple}{S} =  \{\ta, \tb\}^{2k}& &\overset{\text{Lem.~\ref{mainCorrectnessLemmaBinaryAppendix}}}{\iff}&\\
&\cup_{i \in [q]} \lang{W_{i}} = L_{\Sigma \Sigma}\,.& &&\qedhere
\end{align*}
\end{proof}

\begin{theorem}\label{binaryCaseHardnessTheoremAppendixEqui}
For every fixed alphabet $\Sigma$ with $|\Sigma| = 2$, $\nequiProb_{\Sigma}$ and $\ncontProb_{\Sigma}$ with length constraints is $\npclass$-complete, even if each length constraint is $(0, 0)$ or $(3, 9)$. 
\end{theorem}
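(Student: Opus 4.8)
The plan is to reuse, essentially verbatim, the direct $\SatProb$-to-$\nuniProb_\Sigma$ reduction of Section~\ref{sec:binaryCaseReduction} together with the universal gadget $T'$ and the equivalence-correctness statement already assembled above. First I would settle membership in $\npclass$ for both problems: each admits a verifier that guesses a candidate string $p \in \{\ta,\tb\}^{2k}$ and, using the polynomial-time matching algorithm of Corollary~\ref{constantPatternsRL}(1), checks the relevant relations. For $\ncontProb_\Sigma$ the witness $p$ certifies $p \subseq_\gaptuple w$ and $p \nsubseq_\gaptuple w'$; for $\nequiProb_\Sigma$ the witness $p$ lies in exactly one of the two subsequence sets. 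Since all length constraints are $(0,0)$ or $(3,9)$ and $|\Sigma|=2$, the verifier stays within the claimed restricted instance class.

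For hardness I would start from a CNF formula $F = \{c_1,\ldots,c_q\}$ over variables $\{v_1,\ldots,v_k\}$ and run the reduction of Section~\ref{sec:binaryCaseReduction} to obtain the sets $W_{i,j}$, the string $S$, the tuple $\gaptuple$ of $(0,0)$- and $(3,9)$-constraints, and the universal string $T' = ((\ta\ta\tb\tb\ta)\tb^3)^k$. The two key identities are already available: Lemma~\ref{UnivStringLemmaBinaryAppendix} gives $\subseqSet{\gaptuple}{T'} = \{\ta,\tb\}^{2k}$, and Lemma~\ref{mainCorrectnessLemmaBinaryAppendixEqui} gives $\subseqSet{\gaptuple}{S} = \subseqSet{\gaptuple}{T'}$ if and only if $\cup_{i\in[q]} \lang{W_i} = L_{\Sigma \Sigma}$. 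Combining this with the correctness of the Section~\ref{sec:binaryCaseReduction} reduction (where $F$ is satisfiable iff $\cup_{i\in[q]} \lang{W_i} \neq L_{\Sigma \Sigma}$), the right-hand side holds precisely when $F$ is unsatisfiable.

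For $\nequiProb_\Sigma$ I would therefore output the instance $(S, T')$: by the chain above, $\subseqSet{\gaptuple}{S} \neq \subseqSet{\gaptuple}{T'}$ holds if and only if $F$ is satisfiable, which is the desired reduction. For $\ncontProb_\Sigma$ I would exploit universality of $T'$: since $\subseqSet{\gaptuple}{T'} = \{\ta,\tb\}^{2k}$ contains every possible subsequence set, the inclusion $\subseqSet{\gaptuple}{S} \subseteq \subseqSet{\gaptuple}{T'}$ is automatic, so $\subseqSet{\gaptuple}{T'} \subseteq \subseqSet{\gaptuple}{S}$ is equivalent to $\subseqSet{\gaptuple}{T'} = \subseqSet{\gaptuple}{S}$. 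Outputting the instance $(T', S)$ then yields $\subseqSet{\gaptuple}{T'} \not\subseteq \subseqSet{\gaptuple}{S}$ if and only if $F$ is satisfiable. Both constructions are computable in time polynomial in $k$ and $q$, since $|S|$ and $|T'|$ are polynomial in those parameters, and the alphabet and constraint shape are inherited from Section~\ref{sec:binaryCaseReduction}.

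The steps are largely bookkeeping, as all the combinatorial content already lives in the preceding lemmas; the one point that needs care — and the main (mild) obstacle — is the containment direction, where I must observe that universality of $T'$ collapses containment to equivalence, rather than attempting to engineer a second, genuinely one-sided gadget. Once that observation is in place, a single family of reductions handles both $\nequiProb_\Sigma$ and $\ncontProb_\Sigma$, and the restricted constraint shape $(0,0)$/$(3,9)$ over the binary alphabet carries over directly.
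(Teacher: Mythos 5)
Your proposal is correct and follows essentially the same route as the paper: the paper's proof of this theorem is a one-line reference to the extension of the Section~\ref{sec:binaryCaseReduction} reduction via the universal string $T' = ((\ta\ta\tb\tb\ta)\tb^3)^k$ and Lemmas~\ref{UnivStringLemmaBinaryAppendix} and~\ref{mainCorrectnessLemmaBinaryAppendixEqui}, which is exactly the chain you assemble. Your explicit observation that universality of $T'$ collapses the containment question for the instance $(T', S)$ to the equivalence question is the same device the paper uses (implicitly here, and explicitly in the analogous $|\Sigma|\geq 3$ case), so nothing is missing.
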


\begin{proof}
This follows directly from the extension of the reduction from Sections~\ref{sec:binaryCaseReduction} described above. 
\end{proof}

\section{Special Variants}\label{sec:Special}

We next consider two natural variants of our setting that have a substantial impact on the complexity of the problems investigated above. \par

\textbf{Gap Length Equalities}: We investigate whether the polynomiality of the matching problem (see Section~\ref{sec:matching}) is preserved under adding \emph{gap length equalities} to the gap constraints, i.\,e., constraints of the form $|\gapName_i|=|\gapName_j|$ which are satisfied by an embedding $e$ with respect to $w$ if $|\gap{w}{e}{i}|=|\gap{w}{e}{j}|$. Our main motivation is that such length equality constraints (and more complex ones, e.\,g., described by linear inequalities like $2|\gapName_7| + |\gapName_3| \leq |\gapName_2|$) are of interest in the theory of string solving \cite{amadini2021survey}. Unfortunately, the matching problem becomes immediately NP-hard (the following result can be shown by adapting the NP-completeness proof for matching patterns with variables from~\cite{Angluin80}).

\begin{theorem}\label{constantPatternsLengthExtended}
$\matchProb$ with length constraints and gap length equalities is $\npclass$-complete, even for binary alphabets and length constraints $(0,+\infty)$.
\end{theorem}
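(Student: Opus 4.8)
The plan is to first dispatch membership in $\npclass$ and then obtain hardness by adapting Angluin's reduction. For membership, a witness is simply an embedding $e:[|p|]\to[|w|]$, given as the list of $|p|$ positions; one checks in polynomial time that $w[e(i)]=p[i]$ for all $i$, that the (here trivial) length constraints $(0,+\infty)$ hold, and that every imposed gap length equality $|\gap{w}{e}{i}|=|\gap{w}{e}{j}|$ is satisfied. Hence all the difficulty lies in the hardness direction.

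For hardness I would reduce from the membership problem for Angluin-style pattern languages \cite{Angluin80}, which is $\npclass$-complete: given a pattern $\pi$ over terminals $\Delta$ and variables $x_1,\dots,x_n$, and a word $\omega$, decide whether some substitution of the variables by words turns $\pi$ into $\omega$. The plan is to build a gapped sequence $(p,\gaptuple)$ and a word $w$ over $\{\ta,\tb\}$ such that $p\subseq_{\gaptuple} w$ iff $\omega\in L(\pi)$. The terminals of $\pi$, read left to right, become the fixed symbols of $p$ and serve as the anchors of any embedding, while each maximal variable block of $\pi$ becomes a gap; the initial and final variable blocks are accommodated by the $\$$-padding device of Appendix~\ref{prefixSuffix}. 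The defining constraint of pattern matching — that all occurrences of a variable $x_\ell$ receive the \emph{same} value — is encoded by chaining the corresponding gaps with gap length equalities $|\gapName_i|=|\gapName_j|$, with all ordinary length constraints left as $(0,+\infty)$.

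The main obstacle is that a gap length equality forces two gaps only to have equal \emph{length}, whereas variable equality demands equal \emph{content}, and over a binary alphabet a gap of length $L$ can carry any of $2^L$ factors. I would resolve this with a unary block encoding of $\omega$: each terminal $c\in\Delta$ is represented by a block $\ta^{\,r_c}$ of a distinct length, and consecutive blocks are separated by the marker symbol $\tb$. Within a single inter-marker block the content is unary and is therefore determined by its length, so equality of the \emph{lengths} of two corresponding blocks forces equality of the encoded letters, and hence of the encoded factors. Each variable occurrence is accordingly subdivided into one gap per block, and the $t$-th block of one occurrence of $x_\ell$ is tied by a gap length equality to the $t$-th block of every other occurrence of $x_\ell$. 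The step I expect to be hardest is making this robust to the a priori unknown substitution lengths: the marker symbols of $p$ must be forced to align with the markers of $w$ so that each variable gap is subdivided into whole blocks in exactly one way. I would obtain this rigidity by the same technique as in the reduction behind Theorem~\ref{lowerBoundLength}, inserting enough marker symbols that the only embeddings respecting the anchors are the intended block-aligned ones.

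Finally I would assemble the correctness proof in both directions. From a substitution $\sigma$ with $\sigma(\pi)=\omega$, I build the block-aligned embedding of $p$ into $w$ whose gaps read the encodings of the values $\sigma(x_\ell)$; it satisfies every equality since equal values produce equal block-length sequences. Conversely, any embedding satisfying $\gaptuple$ is block-aligned by the rigidity argument, its gaps read encodings of words that the equalities force to coincide for equal variables, and reading these off yields a valid substitution, so $p\notsubseq$ fails exactly when $\omega\notin L(\pi)$. Since all length constraints remain $(0,+\infty)$, $w$ uses only the two symbols $\ta,\tb$, and $\matchProb$ for plain length constraints is already polynomial by Corollary~\ref{constantPatternsRL}, this pins the entire complexity jump on the gap length equalities and establishes the claimed $\npclass$-completeness.
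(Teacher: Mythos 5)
Your membership argument is fine, but the hardness direction has a genuine gap, and it sits exactly where you locate the difficulty without resolving it. Your reduction needs one gap of $p$ per \emph{letter} of the value of each variable occurrence, with the $t$-th such gap of one occurrence of $x_\ell$ tied by an equality to the $t$-th gap of every other occurrence. But the number of letters in $\sigma(x_\ell)$ is determined by the substitution you are searching for, whereas $(p,\gaptuple)$ must be fixed in advance. If you allocate the maximal number $|\omega|$ of gap slots per occurrence and let unused gaps be empty, then the marker symbols of $p$ that separate consecutive gap slots (a gap of a gapped sequence must lie between two symbols of $p$, so these markers cannot be avoided) all have to be embedded somewhere: when a value is short, several consecutive markers of $p$ must map to consecutive positions of $w$, and the natural block encoding of $\omega$ (one $\tb$ between letter blocks) has no room for them. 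Padding $w$ with extra runs of $\tb$ to absorb them destroys the very rigidity you rely on, because a gap could then consume part of a $\tb$-run instead of a whole $\ta$-block and ``equal length'' would no longer imply ``equal content''. Note that you cannot retreat to the regimes where the length problem disappears: pattern membership with single-letter substitutions, or over a unary terminal alphabet, is solvable in polynomial time, so the NP-hard instances are precisely those with variable-length, non-unary values --- the case your encoding does not handle. (Also, the paper's remark about ``adapting the proof from Angluin'' means adapting her \emph{technique}, which is itself a reduction from 3-SAT, not reducing \emph{from} the pattern membership problem.)

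The paper's actual proof avoids the content-versus-length obstacle entirely by reducing from $3$-$\SatProb$ and using only gap \emph{lengths} as carriers of information: over $w\in\{0,1\}^*$, a gap labelled $x_i$ embedded into a block $1^3$ together with its complementary gap takes length $1$ or $2$, encoding the truth value of $V_i$; embedding the pair $\stackrel{x_i}{\leftrightarrow}1\stackrel{y_i}{\leftrightarrow}$ into $1^4$ forces $\ell_{x_i}+\ell_{y_i}=3$ (negation), and embedding the three literal gaps of a clause plus a slack gap of length at most $3$ into $1^{10}$ forces at least one literal gap to have length $2$ (clause satisfaction). Since every gap lives inside a run of $1$'s, its content is determined by its length for free, and all ordinary length constraints stay $(0,+\infty)$. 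If you want to rescue your route, you would need either an a priori bound on the value lengths built into the instance or a gadget that lets a block-gap ``skip'' a letter position without consuming a marker; as written, the reduction does not go through.
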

\begin{proof}

We first define the investigated problem variant more formally. The \emph{matching problem with length constraints and gap length equalities} is defined as follows: Given a gapped sequence $(p, \gaptuple)$ with length constraints, a word $w$, and a finite set of equations ${\mathcal S}$ of the form $|\gapName_i|=|\gapName_j|$ with $i, j \in [|\gaptuple|-1]$, decide whether there exists an embedding $e$ satisfying $\gaptuple$ such that $p \subseq_e w$ and $|\gap{w}{e}{i}|=|\gap{w}{e}{j}|$ for each equation $|\gapName_i|=|\gapName_j|$ of ${\mathcal S}$. \par

	We can now continue with the proof of the theorem. The containment in NP is immediate. We only show the lower bound.
	
	To simplify the exposition, just as in the case of Theorem \ref{lowerBoundLength}, when representing the gapped sequence $(p, \gaptuple)$ with $p=p[1]\cdots p[m]$ and a finite set of equations ${\mathcal S}$ of the form $|\gapName_i|=|\gapName_j|$, we will use the following notations. Firstly, $(p,\gaptuple) = p[1] \stackrel{\gaptuple[1], \ell_1}{\leftrightarrow} p[2] \stackrel{\gaptuple[2], \ell_2}{\leftrightarrow} \cdots \stackrel{\gaptuple[m-1], \ell_n}{\leftrightarrow} p[m]$, where $\ell_1, \ldots, \ell_n$ are symbols from a finite set $X$ of labels. We omit the symbol $\stackrel{\gaptuple[i]}{\leftrightarrow}$ from the notation of $(p,\gaptuple)$ if and only if $\gaptuple[i]=(0,0)$. In this proof, however, all length constraints are trivial, i.\,e., they have either the form $\gaptuple[i]=(0,+\infty)$; in this case, we simply write $\leftrightarrow$ for $\stackrel{\gaptuple[i]}{\leftrightarrow}$. In the context of this proof, and differently from the proof of Theorem \ref{lowerBoundLength}, the gaps will have labels. We use these labels to encode the equations in the set ${\mathcal S}$: the $i^{th}$ gap and $j^{th}$ gap have the same label if and only if $|\gapName_i|=|\gapName_j|$ is a gap length equality of $S$. When, $\stackrel{\gaptuple[i]}{\leftrightarrow}$ has label $x$ we write $\stackrel{\gaptuple[i],x}{\leftrightarrow}$; as such, in this proof we only need to keep track of the labels of the gaps, and denote a gap as $\stackrel{x}{\leftrightarrow}$, where $x$ is the label.

	We reduce 3-CNF-SAT to the extended matching problem.
	Let $\Phi = C_1 \wedge C_2 \wedge \ldots \wedge C_m$ be a Boolean formula in CNF
	with variables $V_1, V_2, \ldots, V_n$,
	where each clause is a conjunction of three literals,
	each of which is either a variable or the negation of a variable.
	Without loss of generality we choose two symbols $0, 1\in \Sigma$.
	Now, we construct from $\Phi$ a gapped sequence $(p,\gaptuple)$ with length constraints,
	a string $s$ and a set of equations $\mathcal{S}$ as follows.

	We first define a set of $4n+2m$ labels: $x_i, x'_i, y_i, y'_i, z_j, z'_j$, 
	where $1 \leq i \leq n$ and $1 \leq j \leq m$.
	We define for $j \in [m]$ and $k \in [3]$
	
	$f(j, k)$ = $\begin{cases}
		x_i \qquad \text{ if the $k$th literal in $C_j$ is $V_i$, } \\
		y_i \qquad \text{ if the $k$th literal in $C_j$ is $\overline{V_i}$.}
	\end{cases}$
	
	We first define the gapped sequence $(p',\gaptuple')$ as 
	\[(p',\gaptuple') = \left(\prod\limits_{i=1}^{n}0 \stackrel{x_i}{\leftrightarrow}1\stackrel{x'_i}{\leftrightarrow}\right) \left(\prod\limits_{i=1}^{n}0 \stackrel{y_i}{\leftrightarrow}1\stackrel{y'_i}{\leftrightarrow}\right)
	\left(\prod\limits_{j=1}^{m}0 \stackrel{z_j}{\leftrightarrow}1\stackrel{z'_j}{\leftrightarrow}\right)
	\]
	
	Now, we define the gapped sequence $(p'',\gaptuple'')$ as 
	\[(p'',\gaptuple'') = 0 p_1 0 p_2 \cdots 0 p_n 0 q_1 0 q_2 \cdots 0 q_m 0\]
	with $p_i = \stackrel{x_i}{\leftrightarrow} 1\stackrel{y_i}{\leftrightarrow}$ and
	$q_j = \stackrel{f(j, 1)}{\leftrightarrow} 1 \stackrel{f(j, 2)}{\leftrightarrow} 1 \stackrel{f(j, 3)}{\leftrightarrow} 1 \stackrel{z_j}{\leftrightarrow}$. 
	for $1 \leq i \leq n$ and $1 \leq j \leq m$.
	
	Let $(p,\gaptuple)=(p',\gaptuple')(p'',\gaptuple'')$. 
	
	The system ${\mathcal S}$ is induced by the labels of the gaps: two gaps have the same label if and only if they have the same length.

	We now define the string $w'$ as \[w' = \left(\prod\limits_{i=1}^{n}0 d_i\right) \left(\prod\limits_{i=1}^{n}0 e_i \right)\left(\prod\limits_{j=1}^{m}0 f_j\right),
	\]
	where $d_i=e_i=111=1^3$,
	and $f_j=1111=1^4$, 
	for $1 \leq i \leq n$ and $1 \leq j \leq m$.

	Finally, the string $w''$ is constructed as
	\[w'' = 0 s_1 0 s_2 \cdots 0 s_n 0 t_1 0 t_2 \cdots 0 t_m 0\]
	with $s_i = 1111=1^4$ and
	$t_j = 1111111111=1^{10}$, 
	for $1 \leq i \leq n$ and $1 \leq j \leq m$.

Let $w=w'w''$.
	
	It is easy to see that the construction of gapped sequence $(p,\gaptuple)$, $w$, and equation system ${\mathcal S}$ 
	can be done in polynomial time.
	We now show that there exists an embedding $e$ satisfying $\gaptuple$ 
	such that $p \subseq_e w$
	and $\len{\gap{w}{e}{i}} = \len{\gap{w}{e}{j}}$
	for each equation $\len{\gapName_i} = \len{\gapName_j}$ of $\mathcal{S}$ (or in other words, if the respective gaps have the same label) if and only if the formula $\Phi$ is satisfiable.
	
	Firstly, assume that $\Phi$ is satisfiable.
	Let $\alpha \colon \{V_1, V_2, \ldots, V_n\} \rightarrow \{0, 1\}$
	be an assignment satisfying $\Phi$.
	
	We define the following mapping which defines an embedding of $p$ into $w$. 
	
	First of all, the $i^{th}$ symbol $0$ of $p$ will be mapped to the $i^{th}$ symbol $0$ of the word $w$. 
	
We then explain how we map the symbols of $p'$ found between occurrences of $0$. The factor $\stackrel{x_i}{\leftrightarrow}1 \stackrel{x'_i}{\leftrightarrow}$ is mapped to the factor $d_i=1^3$ of $w$, and the  factor $\stackrel{y_i}{\leftrightarrow}1 \stackrel{y'_i}{\leftrightarrow}$ is mapped to the factor $e_i$ of $w$. If $\alpha(V_i) = 1$, then let the length of the gap with the label $x_i$ in the embedding we construct be $2$ and the length of the gap with label $y_i$ be $1$. Clearly, in this case, the gap labelled $x'_i$ will have length $0$, and the gap with label $y'_i$ will have length $1$.
If $\alpha(V_i) = 0$, then let the length of the gap with the label $x_i$ be exactly $1$ and the length of the gap with label $y_i$ be exactly $2$. The gap labelled $x'_i$ will then have length $1$, and the gap with label $y'_i$ will have length $0$. The way we assign the length of the gaps labelled with $z_j$ will follow from the explanations below. 

Now, we see how to map the symbols of $p''$ found between $0$-symbols. 

For $1 \leq i \leq n$, each gapped sequence $p_i$ is mapped in the string $s_i=1^4$ of $w$. From the way the symbols of $p'$ are mapped to symbols of $w$, the mapping of $p_i$ to $s_i$ is already determined: the total length of the gaps labelled with $x_i$ and $y_i$ is $3$, which of these gaps has length $2$ and which has length $1$ is determined by the value of $\alpha(V_i)$, and the exact mapping of the single $1$-symbol of $p_i$ to a $1$-symbol of $w$ is, as such, also determined by $\alpha(V_i)$. 
	
Further, we see how the gapped sequences $q_j$ are embedded in $w$. These gapped sequences are embedded in the factors $t_j=1^{10}$ of $w$. So let us consider some  $j \in [m]$, and let $C_j = (t_1 \vee t_2 \vee t_3)$ be the corresponding clause with the three literals $t_1, t_2, t_3$. Since $\alpha$ satisfies $\Phi$, at least one of the three literals $t_1, t_2, t_3$ is assigned the value $1$. Therefore, at least one of the gaps with labels $f(j, 1), f(j, 2), f(j, 3)$ should have been assigned the length $2$ in the previous steps  (where the length of each of the gaps with labels $x_i$ or $y_i$ was decided). This means that, altogether,
the length of the gaps with labels $f(j, 1), f(j, 2), f(j, 3)$ from $q_j$, together with the $1$-symbols occurring after each of these gaps (so three $1$ symbols in total), is either $7, 8,$ or $9$. To complement, the length of the gap with the label $z_j$ is either $3$, $2$, or $1$, respectively. The embedding maps the $1$-symbols occurring in $q_j$ according to the length of the gaps, which once more is essentially determined by the values of the boolean-variables occurring in $\Phi$. Moreover, the length we assign here to the gap with the label $z_j$ induces the length of the gaps labelled with $z'_j$ (and the position of $w$ to which the $1$ symbol from $\stackrel{z_j}{\leftrightarrow}1 \stackrel{z'_j}{\leftrightarrow}$ is mapped in $f_j$). 
	
So, if $\Phi$ is satisfiable, then an embedding of $p$ in $w$ which satisfies both the constraints $\gaptuple$ and the system ${\mathcal S}$ (as induced by the labels of the gaps) is possible.

Now suppose that we have an embedding $e$ such that $p \subseq_e w$ and $e$ satisfies both the gap constraints $\gaptuple$ and the equations of ${\mathcal S}$ (induced by the labels in the definition of $(p,\gaptuple)$). We want to show that there exists an assignment of the variables $\{V_1,\ldots,V_n\}$ which satisfies $\Phi$. 

It is not hard to note that, due to the fact that $w$ and $p$ have exactly the same number $0$ symbols, the embedding $e$ maps the $i^{th}$ $0$-symbol of $p$ to the $i^{th}$ $0$-symbol of the word $w$. 

Therefore, for all $i\in [n]$, $\stackrel{x_i}{\leftrightarrow}1 \stackrel{x'_i}{\leftrightarrow}$ must be mapped to $d_i=1^3$. This means that the gaps labelled with $x_i$ may have length $0, 1$ or $2$.  Similarly, $\stackrel{y_i}{\leftrightarrow}1 \stackrel{y'_i}{\leftrightarrow}$ must be mapped to $e_i=1^3$, so the gaps labelled with $y_i$ may have length $0, 1$ or $2$, for $i\in [n]$. Finally, $\stackrel{z_i}{\leftrightarrow}1 \stackrel{z'_i}{\leftrightarrow}$ must be mapped to $f_i=1^4$. This means that the gaps labelled with $z_i$ may have length $0, 1, 2$ or $3$.  

Also, we have that $p_i$ must be embedded in $s_i$, for $i\in [n]$, and $q_j$ must be embedded in $t_j$, for $j\in [m]$. 

From the fact that $p_i$ is embedded in $s_i$, for $i\in [n]$, we essentially get an assignment $\ell_{x_i}$ for the length of the gaps with label $x_i$ and an assignment $\ell_{y_i}$ for the length of the gaps with label $y_i$, such that $\ell_{x_i}+\ell_{y_i}=3$. We have already seen that $\ell_{x_i}\leq 2$ and $\ell_{y_i}\leq 2$. So, $\ell_{x_i}\in \{1,2\}$ and $\ell_{y_i}=3-\ell_{x_i}$. If $\ell_{x_i}=2$ then we set $V_i=1$ and, otherwise, we set $V_i=0$. 

Further, we have that $q_j$ is embedded in $t_j$, and we already have an assignment for the lengths $\ell_{x_i}$ and $\ell_{y_i}$, for all $i\in [n]$. In order for $q_j$ to be embedded in $t_j=1^{10}$, as the length $\ell_{z_j}$ of the gap with label $z_j$ is at most $3$, we obtain that the gapped sequence $\stackrel{(1,2),f(j, 1)}{\leftrightarrow} 1 \stackrel{(1,2),f(j, 2)}{\leftrightarrow} 1 \stackrel{(1,2),f(j, 3)}{\leftrightarrow} 1 $ is mapped to a string $1^\ell$ with $\ell\geq 7$. This means that at least one of the gaps  $\stackrel{(1,2),f(j, 1)}{\leftrightarrow}, \stackrel{(1,2),f(j, 2)}{\leftrightarrow},$ or $\stackrel{(1,2),f(j, 3)}{\leftrightarrow} $ is mapped to a string of length $2$. As these gaps have labels from the set $\{x_i,y_i\mid i\in [n]\}$, corresponding to the variables occurring in $C_j$, it follows easily that at least one of the literals that appear in $C_j$ should be set to $1$. As such, the disjunction $C_j$ of $\Phi$ evaluates to 1.

Since the previous remark holds for all $j$, it follows that the assignment of the variables induced by the choice of the lengths $\ell_{x_i}$ and $\ell_{y_i}$, for all $i\in [n]$, satisfies $\Phi$. 

In conclusion, there exists an embedding $e$ satisfying $\gaptuple$ such that $p \subseq_e w$ and $\len{\gap{w}{e}{i}} = \len{\gap{w}{e}{j}}$ for each equation $\len{\gapName_i} = \len{\gapName_j}$ of $\mathcal{S}$ (as induced by the labels of the gaps) if and only if the formula $\Phi$ is satisfiable. This shows that our reduction is correct, and, thus, the statement follows. 
\end{proof}

\textbf{Gap Constrained Subsequences With Multiplicities:} With respect to the equivalence problem, we can show a positive result for the following modified setting. Let us consider the $\gaptuple$-subsequences of the sets $\subseqSet{\gaptuple}{w}$ with multiplicities. For example, for $w_1 = \ta \tb \tb \ta$ and $w_2 = \ta \tb \ta \tb$, we have $\subseqSet{\gaptuple_2}{w_1} = \subseqSet{\gaptuple_2}{w_2} = \{\ta \ta, \ta \tb, \tb \ta, \tb \tb\}$ with $\gaptuple_2 = (\Sigma^*)$. There is exactly one way of embedding $\ta \ta$ and $\tb \tb$ into both $w_1$ and $w_2$. On the other hand, $\ta \tb$ can be embedded into $w_1$ in two different ways and into $w_2$ in three different ways. More precisely, the sets of $\gaptuple_2$-subsequences of $w_1$ and $w_2$ with multiplicities are $\{(\ta \ta, 1), (\ta \tb, 2), (\tb \ta, 2), (\tb \tb), 1\}$ and $\{(\ta \ta, 1), (\ta \tb, 3), (\tb \ta, 1), (\tb \tb), 1\}$, respectively.\par 
Let us now formalise this setting. For strings $u$ and $v$, and a $(|u|-1)$-tuple $\gaptuple$ of gap constraints, we denote by $\distinctgcsubseq{u}{v}{\gaptuple}$ the number of distinct embeddings $e : |u| \to |v|$ that satisfy $\gaptuple$ and $v \subseq_e u$. For example, $\distinctgcsubseq{\tb \tb \ta \ta}{\tb \ta}{\gaptuple_2} = 4$, as $u[1]u[3] = u[2]u[3] = u[1]u[4] = u[2]u[4] = \tb \ta$. For any $(k-1)$-tuple $\gaptuple$ of gap constraints, we define the function $\parikhK{\cdot}{\gaptuple} \colon \Sigma^* \to \mathbb{N}^{(\Sigma^k)}$ by $\parikhK{w}{\gaptuple}[p] = \distinctgcsubseq{w}{p}{\gaptuple}$ for every $p \in \Sigma^k$. 

The \emph{equivalence problem with multiplicities} is to decide, for a given $(k-1)$-tuple $\gaptuple$ of gap constraints, and strings $w, w' \in \Sigma^*$, whether $\parikhK{w}{\gaptuple} = \parikhK{w'}{\gaptuple}$. Note that for the case $\gaptuple=(\Sigma^*, \ldots, \Sigma^*)$ this is called the $k$-binomial equivalence, and was studied in the area of combinatorics on words (see, e.\,g., \cite{RigoS15,Rigo19,LeroyRS17a,Freydenberger2015}). \par
We show that equivalence with multiplicities can be decided in polynomial time (in contrast to the $\npclass$-completeness of the case without multiplicities).

\begin{theorem}\label{thm:binom-length-constr}
If $\gaptuple=(C_1,\ldots,C_{k-1})$ and $C_i$ can be decided in polynomial time then the equivalence problem with multiplicities can be solved in polynomial time.\looseness=-1
\end{theorem}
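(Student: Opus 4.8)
The plan is to reformulate $\parikhK{w}{\gaptuple}$ as a single, exponentially long nonnegative integer vector and then to decide equality of two such vectors \emph{without ever materialising them}, by propagating only their pairwise inner products (a ``kernel trick''). For $x \in \{w,w'\}$, a level $\ell \in [k]$ and a position $i \in [|x|]$, let $c^{(\ell)}_{x,i}$ be the vector, indexed by words $p \in \Sigma^\ell$, whose $p$-entry counts the embeddings $e \colon [\ell] \to [i]$ with $e(\ell)=i$, $\subsequence{x}{e}=p$, and $x[e(t)+1..e(t+1)-1] \in C_t$ for all $t \in [\ell-1]$. Writing $\mathsf{sh}_a$ for the ``append-$a$'' map sending the unit vector of $p$ to that of $pa$, the standard embedding recurrence is: $c^{(1)}_{x,i}$ is the unit vector of the length-one word $x[i]$, and for $\ell \ge 2$,
\[
c^{(\ell)}_{x,i} = \mathsf{sh}_{x[i]}\Bigl(\; \sum_{\substack{j<i\\ x[j+1..i-1]\in C_{\ell-1}}} c^{(\ell-1)}_{x,j} \Bigr).
\]
Since $\parikhK{x}{\gaptuple}[p] = \distinctgcsubseq{x}{p}{\gaptuple} = \sum_{i} c^{(k)}_{x,i}[p]$ for every $p \in \Sigma^k$, we have $\parikhK{w}{\gaptuple} = \parikhK{w'}{\gaptuple}$ if and only if $\sum_i c^{(k)}_{w,i} = \sum_{i'} c^{(k)}_{w',i'}$.

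The obstacle is that each $c^{(\ell)}_{x,i}$ lives in dimension $|\Sigma|^\ell$, so we cannot store them. The key observation is that we only need inner products of sums of these vectors, and that these inner products satisfy a closed recurrence of polynomial size. For $x,x' \in \{w,w'\}$, $i \le |x|$ and $i' \le |x'|$, define the integer $G^{(\ell)}[x,i,x',i'] = \langle c^{(\ell)}_{x,i},\, c^{(\ell)}_{x',i'}\rangle$. Because appending a fixed symbol is an isometry and $\langle \mathsf{sh}_a u,\, \mathsf{sh}_b v\rangle = [a=b]\,\langle u,v\rangle$, the recurrence above transfers to $G^{(1)}[x,i,x',i'] = [x[i]=x'[i']]$ and, for $\ell \ge 2$,
\[
G^{(\ell)}[x,i,x',i'] = [x[i]=x'[i']] \!\!\sum_{\substack{j<i,\ x[j+1..i-1]\in C_{\ell-1}\\ j'<i',\ x'[j'+1..i'-1]\in C_{\ell-1}}}\!\! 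G^{(\ell-1)}[x,j,x',j'].
\]
I would fill this table by dynamic programming over $\ell = 1,\dots,k$. There are only $O((|w|+|w'|)^2)$ entries per level; every membership test $x[a..b]\in C_{\ell-1}$ runs in polynomial time by hypothesis (and may be precomputed for all pairs of positions), so the whole table is computed in polynomial time.

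Finally, $\sum_i c^{(k)}_{w,i} = \sum_{i'} c^{(k)}_{w',i'}$ holds exactly when the squared norm of their difference vanishes, i.e.\ when
\[
\sum_{i,j} G^{(k)}[w,i,w,j] \;-\; 2\sum_{i,i'} G^{(k)}[w,i,w',i'] \;+\; \sum_{i',j'} G^{(k)}[w',i',w',j'] \;=\; 0,
\]
which we read directly off the table; over the integers a vector is zero iff its squared norm is. This decides the equivalence problem with multiplicities in polynomial time.

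The crux of the argument is precisely circumventing the exponential dimension of the $c^{(\ell)}_{x,i}$, which the inner-product recurrence achieves; the two points I would verify most carefully are the isometry identity (which relies on distinct words indexing distinct, orthonormal basis vectors, so that order information is preserved, unlike a commutative evaluation) and the bound on the bit-length of the entries. For the latter, I note that each $G^{(\ell)}[\cdot]$ is a nonnegative integer bounded by $(|w|+|w'|)^{O(k)}$ and hence has $O(k\log(|w|+|w'|))$ bits, so all arithmetic stays polynomial. I do not expect the correctness proof to require more than an induction on $\ell$ showing that $c^{(\ell)}_{x,i}$ indeed counts the claimed embeddings, together with the linearity of the inner product.
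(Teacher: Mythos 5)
Your proof is correct, but it takes a genuinely different route from the paper's. The paper builds, for each input word $x$, a nondeterministic automaton $A_{x,\gaptuple}$ on states $(i,j)$ ($i$ a position of $x$, $j$ a position of the subsequence) whose transitions encode the gap constraints, so that the number of accepting paths labelled $p$ equals $\parikhK{x}{\gaptuple}[p]$; it then invokes Tzeng's polynomial-time path-equivalence test for automata as a black box, for an overall bound of $O(N^2 k P(N) + N^4k^4)$. You instead work directly with the (exponential-dimensional) embedding-count vectors $c^{(\ell)}_{x,i}$ and observe that only their pairwise inner products are needed; since appending a fixed symbol satisfies $\langle \mathsf{sh}_a u, \mathsf{sh}_b v\rangle = [a=b]\langle u,v\rangle$, the Gram entries obey a closed polynomial-size recurrence, and equality of the two Parikh-type vectors reduces to the vanishing of a squared norm assembled from the level-$k$ Gram table. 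The two arguments are close in spirit --- your level-$\ell$ recurrence is essentially the transition structure of the paper's automaton, and your Gram-table propagation is a hands-on substitute for Tzeng's algorithm specialised to these layered, acyclic automata --- but yours is self-contained (no external equivalence-testing result), makes the bit-length analysis of the counts explicit (each entry is at most $N^{O(k)}$, hence $O(k\log N)$ bits), and yields a comparable polynomial running time. What the paper's route buys in exchange is brevity and reuse of the automaton construction; both are valid polynomial-time proofs, and the points you flag for careful verification (injectivity/orthogonality of the word-indexed basis, so that order information is not lost, and the magnitude of the integers) are exactly the right ones and both check out.
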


\begin{proof}
We adapt the main idea from \cite{Freydenberger2015} and implement the following approach. 

We first define an algorithm constructing, for a word $w$ with $\len{w} = n$ and a $(k-1)$-tuple $\gaptuple$ of gap constraints, a non-deterministic finite automaton $A_{w,\gaptuple}$ that accepts exactly the gapped subsequences $p \in \subseqSet{\gaptuple}{w}$. Moreover, this automaton has exactly $\Psi_{\gaptuple}(w)[p]$ accepting paths labelled with the subsequence $p$ of $w$.

Then, we will use this algorithm for the input words of the equivalence with multiplicities problem and obtain two automata $A_{w,\gaptuple}$ and $A_{w',\gaptuple}$. 

Finally, we use the algorithm of \cite{Tzeng1992} to test whether $A_{w,\gaptuple}$ and $A_{w',\gaptuple}$ are path equivalent, i.\,e., for each word $p$, the number of accepting paths of $A_{w,\gaptuple}$ labelled with $p$ equals the number of accepting paths of $A_{w',\gaptuple}$ labelled with $p$. If this algorithm returns a positive answer, then we can conclude that $\parikhK{w}{\gaptuple} = \parikhK{w'}{\gaptuple}$. Otherwise, we conclude that $\parikhK{w}{\gaptuple} \neq \parikhK{w'}{\gaptuple}$.

It is not hard to see that this approach is correct, as soon as we explain further how to implement the first step. Thus, we will now describe the construction of the automaton $A_{w,\gaptuple}$, and then compute the overall complexity of our algorithm. 

Some notations first. Assume $\gaptuple = (C_1,\ldots, C_{k-1})$, where, for every $i \in [k-1]$, the constraint $C_i$ is given as a (black box) procedure $\checkP(i,u)$, which checks $u\in C_i$ in time $O(P(|u|))$ for some polynomial $P$. 

The set of states is defined by
\begin{align*}
	Q_w = \{(0,0)\} \cup \{(i,j) \mid 1 \leq i \leq n, 1 \leq j \leq k \} \cup \{(n+1, k+1)\}
\end{align*}

The state $(0,0)$ is the initial state, and $(n+1, k+1)$ represents an error state. The states $(i, j)$ with $j = k$ and $i \geq j$ are final.
As an intuition, $i$ represents the position in the word $w$, and $j$ represents the position in the subsequence $p$. 

The transitions of the automaton will be defined to reflect the gap constraints:

\begin{align*}
	\delta_w((i,j), a) =
	\begin{cases}
		\{(i', j + 1) \in Q_w \mid i' > i, w[i'] = a, \\
			\myquad[8] w[i+1..i'-1]\in C_j \} & \text{if this set is non-empty,}\\
		\{(n+1, k+1)\} & \text{otherwise.}\\
	\end{cases}
\end{align*}

In both cases the automata are accepting a string $p \in \Sigma^k$
if and only if there is an embedding $e$ satisfying the constraints
with $p \subseq_e{w}$.
The automaton $A_w$ is accepting the words $w[i_1], w[i_2], \ldots, w[i_{k'}]$
with $i_1 < i_2 < \ldots < i_{k}$
while satisfying the constraints
with $w[i_j + 1 .. i_{j+1} - 1] \in C_j$ for $1 \leq j \leq k-1$.
For this, we start in the initial state and follow the path of states
$$(0, 0), (i_1, 1), (i_2, 2), \ldots, (i_{k}, k)$$
with $ 1 \leq i_1 < i_2 < \ldots < i_{k}$ and $i_{k} \geq |w|$,
so the state $(i_{k'}, k')$ is indeed an accepting one.

Additionally,
for a word that is accepted by the automaton $A_w$ through the path
$$(0, 0), (i_1, 1), (i_2, 2), \ldots, (i_{k}, k) \text{,}$$
we get by definition that $i_j < i_{j+1}$ for all $1 \leq j \leq k - 1$,
and that $0 < i_1$.
Each transition ending in state $(i_j, j)$ is labelled with $w[i_j]$ and we have that $w[i_j + 1 .. i_{j+1} - 1] \in C_j$. 

Combined,
we immediately get that the automaton accepts exactly the gapped subsequences of length $k$
satisfying the given constraints,
and the number of distinct paths labelled with with the particular gapped subsequence
equals the number of its occurrences in $w$.

This concludes the description of the automaton $A_{w,\gaptuple}$. To completely prove the statement, we need to evaluate the overall complexity of the algorithm. For that, some more implementation details are needed. 

Let $N$ be the length of the longest of the input strings $w$ and $w'$. 

For each $j\in [k-1]$ and $u\in \{w,w'\}$, we can first identify (and store in a three dimensional array $M_u$) in $O(N^2 P(N))$ time all the factors of $u$ which are part of $C_j$ (that is, $M_u [a][b][j]=1$ if and only if $w[a..b]\in C_j$). Then, the construction of the automaton can be done in $O(N^2k )$ time (as we can also assume $|\Sigma|\leq N$). 

Now, to decide if the automata $A_{w,\gaptuple}$ and $A_{w',\gaptuple}$ are path equivalent, we use the algorithm described in  \cite{Tzeng1992}. The runtime of this part of our algorithm is $O((Nk)^4)$.

Overall, our algorithm runs in $O(N^2 k P(N) + N^4k^4)$. 
\end{proof}

In the case of length, regular or reg-len constraints, the equivalence problem with multiplicities can be solved in $O(\max\{|w|, |w'|\}^4 k^4 + \size(\gaptuple))$ time. The \emph{containment problem with multiplicities} (i.\,e., deciding $\parikhK{w}{\gaptuple}[p] \leq \parikhK{w'}{\gaptuple}[p]$ for all $p \in \Sigma^k$) seems to be more difficult. To our knowledge, whether the case of classical subsequences (i.\,e., length constraints $(0, \infty)$) can be solved in polynomial time is open. On the other hand, for the case of length constraints $(0, 0)$ only (i.\,e., consecutive factors), or of length constraints $(\ell, \ell)$ only (i.\,e., partial words), showing polynomial time solvability is relatively simple.

\end{document}